\documentclass[english,10pt, a4paper]{article}
\usepackage{fullpage}
\usepackage[T1]{fontenc}
\usepackage{babel}
\usepackage{amsmath, amsfonts, amsthm, amssymb}
\usepackage{tikz}
\usepackage{relsize}
\usepackage{graphicx, subfig}

\newcommand{\cS}{\ensuremath{\mathcal{S}}}

\newcommand{\cU}{\ensuremath{\mathcal{U}}}

\newcommand{\R}{\ensuremath{\mathbb{R}}}
\newcommand{\N}{\ensuremath{\mathbb{N}}}

\newcommand{\conv}{*}
\newcommand{\deconv}{\oslash}

\newcommand{\Rmin}{\R_{+, \min}}

\newtheorem{lemma}{Lemma}
\newtheorem{example}{Example}
\newtheorem{theorem}{Theorem}
\newtheorem{defi}{Definition}

\title{Sub-additive service curves in the Network Calculus analysis}
\author{Anne Bouillard\\ENS, France\\{\tt anne.bouillard@ens.fr}}
\date{}
\begin{document}
	\maketitle
	\begin{abstract}
		Network Calculus is a theoretical model that aims at providing upper bounds of worst-case performance (such as delay or buffer occupancy). This is a mathematical framework that handles both network modeling and network analysis. As such it has requirements regarding the space of functions needed for a safe analysis. Namely, the functions  need to be non-negative, as they model a quantity of data. This results in some pitfall for the analysis, where hypothesis matter. 
		
		A recent paper by Hamscher et al. states that allowing functions with negative values can also lead to a valid analysis, in cases that would be untractable with the non-negative assumption results, especially when feedback control is present in the system. 
		
		In this paper, we  show that, on the contrary, a more conventional analysis is possible in all the mentioned cases. The key is a detailed analysis of sub-additive functions.  Second, we show that the analysis of complex feedback control systems, presented by Hamscher et al. in a second paper that uses functions with negative values,  is unsound and has stability issues. We give a corrected analysis, when possible, with conventional hypotheses. 
		
	\end{abstract}
	
	\section{Introduction}
	
	Providing safe, accurate and deterministic performance (buffer occupancy or end-to-end delays) guarantees in network is a strong requirements in modern networks. 
	Adding control at the admission or in parts of the system is key to reach these requirements. Example of applications are Network on Chips~\cite{BGDM20, IS25}, that are based on a window-flow-control protocol~\cite{Ilyas1987}.
	
	Analyzing the worst-case performance of a system requires  compatible analysis tools, different from the more classical stochastic analysis used for queuing systems. Network Calculus is a theory that has been proved adapted in the past decades. It originates from the 90s~\cite{Cruz91, Cruz91a}, and has proved its applicability for the analysis of various stystems: switched networks~\cite{Cruz1995}, ATM networks~\cite{MR2006}, AFDX (Avionics Full Duplex) networks~\cite{BNOT2010} and TSN (Time-sensitive networks)~\cite{BD19}. 
	
	Network Calculus is a theory , introduced by Cruz~\cite{Cruz91}, based on the (min,plus) semi-ring that is adapted to the computation of worst-case performance upper bounds. It is both a model for networks and a collection of analytical tools. 
	
	\paragraph{Network Calculus as a model.} Regarding the model, the main features are the flow or arrival model ({\em arrival curves}), that bounds the quantity of data during each interval of time, and a server model, that represented how data is served, both as the service speed ({\em service curve}) and the service order ({\em scheduling policy}). This model can be extended to network representation, a collection of flows and servers. 
	In addition, window-flow control can be modeled by this framework. In such a system, the entry point admits into the network only a fixed maximum of data that have not been acknowledged yet. It then requires some synchronization and the {\em minimum} operator plays the role of synchronization for the control~\cite{ACOR99}. The analysis has then been extended to the case of one flow is controlled by a series on windows, to ensure no buffer overflow~\cite{BPC09}. 
	
	\paragraph{Network Calculus as an analytical tool.}Regarding the analysis, several techniques have been developed, starting from the often-called {\em fundamental theorem} (Theorem~\ref{th:fund}), that allows computing performance bounds and emphasizes modularity. But multi-flow raises the need for a stronger model of service curve: the {\em strict service curve} in opposition to the former {\em (min,plus) service curve}. That extention has made possible network analysis with various techniques  such as the {Pay-Multiplexing-Only-Once} (PMOO)~\cite{SZM08, BGLT2007} or linear-programming~\cite{BJT10}. In addition, several scheduling policies have been precisely analyzed under this framework: FIFO~\cite{BS12, LMMS2006}, Fixed priorities~\cite{BJT09}, Processor-sharing~\cite{BL18, Bou21} and Round-Robin~\cite{TLB20, Bou23, BSS12}. 
	
	\paragraph{The problem with service curves}
	Nevertheless, this need for a second type of service curve might become an issue, and the risk is that some type of networks may not have a feasible analysis. This is  the issue raised in~\cite{HCS24}, where the authors introduce another extension of the model, by allowing functions with negative values (when the {\em conventional} assumption was to work on non-negative functions only). At first sight, this solves the issue
	(strict service curves are not required anymore)
	at the cost of a) requiring {\em minimal arrival curves} (the arrival must have a minimum arrival rate); b) making the worst-case performance computations slightly more complex and less accurate. 
	
	However, it still remains unclear that the use-cases they advertise, namely sequence of computation/communication systems, and a simple feed-back control,  are useful, since we can present a {\em conventional analysis}. More importantly, Hamscher et al. extend this work to  complex feedback structures. We claim in this paper that this approach is incorrect: as it may raise stability issues. Also, where the approach can be corrected, conventional  techniques can still be used, jeopardizing the need for functions with negative values as service curves.  The key to our analysis is the detailed analysis of sub-additive service curves, and our main result is that non-negative service curves can be computed in a multiple-flow analysis. 

	\paragraph{Contributions:}
	In this report, we analyze the use-cases of~\cite{HCS24} with a conventional analysis and correct~\cite{HWS25} where possible at low cost, complex analysis be beyond the scope of this paper. 
	The key contribution is the study of sub-additive functions as a service curve: we will show that this shape of service curve that has receives very little attention as a service curve has properties similar to strict service curves: non-negative per-flow service curves can be derived. More precisely our contributions are:  

	\begin{itemize}
		\item We exhibit a class, the  sub-additive functions, of service curves where a non-negative per-flow service curve can be computed, even in the case of a (min,plus)-service curve (Section~\ref{sec:sub-add});
		\item We present two cases of application: a slightly generalized PMOO formula in the case of (min,plus) service curves and communication delays (Section~\ref{sec:pmoo}), that requires some comments on the modeling of bounded delay servers ; feedback control (Section ~\ref{sec:wfc}), covering the two cases where ~\cite{HCS24} required service curves with negative values. %We also show that, the approach can also be extended to bounded delay servers, with its usual interpretation. In total, this slightly generalizes the results of \cite{BGLT2007}; 
		\item We demonstrate that the analysis of system with complex feedback loops presented in~\cite{HWS25} is incorrect, and can led to instabilities (Section~\ref{sec:cplx}).  We partially correct the analysis, in simpler cases, using the sub-additive functions. This contribution  is inspired by  the foundation paper~\cite{ACOR99} introducing the concept of {\em throttle service curve}. 
	\end{itemize}
	
	\section{Network Calculus framework}
	In this section, we recall the basic framework of Network Calculus, and its fundamental results. More detailed references about  Network Calculus can be found in \cite{BBL18, Chang2000, LT2001}. 
	
	We denote by $\N = \{0, 1, \ldots\}$ the set of natural integers and for $n\geq 1$, $\N_n = \{1,2,  \ldots, n\}$, and $\R_+$ is the set of non-negative numbers. 
	\subsection{The network model}
	A network is composed of $n$ network elements, also called servers, and $m$ flows of data circulate in this network.
	Throughout the paper, we assume a tandem network: servers are numbered from 1 to $n$ from left to right. 
	 Each flow $i$ of data is associated path $\pi_i$ that us the sequence of servers it crosses, $\pi_i = \langle f_i, f_{i+1}\ldots \ell_i \rangle$ : flow $i$ crosses servers from $f_i$ to $\ell_i$, $f_i\leq \ell_i$.  We denote by $A_i^{(j)}:\R_+\to \R_+$ the {\em cumulative arrival process} of flow $i$ at server $j\in\{f_i,\ldots ,  \ell_i\}$, where for all $t\geq 0$, $A_i^{(j)}(t)$ is the amount of data that arrives at server $j$ between up to time $t$ (more precisely, during the time interval $[0, t)$. Similarly, $D_i^{(j)}$ is the {\em cumulative departure  process} of flow $i$ from server $j$, and $D_i^{(j)}(t)$ is the amount of data of flow $i$ that left server $j$ up to time $t$. Figure~\ref{fig:net} represents one flow in the system with the notations. 
	 
	 \begin{figure}[htbp]
	 	\centering
	 	\begin{tikzpicture}
	 		\node[rectangle, draw, minimum width=0.5cm](s1) at (0, 0) {$1$};
	 		\node[rectangle, draw, minimum width=0.5cm](s2) at (1.5,  0) {$f_i$};
	 		\node[rectangle, draw, minimum width=0.5cm](s3) at (3,  0) {$j$};
	 		\node[rectangle, draw, minimum width=0.5cm](s4) at (4.5, 0) {$j+1$};
	 		\node[rectangle, draw, minimum width=0.5cm](s5) at (6, 0) {$\ell_i$};
	 		\node[rectangle, draw, minimum width=0.5cm](s6) at (7.5, 0) {$n$};
	 		
	 		\draw[black, thick, dotted] (s1) -- (s2) -- (s3)-- (s4)-- (s5)-- (s6);
	 		\draw[black, thick]  (s3)-- (s4);
	 		\draw[red, ultra thick, ->] (1, 0.5)  -- (1, -0.2) node[pos=0, left=-0.4cm] {$A_i^{(f_i)}$}-- (6.7, -0.2)node[pos=0.5, below=] {$A_i^{(j+1)} = D_i^{(j)}$} -- (6.7, 0.5) node[pos=1, left=-0.4cm] {$D_i^{(\ell_i)}$};
	 	\end{tikzpicture}
	 	\caption{Example of a tandem network with some arrival and departure processes of flow $i$. The departure process at server $j$ is the arrival process at server $j+1$: $D_i^{(j)} = A_i^{(j+1)}$. }
	 	\label{fig:net}
	 \end{figure}
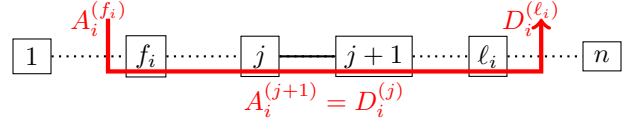 
	  An arrival process $A_i^{(j)}$ is by construction non-negative, non-decreasing, left-continuous, and $A_i^{(j)}(0) = 0$. We also assume causality of all servers: $A_{i}^{(j)}\geq D_i^{(j)}$, for all $i\in\N_k$ and all $j\in\{f_i,\ldots, \ell_i\}$, and $D_i^{(j)}(t)$ depends only on $\big(A_{i}^{(j)}(s)\big)_{s\leq t}$. We also assume that there is no loss or creation of data in the system. 
	
	%In the following, we will denote $i\in j$ or $j\ni i$ if flow $i$ traverses server $j$.
	
	Network calculus is a formal model that characterizes the arrival processes (by the mean of {\em arrival curves}) and the relations between the arrival and the departure processes (by the mean of {\em service curves}), which is the focus of the next paragraph. 
		
	\subsection{Network Calculus model: arrival  and service curves}

%	\begin{figure}[htbp]
%		\centering
%		\begin{tikzpicture}[server/.style={shape=rectangle,draw,minimum height=.8cm,inner xsep=4ex}]
%			\node[server,name=S1] at (0,0) {$\cS$};
%			\draw[->, blue] (-1.2,-0) node[left] {$A$} -- (1.5,0) node[right] {$D$};
%		\end{tikzpicture}
%		\caption{Server model.}
%		\label{fig:server}
%	\end{figure}
%	

	Arrival and service curve are the are the basic elements of the Network Calculus model. They provide  {\em deterministic}, or worst-case characterizations of processes and server elements.  The model is based on the (min,plus) semi-ring, where $+\infty$ is the zero element. We will denote $\R_{+, \min} = \R_+\cup\{+\infty\}$. Unless stated otherwise and throughout the paper, we will consider that all functions are non-decreasing, non-negative and left-continuous. 
	
	\paragraph{Arrival curve:} Let $A$ be a cumulative arrival process (the cumulative process is generic, so we do not put any index or exponent here), and $\alpha:\R_+\to\Rmin$ be a function. We say that $A$ is $\alpha$-constrained  for all $0\leq s\leq t$,  $A(t)-A(s)\leq \alpha(t-s)$. In other words, $\alpha(d)$ is an upper bound on the quantity of data arriving during any interval of duration $d$. 
	
	A classical model for the arrival curve is the token-bucket (or leaky bucket) function: $\gamma_{r, b}:0\mapsto 0;~t>0\mapsto b+rt$, with $b\geq 0$ modeling the maximum amount of data that can arrive simultaneously (the {\em burst}), and $r$ being the {\em arrival rate}, or maximum long-term arrival rate. Without loss of generality, one can assume that an arrival curve is non-negative, non-decreasing, null at 0 and left-continuous. 
	
	\begin{figure}[htbp]
		\centering
		\begin{tikzpicture}
			\draw[->] (0,0) -- (3,0)node[below, pos = 0.9] {time};
			\draw[->] (0,0) -- (0,2) node[left, pos = 0.8] {\rotatebox{90}{data}};
			\draw[red] (0,0) -- (0, 0.25) -- (1, 0.55) -- (1.5,1.5) -- (3,2.5) node[below = -0.05cm, pos = 0.8] {$A$};
			\draw[blue] (0, 0.62) -- (3, 2.62) node[pos=0.7, above] {$\alpha$};
			\draw[blue, dashed] (1, 0.55) -- (1, 1.17) node [pos=0, black, below] {$x$} -- (2.5, 2.17);
			%\draw[blue] (0,0) -- (0.5,0) -- (0.8333, 0.667) -- (1.5, 1) --  (2,1.5) -- (3,1.63) node[below, pos = 0.5] {$D$};
			%\draw[<->, yellow!70!black, thick]   (0.5, 0) node[below]{$t$} -- (0.5, 0.75) node[left, pos = 0.5] {\rotatebox{90}{\tiny $b(t)$}};
			%\draw[green!70!black, <->, thick] (1.2,1.2) -- (1.7,1.2) node[above = -0.1cm, pos = 0.75] {\tiny $d(u)$};
			%\draw[green!70!black, dotted] (1.7,1.2) -- (1.7, 0) node[below = 0.05cm] {$u$};
		\end{tikzpicture}
		\caption{Example of arrival process $A$ and one arrival curve $\alpha$ for $A$. When $\alpha$ is drawn from any point $x$ of $A$, $A$ must remain below $s+\alpha$. }
		\label{fic:ac}
	\end{figure}
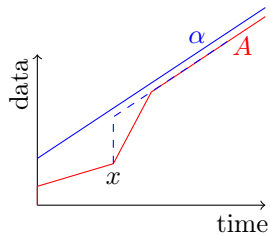
	
	We now model a server. 
	
	\paragraph{(min,plus) service curve:} consider a generic server (no exponent), crossed by the set of flows $I$. For each flow $i\in I$, we denote by $A_i$ and $D_i$, its cumulative arrival and departure processes. Let us define $A= \sum_{i\in I} A_i$ and $D = \sum_{i\in I} D_i$, respectively the aggregate cumulative arrival and departure processes of the server. Let $\beta :\R_+\to \Rmin$ be a non-decreasing and non-negative and left-continuous function such that $\beta(0)= 0$. We say that the server offers the (min,plus) service curve $\beta$ if $D\geq A \conv \beta$, where $\conv$ is the (min, plus)-convolution: $\forall t\geq 0$, $f\conv g(t) = \inf_{0\leq s\leq t} f(s)+f(t-s)$.
	
	\paragraph{Strict service curve:}
	Usually, the notion of (min,plus) service curves is too weak to analyze networks, and a stronger definition has been given: the {\em strict service curve}. We first  define a  backlogged period, as an interval of time $(s, t]$ during which  the server is not empty: $\forall u\in(s, t]$, $A(u)>D(u)$, and then can define the 
	{\em start of backlogged period} of $t$ as the last time before $t$ when the server is empty: $\forall t\geq 0$, $start(t) = \sup\{0\leq s\leq t~|~A(s) = D(s)\}$, with the convention that that when the server is empty at time $t$, $start(t) = t$. We say that the server offers a strict service curve $\beta$ to $A$ if for all $t\geq 0$, $\forall s\in[start(t), t]$, $D(t)-D(s) \geq \beta(t-d)$. Consequently, with $s = start(t)$, we have $$D(t)\geq A(start(t)) + \beta(t-start(t)).$$ 
	%Note that when the server is empty at time $t$, $start(t) = t$, and we obtained the desired property $D(t) = A(t)$. 
	Note that if $\beta$ is a strict service curve for a process, it is also a (min,plus) service curve.

	Classical models for the service curve are 
	\begin{itemize}
		\item The constant-rate service curve $\lambda_R:t\mapsto Rt$, where data is served at least at rate $R$. 
		\item The pure-delay service curve $\delta_d:t\geq d \mapsto0; t>d\mapsto +\infty$ is such that $D(t) \geq A ((t-d)_+)$, hence each bit of data leaves the server after at most a duration $d$. 
		\item The {\em rate-latency service curve}, that combines the two previous curves: $\beta_{R, T} (t):t\mapsto R(t-T)_+$, where $(x)_+ = \max (0, x)$. It can be interpreted as the date entering the server first having to wait a time $T$, before being served at rate $R$.  
	\end{itemize}
	
	Consider $\beta_{R, T}$ a service curve. Figure~\ref{fig:sc} illustrates the difference between a (min, plus) and a strict service curve. 
	
	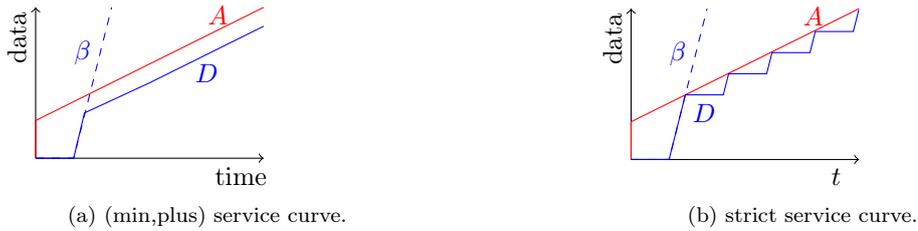
\begin{figure}[htbp]
		\centering
		\subfloat[(min,plus) service curve.]{
			\centering
			\begin{tikzpicture}
				\draw[->] (0,0) -- (3,0)node[below, pos = 0.9] {time};
				\draw[->] (0,0) -- (0,2) node[left, pos = 0.8] {\rotatebox{90}{data}};
				\draw[red] (0,0) -- (0, 0.5) -- (3,2) node[above = -0.05cm, pos = 0.8] {$A$};
				\draw[blue, dashed] (0,0) -- (0.5,0) -- (1, 2) node[pos=0.7, left] {$\beta$};
				\draw[blue] (0,0) -- (0.5,0) -- (0.642, 0.6) -- (1.5, 1) --  (3,1.75)  node[below, pos = 0.5] {$D$};
				\node at (5, 0) {};
		\end{tikzpicture}}
		\hspace{2cm}
		\subfloat[strict service curve.]{
			\centering
			\begin{tikzpicture}
				\draw[->] (0,0) -- (3,0)node[below, pos = 0.9] {$t$};
				\draw[->] (0,0) -- (0,2) node[left, pos = 0.8] {\rotatebox{90}{data}};
				\draw[red] (0,0) -- (0, 0.5) -- (3,2) node[above = -0.05cm, pos = 0.8] {$A$};
				\draw[blue, dashed] (0,0) -- (0.5,0) -- (1, 2) node[pos=0.7, left] {$\beta$};
				
				\draw[blue] (0,0) -- (0.5,0) -- (0.714, 0.857) -- (1.214, 0.857)node[below, pos = 0.5] {$D$} --  (1.284,  1.137) -- (1.784,1.137)  -- (1.854, 1.417) -- (2.354, 1.417)-- (2.424, 1.697) -- (2.924, 1.697) -- (2.994, 1.977) -- (3, 1.977);
				%\draw[green!70!black, <->, thick] (0,1) -- (2,1) node[below = -0.1cm, pos = 0.2] {$d_{\max}$};
				%\draw[yellow!70!black, <->, thick] (1,0) -- (1, 1.25) node[right = -0.1cm, pos = 0.5] {\rotatebox{90}{$b_{\max}$}};
				%\draw[red!70!black, <->, thick] (0,1.65) -- (2.7, 1.65) node[above = -0.1cm, pos = 0.5] {$\ell_{\max}$};
				\node at (5, 0) {};
		\end{tikzpicture}}
		\caption{Illustration of a service curve: (a) a (min, plus) service curve: since there is a non-zero latency, there are cases where there is only one backlogged period. (b) a strict service curve: there is a succession of backlogged periods, and the server is infinitely often empty. }
		\label{fig:sc}
	\end{figure}

	\subsection{Network Calculus analysis: performance and fundamental results}
	One advantage of Network Calculus is that worst-case performance upper bounds can be computed quite easily, only using the arrival and service curve characterization, and not the precise cumulative processes, that may not be entirely defined or known.

	\paragraph{Performance guarantees}
	In this paragraph, we consider a server crossed by a single flow. We denote $A$ its the cumulative arrival process and $D$ its the cumulative departure process . We also assume  that $A$ is $\alpha$-constrained and that the server offers a service curve $\beta$ (of any type).

	We define two types of performance guarantees: 
\begin{enumerate}
		\item[(a)] the maximum backlog, that is the maximum amount of data that can be in the system at some time;
		\item[(b)] the maximum delay, that is the maximum sojourn time of a bit of data. We assume that the system is FIFO per flow: for any given flow, data are served in their arrival order. 
	\end{enumerate}
	
%	(a) the maximum backlog, that is the maximum amount of data that can be in the system at some time; (b) the maximum delay, that is the maximum sojourn time of a bit of data. We assume that the system is FIFO per flow: for any given flow, data are served in their arrival order. 
	
	The {\em backlog  at time $t$}  is $b(t) = A(t)- D(t)$, so  the maximum backlog is $b_{\max} = \sup_{t\geq 0}A(t)- D(t)\leq \sup_{t\geq 0} \alpha(t) - \beta(t)$ : $b_{\max}$ is bounded by the maximum vertical distance between $\alpha$ and $\beta$. 
	
	The {\em delay} of data arriving  at time $t$ is $d(t) = \inf\{d~|~D(t+d) \geq  A(t)\}$, and the maximum delay is $d_{\max} = \sup_{t\geq 0}\inf\{d~|~D(t+d) \geq  A(t)\} \leq \sup_{t\geq 0}\inf\{d~|~\beta(t+d) \geq  \alpha(t)\}$: $d_{\max}$ is bounded by the maximum horizontal distance between $\alpha$ and $\beta$.
	
	This is illustrated in Figure~\ref{fig:perf}. 
	\begin{figure}[htbp]
		\centering
		\subfloat[\label{fig:proc}Processes]{
			\centering
			\begin{tikzpicture}
				\draw[->] (0,0) -- (3,0)node[below, pos = 0.9] {time};
				\draw[->] (0,0) -- (0,2) node[left, pos = 0.8] {\rotatebox{90}{data}};
				\draw[red] (0,0) -- (0, 0.5) -- (1, 1) -- (1.5,1.5) -- (3,1.7) node[above = -0.05cm, pos = 0.8] {$A$};
				\draw[blue] (0,0) -- (0.5,0) -- (0.8333, 0.667) -- (1.5, 1) --  (2,1.5) -- (3,1.63) node[below, pos = 0.5] {$D$};
				\draw[<->, yellow!70!black, thick]   (0.5, 0) node[below]{$t$} -- (0.5, 0.75) node[left, pos = 0.5] {\rotatebox{90}{\tiny $b(t)$}};
				\draw[green!70!black, <->, thick] (1.2,1.2) -- (1.7,1.2) node[above = -0.1cm, pos = 0.75] {\tiny $d(u)$};
				\draw[green!70!black, dotted] (1.7,1.2) -- (1.7, 0) node[below = 0.05cm] {$u$};
		\end{tikzpicture}}
		\hspace{2cm}
		\subfloat[Performances]{
			\centering
			\begin{tikzpicture}
				\draw[->] (0,0) -- (3,0)node[below, pos = 0.9] {$t$};
				\draw[->] (0,0) -- (0,2) node[left, pos = 0.8] {\rotatebox{90}{data}};
				\draw[red] (0,1) -- (3,1.7) node[below = -0.05cm, pos = 0.95] {$\alpha$};
				\draw[blue] (0,0) -- (1,0) -- (3,2) node[below, pos = 0.5] {$\beta$};
				\draw[green!70!black, <->, thick] (0,1) -- (2,1) node[below = -0.1cm, pos = 0.2] {$d_{\max}$};
				\draw[yellow!70!black, <->, thick] (1,0) -- (1, 1.25) node[right = -0.1cm, pos = 0.5] {\rotatebox{90}{$b_{\max}$}};
				%\draw[red!70!black, <->, thick] (0,1.65) -- (2.7, 1.65) node[above = -0.1cm, pos = 0.5] {$\ell_{\max}$};
		\end{tikzpicture}}
		\caption{Processes and worst-case performance.}
		\label{fig:perf}
	\end{figure}
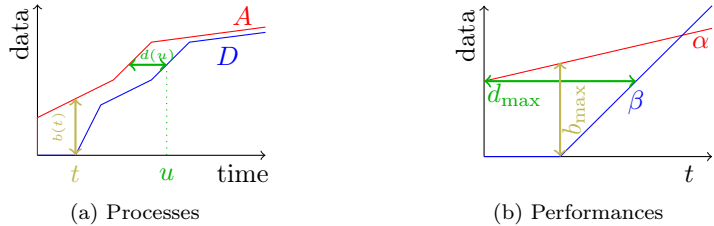
	
	 The last performance index is {\em stability}: that is, deciding if all delays are finite, and backlogs bounded. Roughly, this is not much an issue for tandem network, and it is well-established that the stability condition is that in each server, the arrival rate $r$ of the aggregate process is less than the service rate $R$ (so $r<R$). We assume that this hypothesis holds throughout the paper.

	\begin{example}
		Assume $\alpha = \gamma_{r, b}$ and $\beta = \beta_{R, T}$. If $r\leq R$ , then $b_{\max} = b+rT$ and $d_{\max} = T+\frac{b}{R}$, and $b_{\max} = d_{\max} = +\infty$ otherwise. 
	\end{example}
	
	\paragraph{Network analysis}
	Beyond these first results, the aim of Network Calculus is to compute performance in networks, for example, end-to-end delay upper bounds in complex networks, and maximum backlog in a multi-flow server. The three following results are the basic building blocks for network analysis. When the type of service curve is not specified, it means that any type ((min, plus) or strict) is possible.

	The first block is the propagation of the arrival constraints. In the case of multiple servers in sequences, we need to be able to compute performance not only at the first server, but also at the next server, hence we need to compute the arrival curve at the next server, that is the arrival curve of the departure process. 
	\begin{theorem}
		\label{th:fund}
		Consider a server offering a service curve $\beta$ to a single flow with arrival curve $\alpha$. Then the arrival curve of the departure process is $\alpha' = \alpha\deconv\beta$, where $\forall t\in\R_+$, $\alpha\deconv \beta(t) = \sup_{u\geq 0}\alpha(t+u) - \beta(u)$ is the (min,plus) deconvolution.
	\end{theorem}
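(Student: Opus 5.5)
We prove that for all $0\leq s\leq t$, $D(t)-D(s)\leq \alpha\deconv\beta(t-s)$. The upper side is easy: causality gives $D(t)\leq A(t)$. The lower side at the earlier time $s$ comes from the service curve. Since any strict service curve is also a (min,plus) service curve, it suffices to assume only $D\geq A\conv\beta$, which covers both types.

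\textbf{Lower bound at time $s$.} By the definition of the (min,plus) convolution, $D(s)\geq \inf_{0\leq u\leq s} A(s-u)+\beta(u)$. We take $u^*$ achieving the infimum, or, if the infimum is not attained, $u_\varepsilon$ achieving it up to $\varepsilon>0$. This gives $D(s)\geq A(s-u_\varepsilon)+\beta(u_\varepsilon)-\varepsilon$.

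\textbf{Combining the two bounds.} Using $D(t)\leq A(t)$, we get
\[
D(t)-D(s)\leq A(t)-A(s-u_\varepsilon)-\beta(u_\varepsilon)+\varepsilon .
\]
The interval $[s-u_\varepsilon,t]$ has length $t-s+u_\varepsilon$, so the arrival-curve constraint gives $A(t)-A(s-u_\varepsilon)\leq \alpha(t-s+u_\varepsilon)$. Hence
\[
D(t)-D(s)\leq \alpha(t-s+u_\varepsilon)-\beta(u_\varepsilon)+\varepsilon\leq \sup_{u\geq 0}\bigl(\alpha(t-s+u)-\beta(u)\bigr)+\varepsilon = \alpha\deconv\beta(t-s)+\varepsilon .
\]
Letting $\varepsilon\to 0$ gives the claim.

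\textbf{Technical points.} The only delicate part is the infimum in the convolution. It need not be attained, and $\beta$ may take the value $+\infty$ (for instance $\delta_d$). If $\beta(u)=+\infty$, that $u$ is irrelevant to the infimum. Moreover, the infimum is always finite, because it is at most $A(s)+\beta(0)=A(s)$. So the $\varepsilon$-approximation argument is legitimate. Finally, $\alpha\deconv\beta$ is non-decreasing, being a supremum of non-decreasing functions of $t$, so it is a valid arrival curve. It is non-negative, since taking $u=0$ gives $\alpha\deconv\beta(t)\geq\alpha(t)\geq 0$. It is not necessarily null at $0$, which is allowed by the definition of an arrival curve.
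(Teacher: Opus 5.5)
Your proof is correct. It is the standard argument for this result: causality bounds $D(t)$ above by $A(t)$, a (near-)minimizer of $A\conv\beta(s)$ bounds $D(s)$ below, and the arrival-curve constraint is applied on the interval $[s-u,t]$. The paper does not prove this theorem; it recalls it as a classical result from the standard Network Calculus references, and your argument is the usual one from those texts.

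Your $\varepsilon$-approximation is a harmless generalization. Under the paper's left-continuity assumptions, the infimum in $A\conv\beta(s)$ is attained (the fact invoked just before Lemma~\ref{lem:u}), so you could take $\varepsilon=0$ directly.
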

	
	Together with  the delay and backlog bounds, Theorem~\ref{th:fund} is often called the {\em fundamental theorem}, and is used in nearly all modular network analysis. 
	
	\begin{example}
		If $\alpha = \gamma_{r, b}$ and $\beta = \beta_{R,T}$, then the departure process is $\alpha\deconv\beta = \gamma_{r, b+rT}$ if $r\leq R$. The arrival rate remains the same as in the arrival curve, but the burst parameter increases. This is an indication of loss of precision in the model. 
	\end{example}
	
	The second block can be seen as a simplification of this step: when a single flow crosses a sequence of servers with respective service curves $\beta_1,\ldots,\beta_n$, it would be more efficient to analyze the end-to-end service directly. We  indeed can compute the {\em end-to-end} service curve.
	\begin{theorem}
		Consider a sequence of $n$ servers offering respective service curves $\beta_1,\ldots,\beta_n$, and a single flow crossing them (one and only once). Then this system guarantees the (min,plus) end-to-end service curve $\beta_{e2e} = \beta_1\conv\cdots\conv \beta_n$ to the flow. 
	\end{theorem}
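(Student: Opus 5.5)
The plan is an induction on the number $n$ of servers, built on the case $n=2$ and on the associativity of the (min,plus) convolution. Since a strict service curve is also a (min,plus) service curve, I will assume throughout that each server $j$ only guarantees $D^{(j)} \geq A^{(j)} \conv \beta_j$. With a single flow crossing the servers in order, we have $A^{(j+1)} = D^{(j)}$ for $j<n$.

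For two servers, write $A = A^{(1)}$, $D^{(1)} = A^{(2)}$ and $D = D^{(2)}$. The first step is to record that the convolution is monotone: if $f \geq f'$ pointwise, then $f\conv g \geq f'\conv g$. This holds because each term $f(s)+g(t-s)$ dominates $f'(s)+g(t-s)$, and taking the infimum over $s\in[0,t]$ preserves the inequality. Applying this with $f = D^{(1)}$ and $f' = A\conv\beta_1$ gives
$$D \geq D^{(1)}\conv\beta_2 \geq (A\conv\beta_1)\conv\beta_2 .$$

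The second step is associativity, $(A\conv\beta_1)\conv\beta_2 = A\conv(\beta_1\conv\beta_2)$. Unfolding both sides, each equals
$$\inf\{A(u)+\beta_1(v)+\beta_2(w) ~|~ u,v,w\geq 0,\ u+v+w=t\},$$
obtained by merging the two nested infima over $s\in[0,t]$ and $u\in[0,s]$ into a single infimum over this simplex. Some care is needed with the value $+\infty$ in $\Rmin$, but addition and infimum remain well defined there. Together with the first step, this shows that the pair of servers offers the (min,plus) service curve $\beta_1\conv\beta_2$.

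For general $n$, I will argue by induction. Assume servers $1,\ldots,n-1$ jointly offer $\beta_1\conv\cdots\conv\beta_{n-1}$ from $A^{(1)}$ to $D^{(n-1)}=A^{(n)}$. Treating this block as a single server, the two-server argument applied to the block followed by server $n$ gives the claim.

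It remains to check that $\beta_{e2e}$ is an admissible service curve in the paper's sense. It is non-negative as an infimum of non-negative terms. It satisfies $\beta_{e2e}(0)=\sum_j\beta_j(0)=0$. It is non-decreasing because each term of the convolution can be enlarged by increasing the last argument, which is non-decreasing.

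The main obstacle is left-continuity of the infimum. I would either prove it by a compactness argument on $[0,t]$, or note that the bound $D\geq A\conv\beta_{e2e}$ is what matters, and replace $\beta_{e2e}$ by its left-continuous regularization, which is smaller and hence still a valid service curve.
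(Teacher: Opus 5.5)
The paper gives no proof of this statement: it recalls the concatenation theorem in the background section as a standard Network Calculus result from the textbooks it cites. Your argument (monotonicity and associativity of $\conv$, the two-server case, then induction on $n$) is exactly the standard proof, and it is correct.

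There is one slip, in the side check that $\beta_{e2e}$ is non-decreasing. Enlarging the last argument only covers those decompositions of $t'>t$ whose last component is at least $t'-t$. For the remaining decompositions you must also shrink the other components down to a decomposition of $t$, which uses that every $\beta_j$, not just $\beta_n$, is non-decreasing.
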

	\begin{example}
		The concatenation of the two service curves $\beta_{R_1, T_1}$ and $\beta_{R_2, T_2}$ is $\beta_{R_1, T_1}\conv \beta_{R_2, T_2} = \beta_{\min(R_1, R_2), T_1+T_2}$.
	\end{example}
	 
	 The third and last block is when several flows cross a server. In order to compute per-flow performance, we need to compute the {\em per-flow} service curve guarantee for each flow. In the next result we do not assume any service policy that would give some information about the data service order (although assuming FIFO per flow). This is called {\em blind} or {\em arbitrary multiplexing}. Basically, the service curve obtained is the smallest among all possible scheduling.  The following theorem is given for two flows but can be generalized for an arbitrary number of flows by isolating the flow under interest and aggregating all the other flows.
	 
	 \begin{theorem}
	 	\label{th:res}
	 	Consider a server offering a strict service curve $\beta$, that is crossed by two flows 1 and 2, with respective arrival curves $\alpha_1$ and $\alpha_2$. Then the server guarantees the (min,plus) service curve $(\beta-\alpha_2)_+$ to flow 1. 
	 \end{theorem}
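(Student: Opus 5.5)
Fix $t\geq 0$ and let $s = start(t)$ be the start of the backlogged period of $t$ for the aggregate process $A = A_1+A_2$, $D = D_1+D_2$. The plan is to show that
\[
D_1(t) \geq A_1(s) + (\beta-\alpha_2)_+(t-s),
\]
which in turn gives $D_1(t)\geq \inf_{0\leq u\leq t} A_1(u) + (\beta-\alpha_2)_+(t-u) = A_1\conv(\beta-\alpha_2)_+(t)$. This is exactly the (min,plus) service curve property for flow 1.

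The first step is to establish that the server is empty at time $s$, that is $A(s)=D(s)$. The definition of $start(t)$ uses a supremum, so this requires left-continuity of $A$ and $D$: take $s_k\uparrow s$ with $A(s_k)=D(s_k)$ and pass to the limit. If the supremum is attained, or if $s = t$, there is nothing to prove. Since each flow satisfies causality, $A_i\geq D_i$, and the aggregate equality then forces $A_1(s)=D_1(s)$ and $A_2(s)=D_2(s)$ individually. This is the step I expect to need the most care (the limit argument and the $s = t$ convention). The remaining steps are algebra.

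Next, the strict service curve property applied on $[s,t]$ gives $D(t) - D(s)\geq \beta(t-s)$. Splitting by flow,
\[
D_1(t) - D_1(s) \geq \beta(t-s) - \big(D_2(t)-D_2(s)\big).
\]
By causality $D_2(t)\leq A_2(t)$, and since $D_2(s) = A_2(s)$ we get $D_2(t)-D_2(s)\leq A_2(t)-A_2(s)\leq \alpha_2(t-s)$. Substituting $D_1(s)=A_1(s)$ yields
\[
D_1(t)\geq A_1(s) + \beta(t-s)-\alpha_2(t-s).
\]

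Finally, $D_1$ is non-decreasing, so $D_1(t)\geq D_1(s) = A_1(s)$. Combining this with the previous bound gives $D_1(t)\geq A_1(s) + (\beta-\alpha_2)_+(t-s)$, as claimed. For the result to be a service curve in the paper's sense, we also need $(\beta-\alpha_2)_+$ to be non-decreasing. If it is not, one replaces it by its non-decreasing closure $\inf_{u\geq \cdot}(\beta-\alpha_2)_+(u)$. This closure is still a lower bound: if the bound above holds at $s$ for every $t'\geq t$ with $t'$ in the same backlogged period, monotonicity of $D_1$ transfers it to $t$. I would state this remark rather than prove it in detail, since it is not needed when the positive part of $\beta-\alpha_2$ is already non-decreasing.
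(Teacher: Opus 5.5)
Your proof is correct and is essentially the paper's own argument, namely the computation for strict service curves in its discussion of per-flow service curves: take $s=start(t)$, use $A_i(s)=D_i(s)\le D_i(t)$, apply the strict service property on $[s,t]$, bound the other flow's departures by $\alpha_2(t-s)$, and take the maximum with $A_1(s)$ to obtain the positive part; your left-continuity justification of $A(s)=D(s)$ supplies a detail the paper leaves implicit. Only your closing remark is off: the closure $\inf_{u\ge x}(\beta-\alpha_2)_+(u)$ is valid trivially because it lies below $(\beta-\alpha_2)_+$, whereas monotonicity of $D_1$ carries bounds from earlier times $u\in[s,t]$ up to $t$ (not from $t'\ge t$ down to $t$), and it is this that justifies the larger closure $\sup_{0\le v\le x}(\beta-\alpha_2)_+(v)$.
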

	 \begin{example}
	 	If $\beta = \beta_{R, T}$, and $\alpha_2 = \gamma_{r, b}$, then $\beta - \alpha_2 = \beta_{R', T'}$ with $R' = R-r$ and $T' = T+\frac{b+rt}{R-r}$. 
	 \end{example}
	 
	 With these three results, the analysis may become trivial: it case of a feed-forward network (no cycle dependencies) one can analyze the servers in the topological order (from sources to sinks):
	 
	 \begin{enumerate}
	 	\item for each server, where the arrival curves are all known, one can compute 
	 	\begin{enumerate}
	 		\item the residual service curve for each flow,
	 		\item the arrival curves of each departure processes to be able to analyze the next server.
	 	\end{enumerate}
	 	\item Finally,  we obtain a per-flow service curve for each flow, and the (min,plus)-convolution of these service curves per flow is the end-to-end service curve.
	 	\item Performance can be computed. 
	 \end{enumerate}
	  %for each server, for which the arrival curves of all arrival processes are known, one can compute the residual service curve, then the arrival curves of the departure processes to be able to analyze the next server. In the end,  we obtain an individual service curve for each flow, and the (min,plus)-convolution of these service curves per flow is the end-to-end service curve. 
	  But this process, unless used with care,  is usually incorrect, due to the hypotheses that require some type of service curve.

 	\section{The building of a robust Network Calculus theory}
 	The existence of several types of service curves is the main pitfall encountered by the non-expert in Network calculus, but it is also the spice of the theory. 
 	
 	\subsection{The lack of stability of the Network Calculus service curves}
 	\label{sec:ned}
 	\paragraph{Per-flow service curves}
 	From Theorem~\ref{th:res}, it is usually not possible to compute per-flpw service curves with (min,plus) service curves, especially when  blind multiplexing is at work: to see this more precisely,  consider a server offering a (min,plus) service curve $\beta$ ans crossed by two flows $A_1$ and $A_2$ with respective arrival curves $\alpha_1$ and $\alpha_2$. Then, 
 	\begin{align*}
 		(D_1+D_2)(t)&\geq (A_1+A_2)\conv \beta(t)\\
 		& = \min_{0\leq s\leq t}(A_1+A_2)(s)+ \beta(t-s)\\
 		\exists s\leq t,~(D_1+D_2)(t)&\geq (A_1+A_2)(s)+ \beta(t-s)\\
 		\exists s\leq t,~D_1(t)&\geq A_1(s)-[D_2(t)-A_2(s)]+\beta(t-s)\\
 		\exists s\leq t,~D_1(t)&\geq A_1(s)-[A_2(t)-A_2(s)]+\beta(t-s)\\
 		\exists s\leq t,~D_1(t)&\geq A_1(s)-\alpha_2(t-s)+\beta(t-s)\\
 		\exists s\leq t,~D_1(t)&\geq A_1(s)-(\beta-\alpha_2)(t-s)\\
 		D_1(t)&\geq A_1\conv (\beta-\alpha_2)(t). 
 	\end{align*}
 	
 	Since $\beta(0) = 0$, and usually $\alpha_2(0)>0$, the function $(\beta-\alpha_2)$ does not guarantee a non-negative service, hence finite delay. In addition multiple examples exist demonstrating that no service can be guaranteed for service curves with negative values. 
 	
 	\begin{example}
 		\label{ex:neg}
 		If $\alpha_2 = \gamma_{r, b}$ and $\beta = \beta_{R, T}$, then $\beta - \alpha_2$ is not a rate-latency service curve, but is  depicted in Figure~\ref{fig:resneg}. Looking more specifically at the processes, it may happen, as in Figure~\ref{fig:sc}(left) that the server is never empty for flow 2. So if flow 2 has higher priority than flow 1 so there is always always data from flow 2 to serve, and  flow 1 is never served: $D_1=0$. 
 		
 		\begin{figure}[htbp]
 			
 				\centering
 				\begin{tikzpicture}
 					\draw[->] (0,1) -- (3,1)node[below, pos = 0.9] {time} node[pos=0, left]{0};
 					\draw[->] (0,0) -- (0,2) node[left, pos = 0.8] {\rotatebox{90}{data}};
 					\draw[red] (0,1) -- (0, 1.1) -- (3,1.1) node[above = -0.05cm, pos = 0.8] {$A_1$};
 					\draw[blue, dashed] (0,1) -- (0, 0.5) -- (0.5,0) -- (3, 2) node[pos=0.3, left] {$\beta'$};
 					\draw[blue] (0,1) -- (3,1)  node[below, pos = 0.7] {$D_1$};
 					
 			\end{tikzpicture}

 			\caption[bla]{When the residual service curve can have negative values, there is no guarantee for the departure process. }
 			\label{fig:resneg}
 		\end{figure}
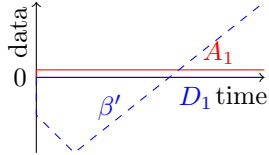
 		
 	\end{example}
 	
 		On the contrary, when the service is strict, we can set $s = start(t)$, use $A_i(s) = D_i(s)\leq D_i(t)$ and compute:
 		\begin{align*}
 			(D_1+D_2)(t)&\geq \max((A_1+A_2)(s), (A_1+A_2)(s)+ \beta(t-s))\\
 			D_2(t)&\geq \max(A_2(s), A_2(s)-[D_1(t)-A_1(s)]+\beta(t-s))\\
 			D_2(t)&\geq \max(A_2(s),A_2(s)-[A_1(t)-A_1(s)]+\beta(t-s))\\
 			D_2(t)&\geq \max(A_2(s),A_2(s)-\alpha(t-s)+\beta(t-s)\\
 			D_2(t)&\geq A_1(s)-(\beta-\alpha)_+(t-s)\\
 			D_2(t)&\geq A_1\conv (\beta-\alpha)_+(t). 
 		\end{align*}
 		The service curves computed is non-negative. However, a (min,plus) service curve is offered to flow 1, and not a strict service curve, that could be requested for a subsequent analysis. 
 		
 		Nevertheless, yet another difficulty arises, with the concatenation. 
 		
 	\paragraph{Lack Robustness of strict service curves regarding concatenation. }
 	Stability issues regarding per-flow service curves is a long-standing problem~\cite{BJT10b}, already studied. We can notice that the service curve we obtain is a (min,plus) service curve, for which the same type of computations cannot be done anymore. 
 	A similar problem arises when dealing with concatenation: the concatenation of two servers with respective service curves $\beta_1$ and $\beta_2$ is a system offering the (min, plus) service curve $\beta_1\conv\beta_2$. But the concatenation of two strict service curves results only in a (min,plus) service curve. Results for computing a residual service curve are then not valid anymore~\cite{Bou11}. 
 	
 	\subsection{Solutions to circumvent these issues}
 	
	\paragraph{More global analyses}
	Fortunately, efficient solutions have been found to perform correct analyses of networks. The first solution is to reverse the order in which the analysis is done: the classical analysis we have just discussed is aiming at finding arrival curves first and  manipulating them as much as possible and only them (cumulative processes are never manipulated as a mathematical object). 
	
	On  the contrary, more advanced methods use the cumulative processes in the analysis, and do the conversion to service and arrival curve in the last step. This is what is  done in the PMOO analysis ({\em Pay Multiplexing only once}). This phenomenon has been exhibited first in the case of two servers in concatenation crossed by two flows, by noticing that a bit of data of the low-priority flow can be overtaken only once by a bit of data of the higher priority flow. Hence it was be possible to improve the analysis~\cite{SZM08} by factorizing the computations into a {\em multidimensional (min,plus)-convolution}~\cite{BGLT2007}. However, this solution is also not perfect, since limited to tree networks~\cite{BNS22b} and not optimal~\cite{SZF08}. Using linear programming, it is possible to compute nearly exact worst-case delay bounds with blind multiplexing in general feed-forward networks~\cite{BJT10} and an adaptation (although less accurate) is possible for networks with cyclic dependencies~\cite{Bou22}.
	
	\paragraph{Using specific service policies}
	Blind multiplexing has been widely used at first, but with the development of critical networks, it has been proved is very pessimistic: indeed, it only computes the performances for the worst-case scheduling of the flow of interest. Critical networks are usually tightly controlled, networks are multiclass with precise description of the scheduling policy. This offers the opportunity of precise modeling, hence more accurate performance bounds. It is usually done as follows: using the strict service curve, it is possible to compute per-class (min,plus) service curves. Inside each class of traffic, the service order is FIFO, the scheduling policy where strict service curves are no required. 
	
	%The first scheduling that have been modeled are the FIFO and static priorities, where the analysis can be found in the two seminal books~\cite{Chang2000, LT2001}. Other scheduling policies have been studied for a long time, like EDF ({\em Earliest-deadline-First})~\cite{Chang2000} and GPS ({\em Generalized processor sharing}). This latter policy is the idealization of the bandwidth sharing policies, and has more recently received some attention. Practical implementations of GPS, like DRR {\em Deficit Round Robin} and WRR ({\em Weighted Round Robin}) have been extensively studied in the recent years, leading to huge improvements of the performance computation in realistic networks. 
	
	Usually, the network element involved are or two main types: either they offer a constant rate service, that can be modeled by a strict service curve, or they guarantee a maximum transmission delay modeled. This latter case is intermediate between (min,plus) and strict service curve, to be discussed in Section~\ref{sec:pmoo}. 
	In both case the modeling is not an issue for Network Calculus. 
 	\paragraph{Solution base on accepting service curves with negative values}

	More recently, a solution has been proposed~\cite{HCS24} to perform the analysis without  considering  strict service curves. The key idea is allowing service curves with negative values (and local decrease).  If (min,plus) per-flow service curve can be obtained from (min,plus) service curves, then concatenation also holds, and all results can be applied straightforwardly in any order. Although multiple examples, such as Example~\ref{ex:neg}, show that no service can be guaranteed under the described model, the authors add an additional constraint: the flow under analysis must have a minimum arrival curve $\alpha_{\ell}$: $\forall 0\leq s\leq t$, $A(t)-A(s) \geq \alpha_{\ell} (t-s)$, and $\alpha_{\ell}$ is required to grow to $+\infty$ with $t$  (which was what prevented our example to guarantee any traffic). 
	
	%The key principle of this idea is to make the network analysis doable, without having to worry too much about the hypothesis: if (min,plus) residual service curve can be obtained from (min,plus) service curves, then concatenation also holds, and all results could be applied without stability issue. 
	
	 In this direction,~\cite{HCS24} computes the (min,plus) service curves $\beta-\alpha$ as the per-flow  service curves  by allowing negative values and handles the computation of  worst-case delay and backlog upper bounds that depends on the minimum arrival curve $\alpha_{\ell}$.
	 %The key idea is assuming a minimal arrival guarantee of the arrival processes: if data arrive at rate at  least $r$, then data cannot be stored infinitely often in the system, and finite delays can be computed.  
	 
	 \begin{example}
	 		Assume a rate-latency service curve $\beta(t) = R(t-T)_+$ crossed by flow 1 and flow 2, respectively constrined by $\alpha_1$ and $\alpha_2$ with $\alpha_i = \alpha_{r_i, b_i}$, $i\in\N_2$. Assume in addition that $A_2(t)-A_2(s)\geq r'_2(t-\tau_2)_+$ for all $0\leq s\leq t$. Then a residual service curve for flow 2  is $\beta-\alpha_1$, and the worst-case delay is
	 		\begin{equation}
	 			\label{eq:delay-nc}
	 			\max(T+\frac{b_1+r_1T+b_2}{R-r_1}, T+\tau_2 +\frac{b_1+r_1T}{r'_2}).
	 		\end{equation}
	 		 
	 		  We can notice that when $r'_2$ decreases to 0, the worst-case delay grows to $+\infty$.
	 	\end{example}   
	 		This type of service curve, under the additional constraint of a minimum arrival curve allows to compute performances in networks without any assumption on the type of service curve. However, the analysis is restricted to the arbitrary multiplexing only:  for a precise modeling and accurate performance bounds,  many scheduling policies require strict service curve as an assumption, that is used throughout the analysis. 
	 
%	 For example, assume a rate-latency service curve $\beta(t) = R(t-T)_+$ crossed by flow 1 and flow 2. Flow 1 is $\alpha_1$-constrained with $\alpha_1 = \alpha_{r_1, b_1}$ and flow 2 is $\alpha_2$-constrained with $\alpha_2 = \alpha_{r_2, b_2}$. Assume in addition that $A_2(t)-A_2(s)\geq r'_2(t-\tau_2)_+$ for all $0\leq s\leq t$. Then a residual service curve for flow 2  is $\beta-\alpha_1$, and the worst-case delay is $$\max(T+\frac{b_1+r_1T+b_2}{R-r_1}, T+\tau_2 +\frac{b_1+r_1T}{r'_2}).$$ We can notice that when $r'_2$ decreases to 0, the worst-case delay grows to $\infty$. 
%	This type of service curve, under the additional constraint of a minimum arrival allows to compute performances in networks without any assumption on the type of service curve. However, the analysis is restricted to the arbitrary multiplexing only, since many scheduling policies require strict service curve as an assumption, that is used throughout the analysis. 
	In~\cite{HCS24}, two use-cases presented: feedback control and analysis of tandem networks. Complex feedback structures are another use-case described in~\cite{HWS25}. 
	
	We claim that these use-cases could be analyzed with a more conventional analysis, by studying a class of sub-additive functions, that received to our knowledge very little attention regarding service curves. That is the purpose of the  next section. 
	
	%Before analyzing these case, we present a special shape of service curve that has received little attention until now but that could be useful for the study of these systems: the sub-additive service curves. In the next section, we show that in fact,  it is possible to compute individual service curves for  sub-additive (min,plus) service curves. 
	
	\section{Sub-additive service curves}
	\label{sec:sub-add}
	In this paragraph, we consider a sub-additive (min,plus) service curve, and show that non-negative per-flow service curves can be guaranteed to each flow under the blind multiplexing scheduling. 
	Let us first give some definitions. 
	
	\begin{defi}[sub-additive function]
		Let $f:\R_+\to\Rmin$ be a function. This function is sub-additive if for all $t, s\in\R+$, $f(s+t)\leq f(s)+f(t)$. 
	\end{defi}
	
	Examples of sub-additive functions are: 
	\begin{itemize}
		\item $\gamma_{r, b}$, the token-bucket functions;
		\item $\lambda_R$, the constant-rate functions.
	\end{itemize}
	
Sub-additive functions have received some attention regarding arrival curves, since if $\alpha$ is an arrival curve for the process $A$, then its sub-additive closure (defined next) also is. On the contrary, it is usual to assume that a {\em strict} service curve is {\em super-additive} ($\forall s, t$, $\beta(s+t)\geq \beta(s)+\beta(t)$), since its can be replaced by its super-additive closure.

	\begin{defi}
		The sub-additive closure of $f$ is $f^* = \min_{n\in\N}f^{n}$, where $f^0 = \delta_0$ and for all $n>0$, $f^{n} = f\conv f^{n-1}$. The sub-additive function $f$ is sub-additive and is the largest sub-additive function $g$ such that $g\leq f$. 
	\end{defi}
	
	The following lemma will be useful.
	
	\begin{lemma}[Prop. 2.9 of \cite{BBL18}]
		$f=f^*\Leftrightarrow f \deconv f = f$.
	\end{lemma}
	
	In this section, we assume a server offering a (min,plus) non-negative, non-decreasing and left-continuous service curve $\beta$ that is sub-additive. We then know than $\beta(0) = 0$, and can assume that $\beta = \beta\deconv\beta $. 
%	\begin{proof}
%		Since $f$ is sub-additive, for all $0\leq s\leq t$, $f(s)+f(t) \geq f(s+t)$, and $f(0) = 0$ (since $f$ is non-negative).  Therefore, $f(t+s)-f(s) \geq f(t)$, and $f(t)\leq \sup_{s>0}f(t+s)-f(s)$, taking into account the case $s = 0$, $f(t)= \sup_{s\geq 0}f(t+s)-f(s)$. 
%	\end{proof}

	%We can then assume that $\beta = \beta\deconv\beta $. 

	%In this paragraph, we show that it is possible to compute a non-negative  residual service curve. 
	%To do this, we first show that we can bound the length of the backlogged period, and that $\beta$ is  in fact a strict service curve in the weak sense.  
	Consider two arrival processes to the server, $A_1$ and $A_2$, with respective departure processes $D_1$ and $D_2$ and respective arrival curves $\alpha_1$ and $\alpha_2$. Our goal is to compute a service curve  for flow 2.  
	
	% We have: 
%	
%	
%	\begin{align*}
%		D_1(t)+D_2(t) & \geq (A_1+A_2) * \beta(t)\\
%		& = \inf_{0\leq s\leq t}A_1(s)+A_2(s) + \beta(t-s)\\
%		D_2(t) &\geq  \inf_{0\leq s\leq t} A_1(s)+A_2(s) + \beta(t-s) - A_1(t), 
%	\end{align*}
	
	For the sake of simplicity, let us denote $A = A_1 + A_2$ and $D = D_1+D_2$, we than have for all $t\geq 0$, $D(t) \geq A*\beta(t) = \inf_{0\leq s\leq t}A(s)+\beta(t-s)$.
	
	As $\beta$ and $A$ are left-continuous, there exists $s\in[0, t]$ such that $A*\beta(t) = A(s)+
	\beta(t-s)$ (\cite[Prop 3.10]{BBL18}).
	We define $u(t)$ as $u(t) = \sup\{s\leq t~|~A*\beta(t) = A(s)+\beta(t-s)\}$, the supremum of the values where the inequality holds.
	
	\begin{lemma}
		\label{lem:u}
		$\forall t\geq 0$, $A\conv \beta(t) = A(u(t))+ \beta(t-u(t))$. 
	\end{lemma}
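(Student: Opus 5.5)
Our plan is to show that the supremum defining $u(t)$ is attained. Set $S_t = \{s\in[0,t]~|~A(s)+\beta(t-s) = A\conv\beta(t)\}$. By \cite[Prop 3.10]{BBL18}, $S_t\neq\emptyset$, so $u(t) = \sup S_t\in[0,t]$ is well defined. If $u(t)\in S_t$, the statement holds. Otherwise there is a strictly increasing sequence $s_n\in S_t$ with $s_n\to u(t)$ and $s_n<u(t)$. Write $u=u(t)$.

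The key step is to pass to the limit. Since $A$ is left-continuous and non-decreasing, $A(s_n)\to A(u)$. The term $\beta(t-s_n)$ is harder: $t-s_n$ decreases to $t-u$ \emph{from the right}, and left-continuity of $\beta$ says nothing about right limits. This is where we expect the main obstacle. Because $\beta$ is non-decreasing, $\beta(t-s_n)\ge \beta(t-u)$ for every $n$. Hence
\[
A(u)+\beta(t-u) \le \lim_{n} A(s_n) + \liminf_n \beta(t-s_n) = \lim_n \big(A(s_n)+\beta(t-s_n)\big) = A\conv\beta(t),
\]
where the last equality uses $s_n\in S_t$. In this step we use that $A(s_n)$ converges, so the $\liminf$ of the sum splits.

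The reverse inequality $A(u)+\beta(t-u)\ge A\conv\beta(t)$ holds by definition of the infimum, since $u\in[0,t]$. Therefore $u\in S_t$, and the supremum is attained. Monotonicity of $\beta$ is what replaces the missing right-continuity. If $\beta$ takes the value $+\infty$, the inequalities still hold in $\Rmin$, since $A\conv\beta(t)\le A(t)+\beta(0)=A(t)<\infty$ forces each term $\beta(t-s_n)$ to be finite.

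Sub-additivity of $\beta$ is not needed for this lemma; it will presumably matter only in later lemmas about the backlogged periods.
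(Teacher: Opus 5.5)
Your proposal is correct and follows essentially the same route as the paper: take an increasing sequence of minimizers converging to $u(t)$, use left-continuity of $A$ to get $A(s_n)\to A(u(t))$, and use monotonicity of $\beta$ to get $\beta(t-u(t))\le\beta(t-s_n)$, which yields $A(u(t))+\beta(t-u(t))\le A\conv\beta(t)$. You argue directly rather than by contradiction and explicitly handle possibly infinite values of $\beta$, which is slightly cleaner than the paper's write-up.
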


	\begin{proof}
		Let $\cU = \{u\in[0, t]~|~A\conv\beta(t) = A(u) + \beta(t-u)$. If $u(t)\in\cU$, then there is nothing to prove. If $u\notin\cU$, since $u(t) = \sup \cU$ then there exists an increasing sequence $(u_n)_{n\in\N}\in\cU^{\N}$ such that $\lim_{n\to\infty}u_n = u(t)$: for all $n\in\N$, 
		$A(u_n) + \beta(t-u_n) = A\conv\beta(t) < A(u(t)) + \beta(t-u(t))$, or by rewriting, 
		
		\begin{equation}
			\label{eq:lem2}
			A(u_n) - A(u(t)) < \beta(t-u(t)) - \beta(t-u_n).
		\end{equation}
		By left-continuity of $A$, $\lim_{n\to\infty} A(u_n)-A(u(t)) = 0$, and because $\beta$ is non-decreasing,\\ $\lim_{n\to\infty} \beta(t-u(t)) - \beta(t-u_n)\leq 0$. But if Equation~\eqref{eq:lem2} holds, we necessarily have $\lim_{n\to\infty} \beta(t-u(t))- \beta(t-u_n)= 0$, hence $u(t)\in\cU$. This is a contradiction and  ends the proof. 
	\end{proof}
	
	We are now interested in the minimal departure process, that is  $D= A\conv \beta$. We will later see that it is not a loss of generality. 
	\begin{lemma}
		\label{lem:limit}
		If $D = A\conv \beta$, then for all, $s\leq t$, $D(t)-D(s)\leq \beta(t-s)$.
	\end{lemma}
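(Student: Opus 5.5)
We prove the bound directly from the definition of the convolution, choosing the infimum at $t$ cleverly and using sub-additivity of $\beta$. By Lemma~\ref{lem:u} (or simply by attainment of the infimum, \cite[Prop 3.10]{BBL18}), we fix $v\in[0,s]$ such that
\[
D(s) = A\conv\beta(s) = A(v) + \beta(s-v).
\]
Since $v\leq s\leq t$, the value $v$ is also admissible in the infimum defining $A\conv\beta(t)$. Hence
\[
D(t) = A\conv\beta(t) \leq A(v) + \beta(t-v).
\]

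Subtracting the two relations gives
\[
D(t)-D(s)\leq \beta(t-v)-\beta(s-v).
\]
It remains to bound the right-hand side by $\beta(t-s)$. Write $t-v = (t-s)+(s-v)$, with both terms non-negative. Sub-additivity of $\beta$ gives $\beta(t-v)\leq \beta(t-s)+\beta(s-v)$, so $\beta(t-v)-\beta(s-v)\leq \beta(t-s)$. Equivalently, this is the identity $\beta\deconv\beta=\beta$ from the lemma of \cite{BBL18} recalled above.

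One technical point needs care: the subtraction must be legitimate, that is, $\beta(s-v)$ must be finite, since $\beta$ takes values in $\Rmin$. Because $D(s)=A(v)+\beta(s-v)$ and $D(s)\leq A(s)<+\infty$, the quantity $\beta(s-v)$ is finite. If $\beta(t-s)=+\infty$, the claim is trivial. Otherwise all quantities are finite and the argument above goes through.

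The main, though mild, obstacle is choosing the right minimizer. The minimizer should be taken for $s$, not for $t$, so that it remains feasible for $t$ and the inequalities point the right way.
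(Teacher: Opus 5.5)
Your proof is correct and is essentially the paper's argument: the paper also takes the minimizer $u(s)$ for $D(s)$ from Lemma~\ref{lem:u}, notes that it remains admissible in $A\conv\beta(t)$, subtracts, and concludes via $\beta(t-u(s))\leq\beta(t-s)+\beta(s-u(s))$. Your extra check that $\beta(s-v)$ is finite (since $D(s)\leq A(s)$) is a small refinement that the paper leaves implicit.
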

	
	%We are now interested in the minimal departure process, that is that we from now on assume that $D= A\conv \beta$. 
	
	\begin{proof}
		From Lemma~\ref{lem:u}, $D(s) = A(u(s)+\beta(s-u(s))$, and $D(t)\leq A(u(s))+\beta(t-u(s))$ always holds. Consequently $D(t)-D(s) \leq \beta(t-u(s)) - \beta(s-u(s)) \leq \beta(t-s)$ ($\beta$ is sub-additive, so $\beta(t-u(s)) \leq \beta(t-s) + \beta(s-u(s))$). 
	\end{proof}
	In other words, the speed of the departure process is limited. The next property is interesting as it proves that $t-u(t)$ is upper bounded,  allowing to focus only on a finite bounded interval for the possible values of $u(t)$.  
	%\Define $u(t)$ as $u(t) = \sup\{s\leq t~|~A*\lambda_R(t) = A(s)+R(t- s)\}.$
	%\begin{lemma}
	%	\label{lem:u-inc}
	%	$t\mapsto u(t)$ is non-decreasing.
	%\end{lemma}
	%\begin{proof}
	%	On the one hand, we have 
	%	$$\begin{array}{ll}
		%		&D(t) = A(u(t))+\beta(t-u(t))\\
		%		-&D(s) = A(u(s)+\beta(s-u(s))\\
		%		\hline
		%		&D(t)-D(s) = A(u(t))-A(u(s))+ \beta(t-u(t) - \beta(s-u(s)) .
		%	\end{array}$$
	%	Let us now assume, by contradiction,  that $t>s$ and $u(t)<u(s)$. 
	%	On the other hand, since $t>s$, then $D(t)- D(s)\leq \beta(t-s)$, and since $u(t)<u(s)$, $A(u(t))-A(u(s)\leq 0$, and since  $\beta$ is non-decreasing and sub-additive, $\beta(t-u(t)-\beta(s-u(s))\leq \beta(t-s-(u(t)-u(s))), 
	%	
	%	 so $D(t)-D(s) = A(u(t))-A(u(s))+ \beta(t-u(t) - \beta(s-u(s)) \leq A(u(t))-A(u(s)) + \beta(t-s-u(t)+u(s))$. If $u(t)<u(s)$, then $A(u(t))-A(u(s))\leq 0$ and $t-s -u(t)+ u(s)>t-s$, so $D(t)-D(s) \geq \beta(t-s)$. The only possibility is than $A(u(t)) = A(u(s))$ and which is in contradiction with the previous lemma. If $A(u(t) = A(u(s))$ and $\beta(t-s) = \beta(t-s -u(t)+ u(s))$ both hold, then $A(u(s)) + \beta (t-u(s)) \leq A(u(s)) + \beta(t-s)?? = D(t)$
	%\end{proof}
	
	Let us define $\tau = \sup\{t>0~|~\alpha(t)\geq\beta(t)\}$, the time from which $\alpha$ is always strictly below $\beta$. 
	
	\begin{lemma}
		\label{lem:tau}
		For all $t\geq 0$, $t-u(t)\leq \tau$. 
	\end{lemma}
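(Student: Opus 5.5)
The plan is to compare the minimiser $u(t)$ with the trivial candidate $s=t$ in the infimum defining $A\conv\beta(t)$, and then use the arrival curve constraint together with the definition of $\tau$. Throughout, I take $\alpha$ in the definition of $\tau$ to be an arrival curve for the aggregate process $A=A_1+A_2$, for instance $\alpha=\alpha_1+\alpha_2$. This is the interpretation under which $\tau$ makes sense here.

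First, by Lemma~\ref{lem:u}, the infimum is attained at $u(t)$, so $A\conv\beta(t)=A(u(t))+\beta(t-u(t))$. Since $s=t$ is admissible in the infimum and $\beta(0)=0$, we also have $A\conv\beta(t)\leq A(t)+\beta(0)=A(t)$. Combining the two gives
\[
\beta(t-u(t))\leq A(t)-A(u(t)).
\]
Next, because $A$ is $\alpha$-constrained and $u(t)\leq t$, we have $A(t)-A(u(t))\leq \alpha(t-u(t))$. Hence $\beta(t-u(t))\leq \alpha(t-u(t))$.

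Now argue by contradiction and suppose $t-u(t)>\tau$. By the definition $\tau=\sup\{d>0~|~\alpha(d)\geq\beta(d)\}$, every $d>\tau$ satisfies $\alpha(d)<\beta(d)$. With $d=t-u(t)$ this gives $\alpha(t-u(t))<\beta(t-u(t))$, which contradicts the inequality just obtained. Therefore $t-u(t)\leq\tau$.

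Note that sub-additivity of $\beta$ plays no role here; only attainment of the infimum (Lemma~\ref{lem:u}), $\beta(0)=0$, and the arrival curve constraint are used. The only delicate point, and the step I expect to need care, is the boundary and degenerate cases. If $\tau=+\infty$, the statement is vacuous. If the set $\{d>0~|~\alpha(d)\geq\beta(d)\}$ is empty, the natural convention is $\tau=0$, and the argument then gives $u(t)=t$. In that case the strict inequality $\alpha(d)<\beta(d)$ for $d>0$ still yields the contradiction. The whole argument relies on using strict inequality beyond the supremum, so I would state this convention explicitly.
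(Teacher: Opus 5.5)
Your proof is correct and follows essentially the same route as the paper's: both use attainment of the infimum at $u(t)$ (Lemma~\ref{lem:u}), the arrival-curve bound $A(t)-A(u(t))\leq\alpha(t-u(t))$, and the strict inequality $\alpha<\beta$ beyond $\tau$ to reach a contradiction. The only difference is minor. The paper gets a strict inequality through $A(t)>D(t)$, which implicitly uses $D=A\conv\beta$. You instead use the weak bound $A\conv\beta(t)\leq A(t)$ from the candidate $s=t$, which is slightly cleaner and does not depend on the choice of $D$.
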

	\begin{proof}
		Assume on the contrary that for some $t\geq 0$, $t-u(t)>\tau$. Since $u(t)\neq t$, $A(t)>D(t)$.  Then $A(t) > A(u(t))+\beta(t-u(t))$, and $\alpha(t-u(t)) \geq A(t)-A(u(t))>\beta(t-u(t)$. But, by construction, for all $s>\tau$, $\alpha(s) <\beta(s)$. We have a contradiction, and $t-u(t)\leq \tau$.
	\end{proof}
	We next show that the first backlogged period of  the server is  bounded: there is some time $t$ where the server is empty. Note that this result is now valid even if $D\neq A\conv  \beta$: $\forall t\geq 0$, $A(t)\geq D(t)\geq A(t) = A\conv\beta(t)$, so if $A(t) = A\conv\beta(t)$, then $A(t) = D(t)$.
	Let us first prove the following lemma:
	
	\begin{lemma}
		\label{lem:start}
		For all $t\geq 0$, $D(u(t)) = A(u(t))$. 
	\end{lemma}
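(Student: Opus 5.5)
The plan is to show that $u(t)$ is a point where the convolution $A\conv\beta$ touches $A$, that is, $A\conv\beta(u(t)) = A(u(t))$. This is enough for the lemma. Causality gives $A\ge D$, and the service curve gives $D\geq A\conv\beta$. So at $v := u(t)$ we would get $A(v)\geq D(v)\geq A\conv\beta(v) = A(v)$, hence equality. As noted just before the lemma, this argument does not require $D = A\conv\beta$.

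Fix $t$ and write $v=u(t)$. By Lemma~\ref{lem:u}, $A\conv\beta(t) = A(v)+\beta(t-v)$, and this quantity is at most $A(t)+\beta(0)=A(t)<+\infty$. In particular $\beta(t-v)$ is finite, which I will need so that adding and subtracting it is legitimate.

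Assume by contradiction that $A\conv\beta(v) < A(v)$. Apply Lemma~\ref{lem:u} at time $v$ and set $w := u(v)$, so that $A\conv\beta(v) = A(w)+\beta(v-w)$. Since $\beta(0)=0$, the case $w=v$ would give $A\conv\beta(v)=A(v)$, so in fact $w<v$ and
\[ A(w)+\beta(v-w) < A(v). \]
Next I use the sub-additivity of $\beta$ to compare the candidate $w$ with $v$ in the infimum defining $A\conv\beta(t)$:
\[ A(w)+\beta(t-w) \leq A(w)+\beta(v-w)+\beta(t-v) < A(v)+\beta(t-v) = A\conv\beta(t). \]
Since $w\in[0,t]$, the definition of the infimum gives $A\conv\beta(t)\le A(w)+\beta(t-w)$. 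Combined with the strict inequality above this yields $A\conv\beta(t)<A\conv\beta(t)$, a contradiction. Hence $A\conv\beta(v)=A(v)$, and the sandwich argument of the first paragraph concludes.

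The only delicate points are finiteness and strictness. The strict inequality must survive adding $\beta(t-v)$, which is why I check above that this term is finite. I also need a minimizer $w$ at time $v$, which Lemma~\ref{lem:u} supplies. Note that the argument does not even need $u(t)$ to be the supremum of the minimizers: any minimizer at $t$ would do.
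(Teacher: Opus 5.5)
Your proposal is correct and follows essentially the same argument as the paper: assume the touching fails at $u(t)$, take the minimizer $u(u(t))$ at $u(t)$, and use sub-additivity of $\beta$ to get a value strictly below $A\conv\beta(t)$, which contradicts the infimum. The paper phrases this with $D = A\conv\beta$ and transfers it to general $D$ via the remark $A\ge D\ge A\conv\beta$ made just before the lemma, which is exactly your sandwich step; your explicit check that $\beta(t-u(t))$ is finite fills in a point the paper leaves implicit.
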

	\begin{proof}
		We proceed by contradiction and assume that there exists $t$ such that $D(u(t))<A(u(t))$. By construction, $u(t)>0$, and $u(u(t)) < u(t)$. We have
		
		\begin{align*}
			D(t) &= A(u(t))+\beta(t-u(t))\\
			&> D(u(t))+\beta(t-u(t))\\
			& = A(u(u(t))) + \beta(u(t)-u(u(t))+\beta(t-u(t))\\
			& \geq A(u(u(t)) + \beta(t-u(u(t))),
		\end{align*}
		where in the last line we use the sub-additivity of $\beta$. Finally, $D(t) > A(u(u(t)) + \beta(t-u(u(t)))$, showing that $u(t)$ is not a minimizer. This is a contradiction and 
		$D(u(t)) = A(u(t))$. 
	\end{proof}
	
	We are now ready to prove that the first backlogged period is bounded. 
	
	\begin{lemma}
		\label{lem:stb}
		There exists $t>0$ such that $D(t) = A(t)$. 
	\end{lemma}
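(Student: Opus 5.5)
The idea is to combine Lemma~\ref{lem:tau} with Lemma~\ref{lem:start}. Lemma~\ref{lem:start} says that every point $u(t)$ is an instant where the server is empty: $D(u(t)) = A(u(t))$. So it suffices to find some $t$ with $u(t)>0$; then $t' = u(t)>0$ satisfies $D(t') = A(t')$.

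\textbf{Step 1: control $\tau$.} Here $\alpha = \alpha_1+\alpha_2$ is the aggregate arrival curve. By the standing stability assumption (aggregate arrival rate strictly smaller than the service rate), $\alpha(t)<\beta(t)$ for all sufficiently large $t$. Hence $\tau = \sup\{t>0~|~\alpha(t)\geq\beta(t)\}$ is finite. Lemma~\ref{lem:tau} then gives $t-u(t)\leq \tau$ for all $t\geq 0$.

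\textbf{Step 2: produce a positive $u(t)$.} Take any $t>\tau$, for instance $t = \tau+1$. Then $u(t)\geq t-\tau>0$. Lemma~\ref{lem:start} gives $D(u(t)) = A(u(t))$ with $u(t)>0$, which is the claim.

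\textbf{Step 3: remove the assumption $D = A\conv\beta$.} If $D$ is a general departure process with $A\geq D\geq A\conv\beta$, I would not apply Lemma~\ref{lem:start} to $D$ directly. Instead, apply it to $D_0 = A\conv\beta$, which satisfies the hypotheses of the earlier lemmas, to get $A(u(t)) = A\conv\beta(u(t))$. Then $A(u(t))\geq D(u(t))\geq A\conv\beta(u(t)) = A(u(t))$, as in the remark preceding Lemma~\ref{lem:start}. So $D(u(t)) = A(u(t))$ as well.

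\textbf{Main obstacle.} The delicate point is making sure $\tau<+\infty$, since without it Lemma~\ref{lem:tau} is vacuous. I would argue that the supremum is over a bounded set, using $r<R$ with $\beta$ growing at least at its long-term rate $R$ and $\alpha$ at most at rate $r$. In the degenerate case where $\{t>0~|~\alpha(t)\geq\beta(t)\}$ is empty, I would take $\tau=0$, so that $u(t)=t$ for all $t$ and any $t>0$ works.
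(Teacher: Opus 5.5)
Your proof is correct and follows the paper's argument: fix $t>\tau$, use Lemma~\ref{lem:tau} to get $u(t)\geq t-\tau>0$, and conclude with Lemma~\ref{lem:start}. Your extra points are consistent with the paper: you justify $\tau<+\infty$ via the standing stability assumption, which the paper leaves implicit, and you pass from $A\conv\beta$ to a general $D$ with $A\geq D\geq A\conv\beta$, which the paper handles in the remark just before Lemma~\ref{lem:start}.
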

	
	\begin{proof}
		From Lemma~\ref{lem:tau}, $t-u(t)\leq \tau$ for all $t\geq 0$. Fix $t>\tau$. Then $u(t)>0$, and from Lemma~\ref{lem:start}, $A(u(t)) = D(u(t))$, which concludes the proof. 		
%		We proceed again  by contradiction and assume that for all $t>0$,
%		$A(t)> D(t)$, hence $u(t) <t$ for all $t>0$. Fix $t>0$. Bin, since $t-u(t)$ is bounded, one can assume that $t$ is large enough  so that $u(t)>0$, and .  By assumption,  $u(t)<t$, $D(u(t))<A(u(t))$ and $u(u(t))<u(t)$.  Then, we have
%		\begin{align*}
%			D(t) &= A(u(t))+\beta(t-u(t))\\
%			&> D(u(t))+\beta(t-u(t))\\
%			& = A(u(u(t))) + \beta(u(t)-u(u(t))+\beta(t-u(t))\\
%			& \geq A(u(u(t)) + \beta(t-u(u(t))),
%		\end{align*}
%		where in the last line we use the sub-additivity of $\beta$, and we obtain that $u(t)$ is not a minimizer for the convolution. This is a contradiction and if $A(s)>D(s)$ for all $s\in]0, t]$ then $u(s) = 0$, and, from the previous lemma, this can hold only on an interval of length at most $\tau$, and there exists $t\in[0, \tau] $ such that $A(t) = D(t)$. 
	\end{proof}
	
%	This lemma shows that the first backlogged period is bounded. It is now  not difficult to show that all backlogged periods are bounded and to see that a individual service guarantees can be computed similar to the strict service curve case. 
%	
%	\begin{lemma}
%		\label{lem:start}
%		For all $t\geq 0$, $D(u(t)) = A(u(t))$. 
%	\end{lemma}
%	
%	\begin{proof}
%		The proof follows the same line as the proof of the previous lemma: if $D(u(t))<A(u(t))$, then  $u(u(t))< u(t)$ and $D(t) > A(u(u(t)) + \beta(t-u(u(t)))$, which is a contradiction and completes the proof.  
%	\end{proof}
	
	And we are now ready to prove the main result of this section.
	
	\begin{theorem}
		\label{th:sb-pf}
		Consider a server offering a non-decreasing, non-negative, left-continuous and sub-additive (min,plus) service curve, crossed by two flows  constrained by the respective arrival curves $\alpha_1$ and $\alpha_2$. Then $(\beta-\alpha_2)_+$ is a (min,plus)  service curve for flow 1. 
	\end{theorem}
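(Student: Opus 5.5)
The plan is to mimic the strict-service-curve computation of Section~\ref{sec:ned}, using $u(t)$ as a substitute for $start(t)$. Fix $t\geq 0$ and write $A=A_1+A_2$, $D=D_1+D_2$. The goal is a time $s\leq t$ such that the server is empty at $s$ and $D(t)\geq A(s)+\beta(t-s)$. Once such an $s$ is found, the conclusion follows as in the strict case. Since the server is empty at $s$, $D_i(s)=A_i(s)$ for each $i$, and therefore $D_1(t)\geq A_1(s)$. Moreover,
\begin{align*}
D_1(t) &\geq A(s)+\beta(t-s)-D_2(t)\\
&\geq A_1(s)+\beta(t-s)-(A_2(t)-A_2(s))\\
&\geq A_1(s)+\beta(t-s)-\alpha_2(t-s).
\end{align*}
Combining the two bounds gives $D_1(t)\geq A_1(s)+(\beta-\alpha_2)_+(t-s)\geq A_1\conv(\beta-\alpha_2)_+(t)$, which is the claimed (min,plus) service curve for flow~1.

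The natural candidate is $s=u(t)$, with $u$ defined from the actual input $A$ and $\beta$. By Lemma~\ref{lem:u}, $D(t)\geq A\conv\beta(t)=A(u(t))+\beta(t-u(t))$, so the only missing ingredient is that the server is empty at $u(t)$, i.e.\ $D(u(t))=A(u(t))$. Lemma~\ref{lem:start} gives exactly this, and by causality and no loss this per-aggregate emptiness implies per-flow emptiness: $A_i\geq D_i$ for each $i$ and the sums are equal, so each term is equal.

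The main obstacle is that Lemma~\ref{lem:start} was derived in the discussion where $D=A\conv\beta$, while here $D$ is only known to satisfy $D\geq A\conv\beta$. I will re-check its argument for a general $D\geq A\conv\beta$. Suppose $D(u(t))<A(u(t))$. Then $u(t)>0$, and $u(u(t))<u(t)$, because at the point $u(t)$ we have $A(u(t))>D(u(t))\geq A\conv\beta(u(t))$, so $u(t)$ itself is not a minimizer for $A\conv\beta(u(t))$. The chain
\begin{align*}
A\conv\beta(t) &= A(u(t))+\beta(t-u(t))\\
&> A\conv\beta(u(t))+\beta(t-u(t))\\
&= A(u(u(t)))+\beta(u(t)-u(u(t)))+\beta(t-u(t))\\
&\geq A(u(u(t)))+\beta(t-u(u(t)))
\end{align*}
then contradicts the definition of $A\conv\beta(t)$ as an infimum. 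This argument uses only $A\conv\beta$ and the sub-additivity of $\beta$, so the lemma should go through with $D(u(t))\geq A\conv\beta(u(t))$ replaced by the strict inequality above.

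The case $u(t)=t$ is trivial: then $A(t)\leq A\conv\beta(t)\leq D(t)$, so the server is empty at $t$ and $D_1(t)=A_1(t)$. Lemma~\ref{lem:tau} is not needed for this theorem. It only guarantees bounded backlogged periods, which matters for the subsequent delay bounds.
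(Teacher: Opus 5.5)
Your proof is correct and follows the paper's own argument. You take $s=u(t)$, use Lemma~\ref{lem:start} to get per-flow emptiness at $u(t)$, and combine $D_1(t)\geq A_1(u(t))$ with $D_1(t)\geq A_1(u(t))+\beta(t-u(t))-\alpha_2(t-u(t))$. Your re-derivation of Lemma~\ref{lem:start} for an arbitrary $D\geq A\conv\beta$ (showing $A(u(t))=A\conv\beta(u(t))$ and then sandwiching with $A\geq D\geq A\conv\beta$) is the same transfer the paper asserts in its remark before that lemma, but written out more carefully than the paper's proof, which is phrased only for $D=A\conv\beta$.
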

	
	\begin{proof}
		Let $A_1$ and $A_2$ the respective cumulative arrival processes for flows 1 and 2, and $D_1$ and $D_2$ their respective cumulative departure processes. We have for all $t\geq 0$:
		
		$$D_1(t)+D_2(t)  \geq A_1(u(t))+A_2(u(t)) + \beta (t-u(t)). $$
		On the one hand, 
	%	Consider 2 flows crossing the system:
		\begin{align*}
			D_1(t)& \geq A_1(u(t))+A_2(u(t)) + \beta (t-u(t))- D_2(t)\\
			& \geq  A_1(u(t) + (\beta(t-u(t)) - A_2(t) + A_2(u(t))) \\
			& = A_1(u(t) + (\beta(t-u(t)) - \alpha_2(t-u(t)),
		\end{align*}
		and on the other hand, $D_1(t) \geq D_1(u(t)) = A_1(u(t))$, so
		$$D_1(t)\geq A_1(u(t) + [(\beta(t-u(t)) - \alpha_2(t-u(t))]_+. $$
		In conclusion,  $D_1\geq A_1\conv (\beta-\alpha_2)_+$. 
	\end{proof}
	
	\begin{example}
		\begin{enumerate}
			\item The function $\lambda_R:t\to R t$, the constant rate is sub-additive, and it is possible for this classical class of service curves to compute residual service curves.
			\item The function $f:0\to 0;~t\to W+R(t-T)_+$ is sub-additive if $f\deconv f = f$, that is (the only interesting case is $s\leq T$ and $t> T$), $f(s)+f(t) = W + W+R(t-T) \geq f(t+s) = W+R(t+s-T)\Leftrightarrow W\geq Rs$, so if $W\geq RT$. This function is useful for window-flow control. Figure~\ref{fig:ex-sub} shows an example of process and arrival curves. We see that when the system is empty ($A=D$), then $u(t) = t$, but when $A(t)>D(t)$, then $u(t) = 0$ (we only consider the first backlogged period). 
		\end{enumerate}
		
		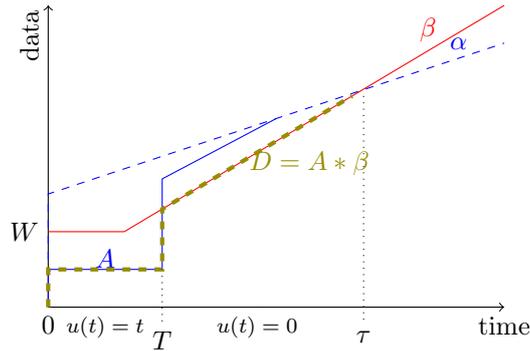
\begin{figure}[htbp]
			\centering\begin{tikzpicture}
				\draw[->] (0, 0) -- (6, 0) node[below, pos = 0] {0} node[below, pos = 1] {time};
				\draw[->] (0, 0)-- (0, 4) node[left, pos=0.9] {\rotatebox{90}{data}};
				\draw[red]  (0, 1) -- (1, 1) node [black, pos=0, left] {$W$}-- (6, 4)node[pos=0.8, above] {$\beta$};
				%\draw[red] (0, 0) -- (6, 4);
				\draw[blue, dashed] (0, 0) -- (0, 1.5) -- (6, 3.5)node[pos=0.9, above] {$\alpha$};
				\draw[blue] (0, 0) -- (0, 0.5)  -- (1.5, 0.5) node[pos=0.5, above=-.1cm] {$A$}-- (1.5, 1.7) -- (3,2.5) ;
				\draw [dotted] (1.5, 0.5) -- (1.5, -0.2)node [below, pos=1] {$T$};
				\draw [dotted] (0.75, 0) -- (0.75, 0) node [below, pos=0.5] { {\footnotesize $u(t) = t$}} ;
				\draw [dotted] (4.15, 2.8) -- (4.15, -0.2) node [below, pos=1] {$\tau$};
				\draw [dotted] (2.75, 0) -- (2.75, 0) node [below, pos=0.5] { {\footnotesize $u(t) = 0$}} ;
				
				\draw[olive, dashed, ultra thick] (0, 0) -- (0, 0.5)  -- (1.5, 0.5) -- (1.5, 1.3)-- (4, 2.8) node[pos = 0.4, right] {$D  = A\conv \beta$}; 
			\end{tikzpicture}
			\caption {Example of a departure process with a sub-additive service curve. First, $\beta >A$, and necessarily, $A = D$. Then, $A<\beta$, and necessarily, $u(t) = 0$ and $D = \beta$.}
			\label{fig:ex-sub}
		\end{figure}
	
	\end{example}
	
	\paragraph{Comparison with the weakly strict service curves.} This result could suggest to the fast reader that a sub-additive (min,plus) service curve is also a weakly strict service curve ($\forall t\geq 0$, $D(t)\geq D(start(t)) + \beta(t-start(t))$), and then be in contradiction with~\cite{BJT10} stating that the only (min,plus) service curves that are also weakly strict service curves are only the trivial functions ($\delta_0$ and the zero function). 
	Here we show that this is not true: sub-additive (min, plus)  service curves are not weakly strict service curves (with the exception ot the two trivial functions). We recall and illustrate the proof in the next example. 
	
	\begin{example}
		First, if $\beta$ is sub-additive and non-zero, then for all $t>0$, $\beta(t)>0$
		Consider a {\em large enough} arrival curve $\alpha$, such that $\alpha(t)>\beta(t)$ for all $t\in (0, \tau]$, for some  $\tau >0$. 
		Let us construct  the arrival process $A$ with $A(t) = \alpha(t)$, for all $t\in[0, \tau]$, and build two the departure processes $D_{ws}$, when the service is weakly strict and $D_{mp}$ when the service curve is (min, plus). For all $t\in[0, \tau/2[$, define $D_{ws}(t) = D_{mp}(t) = \beta(t)\geq A\conv\beta(t)$, and set $D_{ws}(\tau/2) = D_{mp}(\tau/2)= A(\tau/2)$. 
		
		Then $(0, \tau/2)$ is a backlogged period, and for $t\in(\tau/2, \tau)$: $$D_{ws}(t)\geq A(\tau/2) + \beta(t-\tau/2) = \alpha(\tau/2) + \beta(t-\tau/2)>\alpha(\tau/2).$$
		during the second backlogged period. 
		
		% according to the lemmas above, since there is only one backlogged period in $[0, \tau/2]$, and set $D(\tau/2) = A(\tau/2)$. This a new start of backlogged period. 
		
		%If the service curve were strict, one would have for all $t\in [\tau/2, \tau]$, $$D(t)\geq A(\tau/2) + \beta(t-\tau/2).$$ 
		
		Next, if the service curve is (min,plus), for $t\in(\tau/2, \tau)$, 
		$$D_{mp}(t) = \min(A(\tau/2) + \beta(t-\tau/2), \beta(t)) \lor A(\tau/2) = \min(\alpha(\tau/2)+ \beta(t-\tau/2), \beta(t))\lor \alpha(\tau/2).$$
		%and $D_{mp}$ remains constant 
		
		The two notions can coincide only if $D_{mp} \geq D_{ws}$. But, as $\beta$ is sub-additive, $\beta(t)\leq \beta(\tau/2)+\beta(t-\tau/2)<\alpha (\tau/2)+\beta(t-\tau/2)$, $D_{mp}$ remains constant while $\beta(t) <\alpha(\tau/2)$, and during that time, $$D_{mp}(t)  = \beta(t)\leq \alpha(\tau/2)<D_{ws}(t).$$
		
		%The two notions can coincide only if $D_{mp} \geq D_{ws}$. 
		 
		 %But as $\beta$ is sub-additive, $\beta(t)\leq \beta(\tau/2)+\beta(t-\tau/2)<\alpha (\tau/2)+\beta(t-\tau/2)$. 
		
		%is a possible departure process. But then we cannot have and the two notions of service curve coincide if $\beta(t)\geq  A(\tau/2)+\beta(t-\tau/2)$, but we have $A(\tau/2) + \beta(t-\tau/2) > \beta(t/2) + \beta(t-\tau/2) \geq \beta(t)$, where we use that $A(\tau/2)>\beta(\tau/2)$ and the sub-additivity of $\beta$. As a consequence, the two notions of service curve do not coincide. 
		
		This illustrated in Figure~\ref{fig:weak} for a constant rate service curve. 
		
		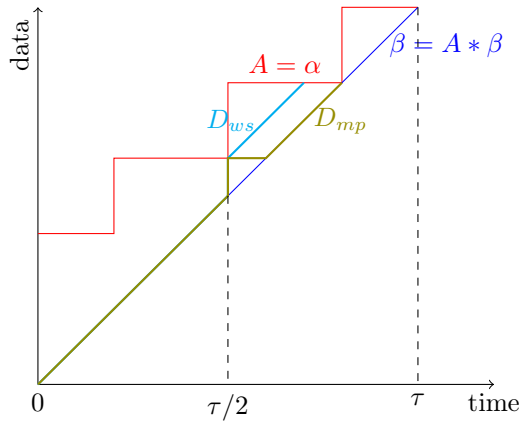
\begin{figure}[htbp]
			\centering
			\begin{tikzpicture}
				\draw[->] (0, 0) -- (6, 0) node[below, pos = 0] {0} node[below, pos = 1] {time};
				\draw[->] (0, 0)-- (0, 5) node[left, pos=0.9] {\rotatebox{90}{data}};
				\draw[red] (0,2) -- (1, 2) -- (1, 3)-- (2.5, 3) -- (2.5, 4) -- node[pos=0.5, above] {$A = \alpha$} (4, 4)-- (4, 5)-- (5, 5);
				\draw[blue] (0, 0) -- (5, 5) node [pos=0.9, right]{$\beta = A\conv \beta$} ;
				\draw[cyan, thick] (0, 0) -- (2.5, 2.5)-- (2.5, 3) -- (3.5, 4) node[left, pos=0.5] {$D_{ws}$};
				\draw[olive, thick] (0, 0) -- (2.5, 2.5)-- (2.5, 3) -- (3, 3)-- (4, 4) node[right, pos=0.5] {$D_{mp}$};
				\draw[ dashed] (5, 5) -- (5, 0) node[below, pos=1] {$\tau$};
				\draw[ dashed] (2.5, 2.5) -- (2.5, 0) node[below, pos=1] {$\tau/2$};
			\end{tikzpicture}
			\caption{Sub-additive  (min, plus) service curves are not weakly strict service curve: With $\beta=\Lambda_R$. The departure process $D_{ws}$, for the weakly strict service curve, forces a service at rate $R$ and is strictly above the convolution when creating  the end of a backlogged period: for the (min,plus) service,   $D_{mp}$ is only guaranteed to be above $A\conv \beta$.}
			\label{fig:weak}
		\end{figure}
	\end{example}
	
	As a last remark, It might be useful to notice that $(\beta-\alpha_2)_+$ is not a sub-additive function.

	\section{Use-case 1: Computation-communication systems}
	\label{sec:pmoo}
	We now focus on the first use-case of~\cite{HCS24}: the analysis of a tandem network. 
	 The network is a sequence of $n$ computation/communication (C/C) components. Each C/C component $j$ is composed of a computational element acting as a variable delay $d\in[m_j, M_j]$ and a communication element that is a (min,plus) constant-rate service curve $\beta_j = \lambda_{R_j}$. The authors claim that {\em conventional analysis} is not possible. On the contrary, we demonstrate that, since $\beta_j$ is linear, hence sub-additive, it is possible. to make such analysis. But first, let us comment about a caveat of the pure delay service curve, that is conform to the interpretation of the authors of~\cite{HCS24}. 
	
	\subsection{Interpretation of pure delay curves}
	In the common interpretation, a server with a pure delay service curve $\delta_d$ delays each bit of data by a delay at most $d$. It is easy to see that $\delta_d$ is not sub-additive as soon as $d>0$. Nevertheless, with this interpretation,  since de delay bound $d$ is valid for all bits of data, it is valid for any flow, and we have per-flow service curve $\delta_d$ is valid for all flow $i$ crossing that server:
	
	for all $t\geq 0$, $$D_i(t) \geq A_i([t-d]_+),$$ 
	add working with functions with negative values is useless. 
	
	We now show that this common interpretation is both too strong to be a (m,in,plus) service curve, an too  weak to be a  strict service curve. 
	% This is a case where it could be useful to work with service curve with negative values (we can also note that this service curve is not sub-additive). Let us comment more precisely the interpretation as a (min,plus) service curve and a strict service curve.
	
	\paragraph{Transmission delay is not a (min,plus) service curve}
	Consider a server offering a (min,plus) service curve $\delta_d$, $d>0$, crossed by two flows, with respective arrival curves $\alpha_1$ and $\alpha_2$, and arrival and departure processes $A_1$, $A_2$, $D_1$, and $D_2$. We have the aggregate processes $A = A_1+A_2$ and $D = D_1+D_2$. The (min,plus) service curve imposes $D\geq A\conv \delta_d  = A((\cdot-d)_+)$: graphically, the minimal departure process is shifted on the right by $d$. 
	
	Consider the following example:
	\begin{example}
		Flow 1 has a higher priority  than flow 2 and $A_1 = \alpha_1 = \gamma_{r_1, b_1}$. Assume that $A_2(t)<rd$ for all $t\leq 0$. In total, for all $t\geq d$, $D(t) = A_1(t-d) + A_2(t-d) = b_1+ r_1(t-d) + A_2(t-d) < A_1(t)$: the aggregate departure process is always strictly less than the arrival process of flow 1, hence flow 1 is always backlogged, as shown in Figure~\ref{fig:delay}. Consequently, flow 2, having the least priority, never receives any service, and then has a non-bounded delay: the server is not a transmission delay. 
	\end{example}
	One can argue that this is precisely a case where service curves with negative values could be useful. Yet, realistic server offering such a service curve is still, to out knowledge, to be modeled. 
	
	\begin{figure}[htbp]
		\centering
		\begin{tikzpicture}
			\draw[->] (0, 0) -- (6, 0) node[below, pos = 0] {0} node[below, pos = 1] {time};
			\draw[->] (0, 0)-- (0, 5) node[left, pos=0.9] {\rotatebox{90}{data}};
			\draw[red] (0,2) -- (5.5, 4.5)  node[pos=0.1, below] {$A_1 $} ;
			\draw[red, dashed] (0,2) -- (5, 4.5)  node[pos=0.8, left] {$A_1+A_2 $} ;
			\draw[blue, thick] (0, 0) -- (1, 0)  -- (1,2)node[pos=0.8, right] {$D_1 = D_1+D_2\geq A_1+A_2 \conv \delta_2$} -- (6, 4.5) ;
			
		\end{tikzpicture}
		\caption{Pure delay  (min, plus) do now guarantee the service of flow 2, the lower priority flow.}
		\label{fig:delay}
	\end{figure}
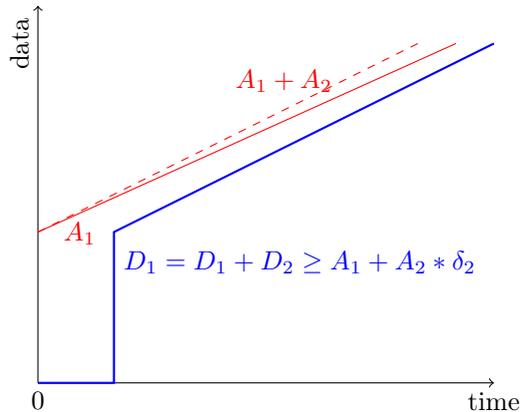
	
\paragraph{Transmission delay is not a strict service curve}
If the server now offers a strict service curve, with the same notations as the paragraph above, we have $$D(t)\geq A(start(t)) +\delta_d(t-start(t).$$ Since $\delta_d(d) = 0$, there is no guarantee for the service in the time interval $[start(t), start(t) +d]$, but for all $t>d$, $\delta_d(t) = +\infty$, so there cannot be backlogged period of length strictly more than $d$, whatever the flow considered. The guarantee is then the same for both flows: they have a succession of backlogged periods of length at most $d$. This forces bits of data arriving just before $d$ to have a near-zero delay, which is not the model for transmission delay (especially if the transmission delay is lower-bounded). 

	\medskip
	
	Back to the model of transmission delay, the authors of~\cite{HCS24} use a minimum delay $m$ and a maximum delay $M$: $A\conv \delta_M \geq D(t) \geq A\conv \delta_m$. 
	
	 A minimum delayredices the indeterminism in the processes, and a way for a more accurate analysis is to consider service curves with delays $M-m$ and add a delay $m$ (the deterministic, minimum guaranteed delay) in the final step of the end-to-end delay computation. We can also use this trick, as it holds in our analysis.

	\subsection{The tandem network model}
	In this paragraph, we show that the conventional PMOO analysis holds when the tandem network is a mix of transmission delays, (min,plus) sub-additive and strict service curves. The analysis is not new, since it mimicks those presented in~\cite{BGLT2007} and~\cite{BNS22b} (the analysis could be extended to sink trees as well, but we do not present it here; the adaptation would be straightforward). 
	
	%We will show in this paragraph that the analysis from \cite{Jens} can be performed using {\em conventional} (slightly updated) analysis. in fact, we show a
	
	Let us now remind our tandem network model with $m$ flows and $n$ servers.  
	
	\begin{itemize}
		\item each flow $i$ is constrained by the arrival curve $\alpha_i$;
		\item the path of flow $i\in \N_m$ is $\pi_i = \langle f_i, \ldots, \ell_i\rangle$; the path of flow 1, the flow of interest, is $\pi_1 = \langle 1, \ldots, n \rangle$. 
		\item each server $j$ offers a service curve $\beta_j$ following one of the three models:  
			\begin{enumerate}
				\item[a)] $\beta_j$ is a pure delay service curve with the interpretation of a transmission delay, where the delay of each bit of data lays in the interval $[m_j, M_j]$: $\beta_j = \delta_{T_j}$, with $T_j = M_j-m_j$ and we will add delay $m_j$ at the end of the analysis;
				\item[b)] $\beta_j$ is a strict service curve ;
				\item[c)] $\beta_j$ is a (min,plus) sub-additive service curve;
			\end{enumerate}
		%We denote $\cD$ the set of servers offering a delay service curve and $\cS$ the set of servers offering a non-delay service curve. Sets $\cD$ and $\cS$ form a partition of $\N_n$;
		\item we denote by  $A_i{(j)}$ the arrival process at server $j$ and $D_i^{(j)}$ is the departure cumulative process from server $j$, 
	\end{itemize}

	The processes above are defined considering they exist: $A_i^{(j)}$ is defined if $j\in\{ f_i, \ldots \ell_i+1\}$ ($A_i^{(\ell_i+1)}$ is the departure process from the system), and $D_i^{(j)}$ is defined if $j\in\{ f_i, \ldots \ell_i\}$, and for $j\in\{ f_i, \ldots \ell_i\}$, $A_i^{(j+1)} = D_i^{(j)}$.

	%Assume that flow 1 is the flow of interest and traverses the whole
	%tandem ($f_1 = 1$ and $\ell_1 = n$). 
	
	We denote $i\in j$ or $j\ni i$ if flow $i$ crosses server $j$.
	
	\subsection{PMOO analysis}
	Although the proof is similar to the ones in ~\cite{BGLT2007}, it is slightly gneralized, as taking more models of service curves into account. This is why we rewrite it here. 
	%The PMOO analysis have already been proved in several context (deterministic or stochastic network Calculus), so the proof presented here is not new. We nevertheless present it here, in this slightly generalized context (in particular it takes into account the delay curves.
	
	We perform the analysis backward from server $n$ to server 1. We start by the following lemma that is key for our backward induction. 

	\begin{lemma}
		\label{lem:pm}
		For all server $j$, for all $t_j\geq 0$, there exists $s_j$ such that 
		\begin{enumerate}
			\item $\sum _{i\in j}D_j^{(j)}(t_{j}) = \sum_{i\in j} A_i^{(j)}(s_j)+\beta_j(t_j-s_j)$.
			\item $\forall i\in j$, $D_i^{(j)}(t_j) \geq A_j^{(j)}(s_j)$. 
		\end{enumerate}
	\end{lemma}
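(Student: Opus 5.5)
The plan is a case analysis on the three server models a), b), c) listed above. In each case I exhibit an explicit $s_j$ and check the two items. I read item 1 as the inequality $\sum_{i\in j}D_i^{(j)}(t_j)\geq \sum_{i\in j}A_i^{(j)}(s_j)+\beta_j(t_j-s_j)$; equality cannot hold in general, since only a lower bound on the departure process is known. I read item 2 as $D_i^{(j)}(t_j)\geq A_i^{(j)}(s_j)$ for every $i\in j$. Both readings correct evident index typos in the statement. Two facts are used repeatedly, where $A=\sum_{i\in j}A_i^{(j)}$ and $D=\sum_{i\in j}D_i^{(j)}$:
\begin{itemize}
\item If $D(s)=A(s)$, then $D_i^{(j)}(s)=A_i^{(j)}(s)$ for every flow $i$, because each $D_i^{(j)}\leq A_i^{(j)}$ by causality.
\item Each $D_i^{(j)}$ is non-decreasing.
\end{itemize}
Hence, in every case, item 2 follows as soon as $s_j\leq t_j$ and the server is empty at $s_j$.

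\emph{Case a), transmission delay $\delta_{T_j}$.} Take $s_j=(t_j-T_j)_+$. Under the transmission-delay interpretation every bit leaves within $T_j$, so $D_i^{(j)}(t_j)\geq A_i^{(j)}((t_j-T_j)_+)=A_i^{(j)}(s_j)$ for each flow. This gives item 2 directly, without needing the server to be empty at $s_j$. Summing over $i$ gives item 1, because $\delta_{T_j}(t_j-s_j)=0$ in both sub-cases: if $t_j\geq T_j$ then $t_j-s_j=T_j$, and otherwise $s_j=0$ and $t_j-s_j=t_j<T_j$.

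\emph{Case b), strict service curve.} Take $s_j=start(t_j)$. Item 1 is exactly the consequence of the strict definition recalled earlier, $D(t_j)\geq A(start(t_j))+\beta_j(t_j-start(t_j))$. Item 2 follows from the first fact above, since $D(start(t_j))=A(start(t_j))$ by left-continuity of the processes. The only care needed is this left-continuity argument at the supremum in the definition of $start$.

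\emph{Case c), sub-additive (min,plus) service curve.} Take $s_j=u(t_j)$, defined from $A\conv\beta_j$ as in Section~\ref{sec:sub-add}. Item 1 follows from Lemma~\ref{lem:u}: $D(t_j)\geq A\conv\beta_j(t_j)=A(u(t_j))+\beta_j(t_j-u(t_j))$. Item 2 needs $D(u(t_j))=A(u(t_j))$, i.e.\ Lemma~\ref{lem:start}. This is the main obstacle, because the proof of Lemma~\ref{lem:start} is written with $D=A\conv\beta$, whereas here $D$ is only bounded below by $A\conv\beta_j$.

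I would first rerun the argument of Lemma~\ref{lem:start} with $A\conv\beta_j$ in place of $D$, to show $A\conv\beta_j(u(t))=A(u(t))$. Suppose instead that $A\conv\beta_j(u(t))<A(u(t))$. Then $u(u(t))<u(t)$, and
\[
A\conv\beta_j(t)=A(u(t))+\beta_j(t-u(t))>A(u(u(t)))+\beta_j(u(t)-u(u(t)))+\beta_j(t-u(t)).
\]
By sub-additivity of $\beta_j$ the right-hand side is at least $A(u(u(t)))+\beta_j(t-u(u(t)))$, which is itself at least $A\conv\beta_j(t)$. This gives a strict inequality $A\conv\beta_j(t)>A\conv\beta_j(t)$, a contradiction. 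Then
\[
A(u(t))=A\conv\beta_j(u(t))\leq D(u(t))\leq A(u(t)),
\]
so $D(u(t_j))=A(u(t_j))$. The first fact then splits this equality across flows, and monotonicity of $D_i^{(j)}$ gives item 2.
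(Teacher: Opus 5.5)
Your proof is correct and follows the paper's own argument. It uses the same three-way case split ($s_j=(t_j-T_j)_+$, $s_j=start(t_j)$, $s_j=u(t_j)$), the same derivation of item 2 from emptiness at $s_j$ plus causality and monotonicity (or directly from the per-flow delay bound in case a)), and the same inequality reading of item 1 that the paper actually proves and later uses. Your rerun of Lemma~\ref{lem:start} for $A\conv\beta_j$, followed by the sandwich $A(u(t))=A\conv\beta_j(u(t))\leq D(u(t))\leq A(u(t))$, is exactly what the paper's remark just before Lemma~\ref{lem:start} intends, only spelled out more carefully.
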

	\begin{proof}
		Let $t_j\geq 0$.
		We disinguish the three possibilities for the service curves:
		\begin{itemize}
			\item[a)] If server $j$ is a delay service curve, then for all $i\in j$, $D_i^{(j)}(t_{j}) \geq A_i^{(j)}((t-T_j)_+)$, so  $$\sum_{i\in j} D_i^{(j)}(t_{j}) \geq \sum_{i\in j} A_i^{(j)}((t-T_j)_+) = \sum_{i\in j} A_i^{(j)}((t_j-T_j)_+) + \beta_j(\min(t_j, T_j)).$$
			\item[b)] If server $j$ offers a strict service curve $\beta_j$, then we set $s_j = start_j(t_j)$, the start of the backlogged period of $t_j$. 
			We know that $$\sum_{i\in j} D_i^{(j)}(t_j) \geq \sum_{i\in  j}A_i^{(j)}(s_j) + \beta_j(t_j-s_j),$$
			and that at $s_j$ (start of the backlog period of server $j$, $\sum_{i\in j}A_i^{(j)}(s_j) = \sum_{i\in j}D_i^{(j)}(s_j)$, and since for all $i\in j$, $A_i^{(j)}(s_j) \geq  D_i^{(j)}(s_j)$ (causality of the system), we necessarily have $D_i^{(j)}(s_j) = A_i^{(j)}(s_j) $, so $D_i^{(j)}(t_j)\geq D_i^{(j)}(s_j) = A_i^{(j)}(s_j)$.
			\item[c)] If server $n$ offers a sub-additive service curve, then we set $s_n =u_n(t_n)$ and the case is similar to the previous one: 
			We know that $$\sum_{i\in n} D_i^{(n)}(t_n) \geq \sum_{i\in n}A_i^{(n)}(s_n) + \beta_n(t_n-s_n),$$
			and $D_i^{(j)}(t_j)\geq D_i^{(j)}(s_j) = A_i^{(j)}(s_j)$ for all $i\in j$, since  $\sum_{i\in j}A_i^{(j)}(s_j) = \sum_{i\in j}D_i^{(j)}(s_j)$ (Lemma~\ref{lem:start}).
		\end{itemize}
	\end{proof}
	
	We now set $t_n\geq 0$, and use the previous lemma to define $s_n$. If $s_j$ is defined, than we set $t_{j-1} = s_j$, until $s_1$ is defined, and $t_0 = s_1$ ($t_j$ are then well-defined by a backward induction). 
	We can write for all $j\in\N_n$,
	
	$$\sum_{i\in j} D_i^{(j)}(t_j) \geq  \sum_{i\in j} A_i^{(j)}(t_{j-1}) + \beta_j(t_j-t_{j-1}).$$
	
	Let us sum these $n$ inequalities: 
	$$\sum_{j=1}^n \sum_{i\in j} D_i^{(j)}(t_j) \geq  \sum_{j=1}^n\Big(\sum_{i\in j} A_i^{(j)}(t_{j-1}) + \beta_j(t_j-t_{j-1})\Big).$$
	We can simplify this inequality since $D_i^{(j)} = A_i^{(j-1)}$ for all $i$ and $j\in \{f_i+1, \ldots, \ell_i-1\}$:
	
	 \begin{equation}
	 	\label{eq:pmoo}
	 	\sum_{i=1}^m D_i^{(\ell_i)}(t_{\ell_i}) \geq  \sum_{i=1}^m  A_i^{(f_i)}(t_{f_i-1}) +\sum_{j=1}^n\beta_j(t_j-t_{j-1}).
	 \end{equation} 
	 For all $i\neq 1$, since the system is causal, $D_i ^{(\ell_i)}\leq A_i^{(f_i)}$, so we can bound $$D_i ^{(\ell_i)}(t_{\ell_i}) - A_i^{(f_i)}(t_{f_i-1})\leq A_i ^{(\ell_i)}(t_{\ell_i}) - A_i^{(f_i)}(t_{f_i-1})\leq \alpha_i(t_{\ell_i} - t_{f_i-1}).$$
	 and  integrating in Equation~\eqref{eq:pmoo}, we obtain
	 
	 $$  D_1^{(n)}(t_{n}) \geq    A_1^{(1)}(t_0) +\sum_{j=1}^n\beta_j(t_j-t_{j-1})- \sum_{i=2}^m \alpha_i(t_{\ell_i} - t_{f_i-1}).$$
	 To conclude, it remain to notice that (from Lemma~\ref{lem:pm}) $$D_1^{(n)}(t_n)\geq A_1^{(n)}(t_{n-1}) = D_1^{(n-1)}(t_{n-1})\geq \cdots \geq A_1^{(1)}(t_0), $$
	 so 
	$$  D_1^{(n)}(t_{n}) \geq    A_1^{(1)}(t_0) +\Big(\sum_{j=1}^n\beta_j(t_j-t_{j-1})- \sum_{i=2}^m \alpha_i(t_{\ell_i} - t_{f_i-1})\Big)_+,$$
	and changing the notation using $u_j = t_j-t_{j-1}$ we have just proved the following  theorem.
	\begin{theorem}
		Consider a tandem network, and flow 1 traversing the $n$ servers. Then an end-to-end individual service curve  for flow 1 is 
		$$\beta_{e2e} (t)= \Big(\inf_{u_j\geq 0:~\sum_{j=1}^n u_j = t}\sum_{j=1}^n \beta(u_j) - \sum_{i=2}^m \alpha_i(\sum_{j\ni i} u_j)\Big)_+. $$
	\end{theorem}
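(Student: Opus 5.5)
The plan is to run the backward PMOO induction that the paper has just sketched, and to package it as a (min,plus) inequality $D_1^{(n)}\geq A_1^{(1)}\conv\beta_{e2e}$. The central tool is Lemma~\ref{lem:pm}, applied uniformly to the three server models: pure delay interpreted as transmission delay, strict, and sub-additive (min,plus). It is applied along a chain of times.

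\textbf{Step 1: build the time chain.} Fix $t=t_n\geq 0$. Applying Lemma~\ref{lem:pm} at server $n$ gives $s_n$, and we set $t_{n-1}=s_n$. We iterate down to $t_0=s_1$, so that $0\leq t_0\leq t_1\leq\cdots\leq t_n=t$. For each $j$ this yields two facts:
\begin{itemize}
\item an aggregate inequality $\sum_{i\in j}D_i^{(j)}(t_j)\geq\sum_{i\in j}A_i^{(j)}(t_{j-1})+\beta_j(t_j-t_{j-1})$;
\item a per-flow emptiness property $D_i^{(j)}(t_j)\geq A_i^{(j)}(t_{j-1})$.
\end{itemize}
For the sub-additive case, the emptiness property is exactly Lemma~\ref{lem:start}, applied with $u_j(t_j)$. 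That lemma is valid even when $D\neq A\conv\beta$, because $D(u)\geq A\conv\beta(u)=A(u)$ at a minimizer. I expect this to be the main obstacle. Lemma~\ref{lem:start} was stated for $D=A\conv\beta$, so I will re-verify that the contradiction argument only uses $D(t)\geq A\conv\beta(t)$ and the minimality defining $u(t)$.

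For the delay case, I take $s_j=(t_j-T_j)_+$ and use $\beta_j(\min(t_j,T_j))$. This quantity is $0$ whenever $t_j\leq T_j$ or $t_j-s_j=T_j$, since $\delta_T(T)=0$.

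\textbf{Step 2: sum and telescope.} Summing the aggregate inequalities over $j$, and using $A_i^{(j+1)}=D_i^{(j)}$ for interior servers, leaves only each flow's exit departure and entry arrival. For $i\geq 2$, causality together with the arrival curve gives $D_i^{(\ell_i)}(t_{\ell_i})-A_i^{(f_i)}(t_{f_i-1})\leq\alpha_i(t_{\ell_i}-t_{f_i-1})=\alpha_i\big(\sum_{j\ni i}(t_j-t_{j-1})\big)$. Moving these terms to the right-hand side isolates $D_1^{(n)}(t_n)$.

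\textbf{Step 3: take the positive part and the infimum.} Chaining the per-flow emptiness facts along flow 1 gives $D_1^{(n)}(t_n)\geq A_1^{(1)}(t_0)$, so the bracket can be replaced by its positive part. Setting $u_j=t_j-t_{j-1}$, the bound has the form $A_1^{(1)}(t_0)+\big(\sum_j\beta_j(u_j)-\sum_{i\geq 2}\alpha_i(\sum_{j\ni i}u_j)\big)_+$. This is at least $A_1^{(1)}(t_0)+\beta_{e2e}(t-t_0)$, because $(u_j)$ is admissible in the infimum defining $\beta_{e2e}(t-t_0)$ and $x\mapsto(x)_+$ is monotone. Hence $D_1^{(n)}(t)\geq\inf_{s\leq t}A_1^{(1)}(s)+\beta_{e2e}(t-s)$, which is the claimed service curve (reading $\beta$ in the statement as $\beta_j$).

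Finally, I will add the remark that the minimum delays $m_j$ are added separately to the final delay bound. This is consistent with replacing each delay by $T_j=M_j-m_j$.
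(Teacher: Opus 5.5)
Your proposal is correct and follows the paper's proof essentially step for step. The shared steps are the backward time chain $t_{j-1}=s_j$ from Lemma~\ref{lem:pm}, the telescoping of the summed aggregate inequalities, the arrival-curve bound on each cross flow, and the per-flow emptiness chain along flow~1 that justifies the positive part; your handling of Lemma~\ref{lem:start} when $D\neq A\conv\beta$ (the sandwich $A\geq D\geq A\conv\beta$ at a minimizer) and your explicit passage to the infimum only spell out what the paper leaves to a remark and to the change of variables $u_j=t_j-t_{j-1}$.
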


	\subsection{Linear model: computation of the residual service curve}
	Computing individual service curves of the PMOO analysis is a NP-hard problem in general~\cite{BJT08}. However, whenarrival curves are
	leaky-bucket ($\gamma_{b_i, r_i}$ for flow $i$) and service curves
	rate-latency ($\beta_{R_j, L_j}$ for component $j$) writing an explicit formula is possible~\cite{BGLT2007}:
	 $\beta_{e2e}$ is a rate-latency service curve with rate $R_{e2e}$ and latency $T_{e2e}$:
	$$R_{e2e} = \min_j (R_{j}-\sum_{i\in j, i\neq 1}r_i)\qquad T_{e2e} = \sum_jT_j + \frac{\sum_i (b_i+ r_i\sum_{j\ni i} T_j )}{R_{e2e}}.$$
	%$$R_{e2e} = (R-r_{3} - r_2\qquad T_{e2e} = \sum_{j=1}^n T_j + \frac{b_2+\sum_{j=i}^n b_{2+j} +(r_{2+j}+r_2)T_j}{R_{e2e}}. $$

	\begin{example}
	
	In the case of interest of \cite{HCS24}, the system is a sequence of $n$  pure delay curves ($R_j = \infty$) followed by a sub-additive constant-rate service curves ($T_j = 0$), and the traffic beyond the flow of interest is one flow (flow 2) crossing the whole system (similar to flow 1) and $n$ flows (flow 3 to $n+1$) crossing each delay / constant rate subsystem. 
	We can compute
	
	$$R_{e2e} = \min_{j}(R_j-r_{j+2}) - r_2\qquad T_{e2e} = \sum_{j=1}^n T_j + \frac{b_2+\sum_{j=i}^n b_{2+j} +(r_{2+j}+r_2)T_j}{R_{e2e}}. $$
	Moreover, the numerical evaluation seems done in a homogeneous context: all servers have the same rate $R$ and pure delays $T$, $r_{2+j} = r_3$, and $b_{2+j} = b_3$ so this simplifies into
	$$R_{e2e} = R-r_2-r_3\qquad T_{e2e} = nT + \frac{b_2+n [b_{3} +  (r_{2}+r_3)T]}{R-r_2-r_3}, $$
	and the end-to-end delay is: $$D = nT + \frac{b_2+n [b_{3} +  (r_{2}+r_3)T]}{R-r_2-r_3} + \frac{b_1}{R-r_2-r_3}.$$
	This is to be compared with the delays obtained with the {\em unconventional analysis}: the end-to-end service curve can be obtained similarly, as there is no special attention  for the order of operations: 
	
	The residual service curve is 
	\begin{equation}
		\label{eq:jens}
		\beta_{nc}(t): t\mapsto \max \Big(-nb_3-b_2-(r_2-r_3)t, R(t-nT)-nb_3-b_2-(r_2-r_3)t).
	\end{equation}
	 
	We can interpret it as 
	\begin{enumerate}
		\item Compute the individual  service curve for flows 1 and 2 for each delay/rate system : $\beta'(t) = \max(-b_3-r_3t, R(t-T)-b_3-r_3t)$.
		\item Compute the end-to-end service curve for flows 1 and  2 (concatenating $n$ times  $\beta'$): $\beta_{1, 2} = \beta'^n = \max (-nb_3-r_3t, R(t-nT)-nb_3-r_3)t)$.
		\item Computing the individual service curve $\beta_{nc} = \beta_{1, 2} - \alpha_2$ leads to the formula of Equation~\eqref{eq:jens}. 
	\end{enumerate}
	
	the residual service curve computed from the service curve $\beta_{R, nT}$ (concatenation of the overall service curve) and the cross-flow with arrival curve $\gamma_{b_2+nb_3, r_2+r_3}$. We can then apply the formula from Equation~\eqref{eq:delay-nc} to compute the delay bound:
	$$D_{nc} = \max\Big(nT +\frac{b_2+nb_3 + (r_2+r_3)nT+b_1}{R-r_2-r_3}, nT + \tau_1 +\frac{b_2+nb_3 +(r_2+rR_3)nT}{r'_1}\Big),$$
	where $(r'_1(t-\tau_1)_+)$ is the minimum arrival curve for flow 1. One can observe that the first term of the maximum is the same as $D$, so the delay computed is always pessimistic compared to more conventional analysis. Second, since $r'_1\leq r_1<R-r_2-r_3$ (stability condition), then the second term is likely to be larger then the first.
	
	Using the numbers from \cite{HCS24}, we compare the two bounds: 
	\begin{itemize}
		\item $n$ varies from 1 to 20;
		\item $R = 20Mbps$, $T = 50 ms$;
		\item $r_i = 5Mbps$, $b = 1Mb$, $\tau_1 = 50ms $	;
		\item $r'_ 1 \in\{0.5, 1.25, 2.5, 3.75, 5\}Mbps$. 
	\end{itemize} 
	
	On Figure~\ref{fig:J1} one can verify that the PMOO approach indeed allows to compute tighter end-to-end delay curves, and the  non-conventional analysis worsen as the minimum rate guarantee decreases. 
		\begin{figure}[htbp]
			\centering
			\includegraphics[scale=0.5]{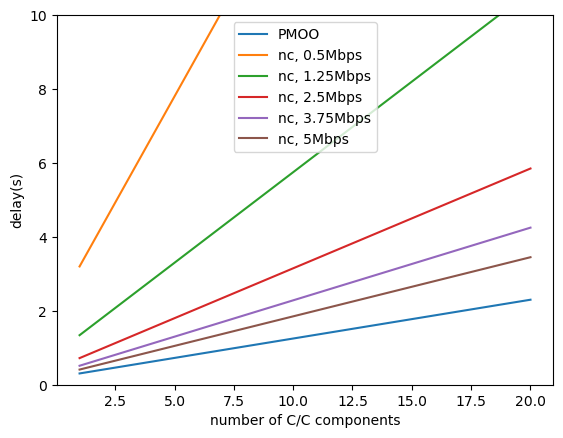}
			\caption{Tandem analysis: delay non-conventional analysis vs. improved PMOO analysis.}
				\label{fig:J1}
		\end{figure}
\end{example}
%	
%	First we can see that the infimum cannot be obtained if
%	$u_j < \delta_j + T_j$. In that case, we have, with $(*)$ being what
%	is inside the positive part,
%	\begin{align*}
%		(*) & = \inf_{\sum u_j = t-s, u_j \geq \delta_j + T_j}\sum_{j}R_j(u_j - T_j-\delta_j) - \sum_{i\neq 1} b_i + r_i(\sum_{j\ni i}u_j)\\
%		& = \inf_{\sum v_j = t-s - \sum_j(T_j + \delta_j), v_j \geq 0}\sum_{j}R_j(v_j) - \sum_{i\neq 1} b_i + r_i(\sum_{j\ni i}(v_j + T_j + \delta_j)\\
%		& = [\inf_{\sum v_j = t-s - \sum_j(T_j + \delta_j)} \sum_{j}(R_j-\sum_{i\in j, i\neq 1}r_i)v_j ]-\sum_i (b_i+ r_i\sum_{j\ni i} (T_j + \delta_j))
%	\end{align*}
%	Let $j_0$ a the server at which $R_j-\sum_{i\in j, i\neq 1}r_i$ is minimized. Then 
%	$$\inf_{\sum v_j = t-s - \sum_j(T_j + \delta_j)} \sum_{j}(R_j-\sum_{i\in j, i\neq 1}r_i)v_j  = (R_{j_0}-\sum_{i\in j_0, i\neq 1}r_i)[t-s - \sum_j(T_j + \delta_j)],$$
%	and
%	
%	$$(*) = (R_{j_0}-\sum_{i\in j_0, i\neq 1}r_i)[t-s - \sum_j(T_j + \delta_j)] - \sum_i (b_i+ r_i\sum_{j\ni i} (T_j + \delta_j))$$
%	
%	Finally, a residual service curve for flow 1 is a rate-latency
%	function with rate
%	$$R^{res} = \min_j (R_{j}-\sum_{i\in j, i\neq 1}r_i)$$
%	and latency
%	$$T^{res} = \sum_j(T_j + \delta_j) + \frac{\sum_i (b_i+ r_i\sum_{j\ni i} (T_j + \delta_j))}{R^{res}}.$$
%	
%	

	\section{Use-case 2: Feedback  and window-flow control}
	\label{sec:wfc}
Feedback loops have been defined for admission control and to prevent buffer overflows  in a network. More precisely, it controls the amount of data in a part of the system. If the maximum amount of data allowed in the system is $W$, then data is prevented from entering the system when the amount of data is already $W$. Data is then buffered at the entrance, and admitted only as data exits the system. This is modeled by a feedback loop. 

\subsection{A simple system with feedback control}
The simpler feedback-loop  system is depicted in Figure~\ref{fig:feedback}: the arrival process $A$ enters the system. The flow crosses the network element offering a service curve $\beta$. There is a feedback control: the arrival process $A$ is limited by the process $D = B+W$, and $A' = A\oplus D$ is the arrival process to the server. 
The precise description is given below, where we consider linear arrival and service curves for the sake of simplicity. 

\begin{itemize}
	\item The arrival process to the system $A$ is $\alpha$-constrained, with $\alpha = \gamma_{r, b}$;%:t\mapsto \sigma+\rho t$; 
	\item There is one server offering a with (min,plus) or strict service curve $\beta = \beta_{R, T}$; %:t\mapsto R(t-T)_+$; 
	\item the feedback control is a fixed window control $W$, ensuring the backlog in the server is never more than $W$ (an additional queue is added at the entrance of the system to prevent losses).
\end{itemize}

 %and simplify the problem to avoid working with complex functions and computations. This is done at the price of degrading the performance bounds of the system (it might be a little more pessimistic than with more complex computation that can be computed with Nancy for example. 

\begin{figure}[htbp]
	\centering
	\begin{tikzpicture}
		\node[rectangle, draw, minimum width=0.5cm](fb) at (0, 0) {$W$};
		\node[rectangle, draw, minimum width=1cm](s) at (0, -2) {$\beta$};
		\node[circle](mp) at (-2, -2)  {\Huge{$\oplus$}};
		\draw[->] (-1.7, -2) -- (s) node [pos=0.5, above] {$A'$};
		\draw[->](-4, -2)-- (-2.3, -2) node [pos=0.8, above] {$A$};
		\draw[->] (s) -- (2, -2)  node [pos=0.3, above] {$B$}-- (2, 0) -- (fb);
		\draw[->] (fb)-- (-2, 0) -- (-2, -1.7)node [pos=0.7, left] {$D$}; 
		\draw[->] (s) -- (3, -2);
	\end{tikzpicture}
	\caption{An elementary feedback loop.}
	\label{fig:feedback}
\end{figure}
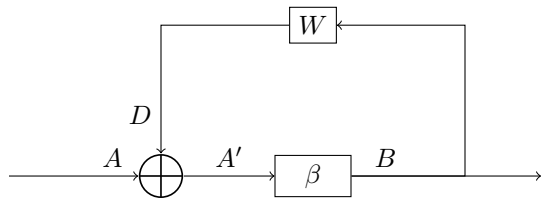

We can write the system of (in)equalities: 
$$\left\{\begin{array}{ll}
	A' &= A\land D\\
	D  & = B+W = B\conv \varphi_W\\
	B &\geq A'\conv \beta,
\end{array}\right.$$
with $\varphi_W:\R_+\to\Rmin; t\mapsto W$. In fact, $\varphi_W$ could be replaced by any function $\varphi$, as soon as $\varphi(0)>0$.  

%$A' = A\oplus D$, with $D = B+W = B\conv \varphi_W$, and $B\geq A'\conv \beta$. 
Putting everything together, we obtain $A'\geq A\land A'\conv \beta \conv \varphi_W$, and the smallest solution to this equation is $A'\geq A\conv (\beta \conv \varphi_W)^*$. 

In our setting, $\beta\conv \varphi_W(t) = W + R(t-T)_+$, and $(\beta\conv \varphi_W)^*\geq\beta'$ with 
\begin{equation}
	\label{eq:throttle}
	\beta_{th}:0\mapsto0;~t\mapsto  W +\min (R, \frac{W}{T} )(t-T)_+.
\end{equation}
%$$\beta':0\mapsto0;~t\mapsto  W +\min (R, \frac{W}{T} )(t-T)_+.$$ 
Here, to simplify the expression, we only give an lower bound of the exact computation, detailed in Figure~\ref{fig:ex}. 

	\begin{figure}[htbp]
	\centering
	\begin{tikzpicture}[scale=0.8]
		\draw[->] (0, 0) -- (7, 0) node[below, pos = 0] {0} node[below, pos = 1] {time};
		\draw[->] (0, 0)-- (0, 5) node[left, pos=0.9] {\rotatebox{90}{data}};
		%\draw[red,  thick](0, 1) -- (2, 1) node[pos=0, left] {$W$} -- (7, 3) ;
		\draw[blue, ultra thick, dashed](0, 2) -- (4, 2) node[pos=0, left] {$2W$} -- (7, 3.2);
		\draw[olive, ultra thick] (0, 0) -- (0, 1)-- (2, 1)  -- (7, 3) ;
		\draw[red](0, 1) -- (2, 1) node[pos=0, left] {$W$} -- (7, 3) ;
		%\draw[cyan, ultra thick, dashed] (0, 0)--(2, 1) ;
		%\draw[cyan, ultra thick] (2, 1) -- (7, 3.5) ;
		
		\draw[red, ultra thick] (0.5, 4.5) -- (1, 4.5) node[pos=1, right]{$\beta\conv \varphi_W$};
		\draw[blue, ultra thick, dashed](0.5, 4) -- (1, 4) node[pos=1, right]{$(\beta\conv \varphi_W)^2$};
		\draw[olive, ultra thick](0.5, 3.5) -- (1, 3.5) node[pos=1, right]{$(\beta\conv \varphi_W)^*$};
		%\draw[cyan, ultra thick](0.5, 3) -- (1, 3)node[pos=1, right]{$\beta'\leq (\beta\conv \varphi_W)^*$};
	\end{tikzpicture}
	\begin{tikzpicture}[scale=0.8]
		\draw[->] (0, 0) -- (7, 0) node[below, pos = 0] {0} node[below, pos = 1] {time};
		\draw[->] (0, 0)-- (0, 5) node[left, pos=0.9] {\rotatebox{90}{data}};
		\draw[red, ultra thick](0, 1) -- (2, 1) node[pos=0, left] {$W$} -- (5, 4) ;
		\draw[blue, ultra thick, dashed](0, 2) -- (4, 2) node[pos=0, left] {$2W$} -- (6, 4);
		\draw[cyan, ultra thick, dashed] (0, 0)--(2, 1) ;
		\draw[cyan, ultra thick] (2, 1) -- (7, 3.5) ;
		\draw[olive, ultra thick] (0, 0) -- (0, 1)-- (2, 1) -- (3, 2) -- (4, 2) -- (5, 3) -- (6, 3) -- (7, 4) ;
		\draw[red, ultra thick] (0.5, 4.5) -- (1, 4.5) node[pos=1, right]{$\beta\conv \varphi_W$};
		\draw[blue, ultra thick, dashed](0.5, 4) -- (1, 4) node[pos=1, right]{$(\beta\conv \varphi_W)^2$};
		\draw[olive, ultra thick](0.5, 3.5) -- (1, 3.5) node[pos=1, right]{$(\beta\conv \varphi_W)^*$};
		\draw[cyan, ultra thick](0.5, 3) -- (1, 3)node[pos=1, right]{$\beta_{th}\leq (\beta\conv \varphi_W)^*$};
	\end{tikzpicture}
	\caption{Example of computation and approximation of $(\beta \conv \varphi_W)^*)$ when $R>\frac{W}{T}$. Left: $RT\leq W$, $\beta\conv \varphi_W$ and $(\beta\conv \varphi_W)^*$ differ only at 0; right:$RT>W$$(\beta\conv \varphi_W)^*$: the $(\beta\conv \varphi_W)^n$ are incomparable, and we obtain an function that can be approximated by a nearly linear curve.}
	\label{fig:ex}
\end{figure}
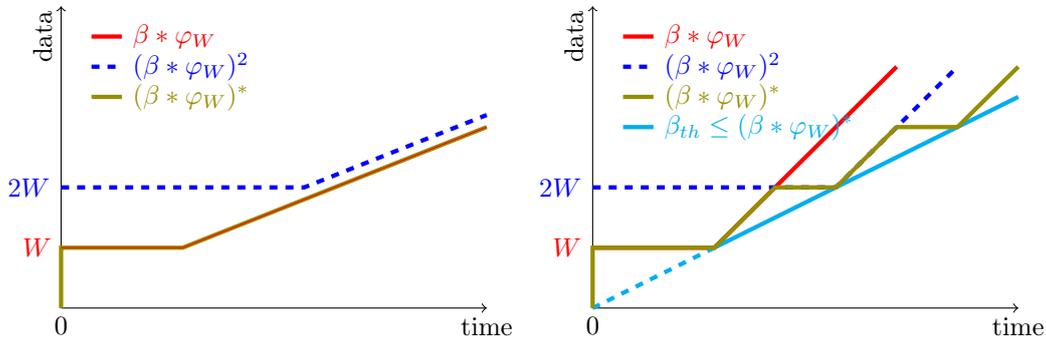

Similar to \cite{ACOR99}, one can interpret $(\beta \conv \varphi_W)^*$ as a service curve (\cite[th. 18]{ACOR99}, the {\em throttle service curve}). One can remark that this service curve is by construction sub-additive.

\paragraph{Interpretation of the feedback loop as a server.} One interesting fact in the above analysis is the obtention of the equation $A'\geq A\conv (\beta \conv \varphi_W)^*$,that can directly be interpreted as an arrival/departure relation, $A$ being the arrival process, and $A'$, the departure process of a server offering (min,plus) the service curve  $(\beta \conv \varphi_W)^*$. The overall system of Figure~\ref{fig:feedback} can then be replaced by the two-server system if Figure~\ref{fig:open}. 

%We can see that $\beta'(0) = 0$, so it fulfills the usual hypothesis for a service curve. This is a priori a (min,plus) service curve. 
%\begin{example}
%	Still assuming that $A$ is $\alpha$-constrained, with de service curve found in the previous example
%\end{example}

%and in this specific case, since $\beta'$ is discontinuous at 0, it might be possible to compute a {\em maximum backlogged period length}, hence worst-case delays without having to resort to service curves with negative values. 

The equivalent system if shown in Figure~\ref{fig:open}.

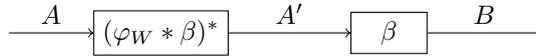
\begin{figure}[htbp]
	\centering
	\begin{tikzpicture}
		\node[rectangle, draw, minimum width=0.5cm](fb) at (-3, -2) {$(\varphi_W*\beta)^*$};
		\node[rectangle, draw, minimum width=1cm](s) at (0, -2) {$\beta$};
		
		\draw[->] (-5, -2) -- (fb) node [pos=0.5, above] {$A$};
		\draw[->](fb)-- (s) node [pos=0.5, above] {$A'$};
		\draw[->] (s) -- (2, -2)  node [pos=0.5, above] {$B$};
		
	\end{tikzpicture}
	\caption{Interpretation of the system of Fig.~\ref{fig:feedback} as an open-system.}
	\label{fig:open}
\end{figure}

Let us compute the throttling service curve:

\begin{example}
	If $R\leq\frac{W}{T}$, then $(\beta \conv \varphi_W)^*(t)  = W+ R(t-T)_+$ for all $t>0$, and $\beta\conv (\beta \conv \varphi_W)^*(t)  = \beta$: the throttle  service curve has no effect on the system. 
	
	If $R>\frac{W}{T}$, then $(\beta \conv \varphi_W)^*(t)  \geq  W+ \frac{W}{T}(t-T)_+ \geq  \frac{W}{T}t$ for all $t>0$, and $\beta\conv (\beta \conv \varphi_W)^*(t)  \geq \frac{W}{T}(t-T)_+$. Figure~\ref{fig:ex} shows the computation and the approximation made when computing $(\beta \conv \varphi_W)^*)$. In this case, the throttle server has en effect on the computation of the end-to-end service curve of the system. The stability condition is even modified: if the arrival process is $\gamma_{r, b}$-constrained, the service rate if the throttle server is $\frac{W}{T}$, the stability condition becomes $r<\frac{W}{T}$. 
\end{example}
	
	Note that if the server offers a strict service curve~\cite{HWS25} show that the throttle has no effect in the system, so there is no additional stability condition. However, when the server is replaced by $k$ servers in tandems, as the end-to-end service curve can only be (min,plus), the throttle server has effect whatever the type of service curve.

We are now interested in the computation of the arrival curve of $A'$, the departure process from the throttle server.

\begin{example}
	\label{ex:output}
	Still assuming that $A$ is $\gamma_{r, b}$-constrained, with the throttle service curve computed Equation~\ref{eq:throttle} $\beta_{th}(t) = \min (W/T t,W+R(t-T)_+)$. We assume stability, $r<W/T$. Then an arrival curve for $A'$ is $\alpha' = \gamma_{r, b}\deconv\beta_{th}$. If $\frac{W}{T}<R$, we get 
	$\alpha'(t) =\sup_{u\geq 0} b+r(t+u) - \frac{W}{T}u =\gamma_{r, b}(t)$ (using the stability hypothesis). If $\frac{W}{T}\geq R$, we get for $t>0$,  $\alpha'(t) = \sup_{u\geq 0} b +r(t+u) - W-R(t-T)_+ = \max(b+rt, b + rt -W+RT) = \gamma_{r, b}$, since $r<R<W/T$. In both cases, the arrival curve in unchanged and the throttling does not create any increase in the arrival curve.  
\end{example}

Alternatively, in~\cite{HCS24}, the strategy is to consider this computation as a first step to computing the (min,plus) service curve of the whole system: $B\geq A\conv\beta\conv(\beta \conv \varphi_W)^*$. 

More preciely, the authors consider the case of $\beta$ being the concatenation of two a constant-rate servers, with respective rates $R_1$ and $R_2$, and $A$ being the superposition of two flows $A = A_H+A_L$, with high ($A_H$) and low ($A_L$) priority.

They obtain the end-to-end service curve for the aggregate flows:
$$
\beta_{e2e} = \lambda_{R_1}\conv \lambda_{R_2} \conv (\lambda_{R_1}\conv \lambda_{R_2} +B)^* = \lambda_{R_1}\conv \lambda_{R_2} = \lambda_{R_1\land R_2}.$$
The last equality is because $\lambda_{R_1}\conv \lambda_{R_2}$ and $(\lambda_{R_1}\conv \lambda_{R_2} +B)^*$ are sub-additive and $\lambda_{R_1}\conv \lambda_{R_2}\leq (\lambda_{R_1}\conv \lambda_{R_2} +B)^*$, and the individual service curve for the low priority flow is with the unconventional analysis:
$$\beta_{e2e, L}^{nc} =\beta_{e2e} - \alpha_H.$$ 

However, for the conventional analysis, the authors first perform the {\em conventional analysis} for the hight priority flow (obtaining  the  service curves for $A^L$ for the two servers, then perform another window-flow control analysis for the lower priority flow with a window size $B-B_{max}^H$: the window is reduced by the maximum occupancy of the high priority flow. It can be noticed that this line of reasoning is  very pessimistic, since a) the amount of data of the high priority flow in the buffer is not constantly $B_{\max}^H$ (this is a worst case that could happen very infrequently), and b) the analysis could be made directly with the aggregation of the two flows in the original system. The servers being constant rate means that the window-control  is in fact  useless in that case. 

Using our slightly improved conventional analysis, analyzing this system is in fact straightforward: one can perform the PMOO analysis of the system, since all service curves are sub-additive, and the end-to-end (min,plus) service curve for the lower-prority flow is 

$$\beta_{e2e, L} = (\beta_{e2e} - \alpha_H)_+ = (\lambda_{\min(R_1, R_2}-\alpha_H)_+.$$

%In addition, if the new objective was to compute the end-to-end delay, one could take advantage of the sub-additivity of all the service curves to perform the PMOO analysis even if all service curves are (min,plus). one first compute the throttle service curve 
%Alternatively, if the delay now matters one could also use our analysis: the open-loop system, with throttle server is represented in Figure \ref{fig:uc2}, and both flows cross the entire system. The PMOO analysis provides the individual service curve $$\beta_{e2e, L} = (\beta_{e2e} - \alpha_H)_+ = (\lambda_{\min(R_1, R_2}-\alpha_H)_+.$$

We obtain a larger service curve (non-negative), hence the delay bound computed with the conventional analysis is more accurate, and does not require a minimal arrival curve.

These results are valid for a single flow, or for several flows that all cross the whole system. In the other paper~\cite{HWS25}, the authors claim they can generalize the unconventional analysis for complex feedback structures. This the topic of the next section.

\section{Tandem network with complex feedback structures}
\label{sec:cplx}
A tandem network with complex feedback structures is a tandem network where several window control are intricate. More precisely, the tandem network is a sequence of $n$  servers with respective service curves $\beta_j$, $j\in\N_n$, and $k$ feedback loops represented by $k$ triples $t_i = (s_i, u_i, W_i)$, where $s_i\leq u_i$ are the first and last servers inside the feedback loop, and $W_i$ is the maximum amount of data in the servers between $s_i$ and $u_i$. We will first derive a general end-to-end service curve for the system, in the case of a single flow. Then we explain how stability issues can arise when we consider multiple flows. We finally give some cases where these issues can be solved. % and where efficient computations can be performed. 
Complex strategies for solving more general cases are out of the scope of this paper. 

\subsection{Equivalent system in a single flow tandem}

In this paragraph, we consider that only one flow circulates inside the network: this flow traverses the $n$ servers. We will derive an equivalent open-loop system, where, following the analysis of the previous section, we add one throttle server per feedback loop. With $k$ throttle servers in total, we will obtain a tandem network with $n+k$ server.

 Let us first discuss the place of the throttle servers. In general a triple $t = (a, u, W)$ is placed at the synchronization point, just before server $s$. However, it might be the case that several triples $t_1 = (s_1, u_1, W_1)$ and $t_2 = (s_2,u_2, W_2)$ have the same {\em starting server} ($s_1 = s_2)$, and synchronize at the same point. In that case, we set the convention that the throttle server for $t_1$ is places before that of $t_2$ if $u_1>u_2$ (without loss of generality, we assume  that  $u_1\neq u_2$, otherwise one could keep only the triple $t_i$ with the smallest $W_i$).

\begin{example}
	Consider the example in Figure~\ref{fig:cpx-ex}, with three servers and three feedback loops. The three triples are $t_1 = (1, 3, W_{1})$, $t_2 = (1, 2, W_{2})$ and $t_3 = (2, 3, W_{3})$. There are two throttle servers $\Psi_1$ and $\Psi_3$ to place before server 1, and $\Psi_3$ is placed  before server 2. From the  convention above, $\Psi_1$ is placed before $\Psi_3$. We then obtain the 6-server tandem as in Figure~\ref{fig:net-cex-open}.
	
	\begin{figure}[htbp]
		\centering
		\begin{tikzpicture}
			\node[rectangle, draw, minimum width=0.5cm](fb1) at (0, 0) {$W_{2}$};
			\node[rectangle, draw, minimum width=0.5cm](fb2) at (3,  -1) {$W_{3}$};
			\node[rectangle, draw, minimum width=0.5cm](fb3) at (3,  -3) {$W_{1}$};
			\node[rectangle, draw, minimum width=1cm](s1) at (0, -2) {$\beta_1$};
			\node[rectangle, draw, minimum width=1cm](s2) at (3, -2) {$\beta_2$};
			\node[rectangle, draw, minimum width=1cm](s3) at (5, -2) {$\beta_3$};
			\node[circle](mp1) at (-1.5, -2)  {{\huge $\oplus$}};
			\node[circle](mp2) at (1.5, -2)  {{\huge $\oplus$}};
			\draw (-1.5, -2) -- (s1) -- (s2) -- (s3) ;

			\draw[->] (s2) -- (4, -2)   -- (4, 0) -- (fb1) -- (-1.5, 0) -- (-1.5,-1.75) ;
			\draw[->] (s3) -- (5.7, -2)   -- (5.7, -1) -- (fb2) -- (1.5, -1) -- (1.5,-1.75) ;
			\draw[->] (s3) -- (5.7, -2)   -- (5.7, -3) -- (fb3) -- (-1.5, -3) -- (-1.5,-2.2) ;
			\draw[red, ultra thick, ->] (-2.5, -2.1) -- (6.5, -2.1) node[pos=0, below] {$A_0$} node[pos=0.2, below] {$A_2$} node[pos=0.4, below] {$A_3$} node[pos=0.5, below] {$A_4$} node[pos=0.7, below] {$A_5$}node[pos=0.95, below] {$A_6$} ;
		\end{tikzpicture}
		\caption{Tandem network with a complex feedback structure.  }
		\label{fig:cpx-ex}
	\end{figure}
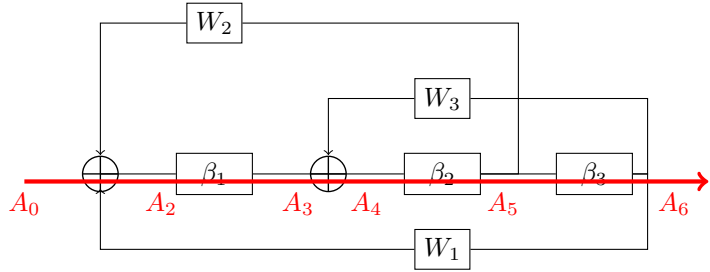
	
	\begin{figure}[htbp]
		\centering
		\begin{tikzpicture}
			\node[rectangle, draw, minimum width=0.5cm](fb1) at (0, 0) {$\Psi_1$};
			\node[rectangle, draw, minimum width=0.5cm](fb2) at (1.5,  0) {$\Psi_2$};
			\node[rectangle, draw, minimum width=0.5cm](fb3) at (4.5,  0) {$\Psi_3$};
			\node[rectangle, draw, minimum width=1cm](s1) at (3, 0) {$\beta_1$};
			\node[rectangle, draw, minimum width=1cm](s2) at (6, 0) {$\beta_2$};
			\node[rectangle, draw, minimum width=1cm](s3) at (7.5, 0) {$\beta_3$};

			\draw[red, ultra thick, ->] (-1, -0.1) -- (8.5, -0.1) node[pos=0, below] {$A_0$} node[pos=0.2, below] {$A_1$}  node[pos=0.35, below] {$A_2$} node[pos=0.53, below] {$A_3$} node[pos=0.65, below] {$A_4$} node[pos=0.83, below] {$A_5$}node[pos=1, below] {$A_6$}  ;
		\end{tikzpicture}
		\caption{Open-loop version of the network of Figure~\ref{fig:cpx-ex}. }
		\label{fig:net-cex-open}
	\end{figure}
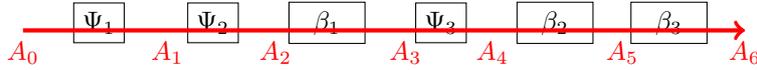
\end{example}

Our goal is now to compute the service curve $\Psi_i$ of the throttle server represented by triple $t_i$, for $i\in\N_k$.

%Assume there are $k$ feed-back loops, represented by the triple $t = (s, u, \varphi)$. In the single feed-back case, we add a throttle server at the synchronization point. Here we so the same, and will obtain a sequence of $n+k$ servers. 

%A throttle server is placed at the synchronization point (if the triple is $t = (s, u, W)$, then the corresponding throttle server is placed just before server $s$). If there are two triples with the triples $ t = (s, u, W)$ and $t' = (s, u', W')$, then the throttle server for $t$ is placed before that for $t'$ is $u>u'$.  
%Globally, for a tandem network with $n$ servers and $k$ feed-back control loops, we will compute an open system with $n+k$ servers, and $k$ service curves for the throttle servers need to be computed. 

Let us assume that the servers $\Psi_1,\ldots, \Psi_k$ are ordered from left to right in the tandem, for simplicity, we identify the server with its service curve ($\Psi_i$ or $\beta_j$). Let us define the arrival processes in system with the feedback loops: 
\begin{itemize}
	\item $A_0$ is the arrival process in the tandem network; 
	\item assume that $A_p$ is defined, $p\in\N$:
	\begin{itemize}
		\item if $A_p$ is at the synchronization of a feedback loop $i$, and there are $\ell$ triples that synchronize at this point,  we define $A_{p+\ell}$ as the process after the synchronization (we will write the equations later): more precisely, if $A_p$ is the arrival process at a synchronization point for triples $t_i, \ldots, t_{i+\ell-1}$, and that $A_{p_1}, \ldots, A_{p_{\ell}}$ are the departure processes after servers $u_i, \ldots, u_{i+\ell-1}$, then the have the relation
		\begin{equation}
			\label{eq:bf1}
			A_{i+\ell} = A_i\wedge\bigwedge_{r=1}^{\ell} A_{p_r}\conv \varphi_{W_{i+r-1}};
		\end{equation}
		%$$A_{i+\ell} = A_i\wedge\bigwedge_{r=1}^{\ell} A_{p_r}\conv \varphi_{W_{i+r-1}}.$$

		\item if $A_p$ is at the entrance of server $\beta_j$, then we set $A_{p+1}$ as the departure process after server $\beta_j$: if $A_p$ is the arrival process of server $j$, then
		\begin{equation}
			\label{eq:bf2}
			A_{p+1}\geq A_p\conv\beta_j;
		\end{equation}
	\end{itemize}
	\item the departure process of the system is $A_{n+k}$;
	\item we denote by $I$ the set of indexes $p$ for which $A_p$ is defined, and we define $\cS_{FB} = (A_p)_{p\in I}$ the set of cumultive processes, that satisfy Equations~\eqref{eq:bf1} and~\eqref{eq:bf2}. 
\end{itemize}

Let us define the arrival processes in system with the throttle servers: 
\begin{itemize}
	\item $A_0$ is the arrival process in the tandem network; 
	\item assume that $A_p$ is defined, $p\in\N$ at the arrival of a server. Then, $A_{p+1}$ is the departure process from that server: 
	\begin{equation}
		\label{eq:ol}
		A_{p+1}\geq A_p\conv\beta_j \text{ or }A_{p+1}\geq A_p\conv\Psi_{\ell},
	\end{equation}
	depending on the case. 
	
	\item $A_{p+n}$ is the departure process of this system;
	\item Similar to $\cS_{FB}$, we define $\cS_{OL} = (B_p)_{p\in I}$ the set of cumulative processes, that satisfy Equation~\eqref{eq:ol}. 
\end{itemize}

%Let us now write the equations for the feed-back control system. Assume that $A_p$ is defined:
%\begin{itemize}
%	%\item if $A_p$ is the arrival process of server $j$, then $A_{p+1}\geq A_p\conv\beta_j$;
%	\item if $A_p$ is the arrival process at a synchronization point for triples $t_i, \ldots, t{i+\ell-1}$, and that $A_{p_1}, \ldots, A_{p_{\ell}}$ are the departure processes after servers $u_i, \ldots, u_{i+\ell-1}$, then the have the relation
%	$$A_{i+\ell} = A_i\wedge\bigwedge_{r=1}^{\ell} A_{p_r}\conv \varphi_{W_{i+r-1}}.$$
%\end{itemize}

\begin{example}
	The processes for the tandem of the previous example are depicted on Figures~\ref{fig:cpx-ex} and~\ref{fig:net-cex-open}. 
There are two  windows synchronizing at $A_0$: $t_1 = (1, 3, W_{1, 3})$ and $t_2 = (1, 2, W_{1, 2})$. The departure processes after servers 2 and 3 are $A_5$ and $A_6$ , so we have the relation $A_2 = A_0\wedge A_5\conv\varphi_{W_{1, 2}}\wedge A_6\conv\varphi_{W_{1, 3}}$. 
\end{example}

%We denote $\cS_{FB} = \{(A_0, A_{n+k})\}$ the set of pairs of cumulative processes such that there exists $(A_0, \ldots, A_{n+k})$ that satisfy the equalities and inequalities of the feed-back control system. In short, these are the admissible arrival and departure processes of the feed-back loop system. Similarly, define $\cS_{OL} = \{(B_0, B_{n+k})\}$ as the set of trajectories so that there exists $(B_0, \ldots, B_{n+k})$ that satisfy the  inequalities of the open-loop  system. 

Our goal is to find service curves of the throttle server such that $\cS_{FB}\subseteq\cS_{OL}$. 
%Consider the $k$ triples $t_i = (s_i, u_i, W_i)$. Let us first define a partial order on the windows:
%$(s_1, t_1, W_1)\prec(s_2, t_2, W_2)$ if and only if $s_1<s_2$. The triples having tha same first component are not ordered.  We assume that $t_1\nsucc t_2\nsucc\cdots \nsucc t_k$. The placement of the throttle servers is then from left to right.  We say $t'\in I(t)$ if $s'<s\leq u'$ .

%Let $A_0$ be the arrival process at the network, and let $A_i$ the departure process from the $i$-th server (throttle or server).  In our original system, we have the following relations: 
%\begin{itemize}
%	\item if server $i$ corresponds to server $j$ in the original network, then $A_i\geq A_{i+1}\conv \beta_j$;
%	\item if server $i$ is the throttle server of $t_j$, then $A_i = A_{i-1}\oplus A_{u_j} \conv \varphi_{W_j}$. 
%\end{itemize}

 %We are now ready to compute the throttle  service curves. 

\newcommand{\bigconv}{\mathlarger{\mathlarger{\mathlarger{\ast}}}{}}

\begin{theorem}
	If the throttle service curves are defined for all $i\in\N_k$, 
	
	$$\Psi_{i} = \Big(\bigconv_{j=s_i}^{u_i } \beta_i \conv\big(\bigconv_{j~|~s_i< s_j\leq u_i}\Psi_{j}\big)\conv\varphi_{W_i}\Big)^*,$$
	then $\bigconv_{j=1}^n\beta_j\conv\bigconv_{\ell = 1}^k \Psi_{\ell}$ is an end-to-end service curve for the system. 
\end{theorem}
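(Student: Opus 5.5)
The plan is to show $\cS_{FB}\subseteq\cS_{OL}$. Concretely, take any trajectory $(A_p)_{p\in I}$ satisfying \eqref{eq:bf1} and \eqref{eq:bf2}. We verify that every process produced by a throttle server satisfies $A_{p+1}\geq A_p\conv\Psi_i$, where $A_p$ is the process just before that throttle. The constraints $A_{p+1}\geq A_p\conv\beta_j$ at the real servers already hold by \eqref{eq:bf2}. Once every inequality of \eqref{eq:ol} holds, chaining them and using associativity of $\conv$ gives $A_{n+k}\geq A_0\conv\bigconv_{j=1}^n\beta_j\conv\bigconv_{\ell=1}^k\Psi_\ell$, which is the claim.

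The throttle inequalities are proved by induction on the loops, from the innermost (rightmost starting server, and for equal starting servers the smaller $u_i$) outward. This order matches the placement convention and the definition of $\Psi_i$, which involves only the $\Psi_j$ with $s_i<s_j\leq u_i$.

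Fix a triple $t_i=(s_i,u_i,W_i)$, and let $A_{p_i}$, $A_{p_i+1}$ be the processes just before and just after the throttle $\Psi_i$ in the open-loop ordering. Let $A_{q_i}$ be the departure process from server $u_i$.
\begin{itemize}
\item From \eqref{eq:bf1}, since the minimum only decreases when more terms are added, $A_{p_i+1}\geq A_{p_i}\wedge A_{q_i}\conv\varphi_{W_i}$. When several triples share the starting point, the throttle output is identified with the cumulative minimum in the chosen order, so this holds for each throttle in turn.
\item Between $A_{p_i+1}$ and $A_{q_i}$ the path crosses the servers $\beta_{s_i},\dots,\beta_{u_i}$ and the throttles $\Psi_j$ whose synchronization point lies strictly inside the loop. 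These are exactly those with $s_i<s_j\leq u_i$, together with the ones sharing $s_i$ but with $u_j<u_i$, which are placed after $\Psi_i$.
\item By the induction hypothesis and \eqref{eq:bf2}, $A_{q_i}\geq A_{p_i+1}\conv\beta_{s_i}\conv\cdots\conv\beta_{u_i}\conv\bigconv\Psi_j=:A_{p_i+1}\conv H_i$.
\item Combining the two bounds gives $A_{p_i+1}\geq A_{p_i}\wedge A_{p_i+1}\conv H_i\conv\varphi_{W_i}$.
\end{itemize}
The standard fixed-point argument used in Section~\ref{sec:wfc} (the smallest solution of $x\geq a\wedge x\conv f$ is $a\conv f^*$, by iterating and using $\varphi_{W_i}(0)>0$ and isotony of $\conv$) then yields $A_{p_i+1}\geq A_{p_i}\conv(H_i\conv\varphi_{W_i})^*=A_{p_i}\conv\Psi_i$.

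The main obstacle is the bookkeeping of nested and overlapping loops. An outer loop's departure process $A_{q_i}$ is reached only after passing through inner throttles, whose validity must already be established as a property of the feedback trajectory. The induction is therefore on the nesting depth, and it relies on the feedback of every inner loop returning inside $[s_i,u_i]$.

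A partially overlapping loop with $s_i<s_j\leq u_i<u_j$ needs care. Its throttle lies inside loop $i$, but its feedback comes from beyond $u_i$, so a well-founded order must be chosen. I would order by $s$ decreasing. A throttle with larger $s_j$ only needs the bound $A_{q_j}\geq A_{p_j+1}\conv H_j$, and $H_j$ involves only throttles with start in $(s_j,u_j]$, which were treated earlier. So the induction on decreasing $s$, with ties broken by increasing $u$, is well founded.

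I would also check that the displayed formula uses $\beta_j$ in the inner convolution (the statement writes $\beta_i$, presumably a typo). If the same-start throttles must be included in $H_i$, $\Psi_i$ only grows pessimistically, which the same argument covers.
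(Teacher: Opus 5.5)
Your plan follows the paper's: process loops from the rightmost synchronization point leftward, bound the departure of server $u_i$ by the post-synchronization process convolved with everything in between, and close with the fixed-point step $x\geq a\wedge x\conv f\Rightarrow x\geq a\conv f^*$. The gap is at synchronization points shared by several loops. You split such a point into a chain of cumulative minima, so the path from the output of $\Psi_i$ to server $u_i$ runs through the same-start throttles placed after it. What you actually establish is therefore $A_{p_i+1}\geq A_{p_i}\conv\Psi'_i$ with $\Psi'_i=(H_i\conv\varphi_{W_i})^*$, where your $H_i$ contains those extra throttles. That is not the $\Psi_i$ of the statement, which requires $s_i<s_j$ strictly. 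Every closure satisfies $\Psi'_j\leq\delta_0$, so $\Psi'_i\leq\Psi_i$. Your closing remark thus has the logic reversed: a guarantee with a smaller curve does not give the guarantee with the larger stated one. As written, your argument proves the theorem only when no two loops share a starting server.

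The missing ingredient is what the paper's proof uses: treat all loops synchronizing at the same point jointly. Let $X$ be the process entering server $s$, $A$ the process arriving at that synchronization point, and $f_j=\beta_s\conv\cdots\conv\beta_{u_j}\conv(\text{throttles with start in }(s,u_j])\conv\varphi_{W_j}$. Then $X\geq A\wedge X\conv\bigwedge_j f_j$, hence $X\geq A\conv(\bigwedge_j f_j)^*=A\conv f_\ell^*\conv\cdots\conv f_k^*$ by $(f\wedge g)^*=f^*\conv g^*$, which follows from commutativity of $\conv$. This identity is exactly what keeps the same-start throttles out of each other's closures. Alternatively, you can keep your sequential construction and invoke $(f\wedge g)^*=(g^*\conv f)^*\conv g^*$. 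It shows that the convolution of your throttles at each synchronization point equals the convolution of the stated ones. Outer loops and the end-to-end curve only involve these products, so an induction on decreasing $s$ transfers your result to the stated curve. With that identity added, the rest of your argument is sound: the well-founded order by decreasing $s$, the treatment of partially overlapping loops, and your reading of the $\beta_i$ typo.
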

\begin{proof}
	We proceed by induction. The induction hypothesis $H(k)$ is the statement of the theorem if there are $k$ feedback loops. The case $H(0)$ is trivial, since the network is open-loop in the absence of feedback loop, then $\cS_{FB}=\cS_{OL}$ (the two systems satisfy the same inequalities). 
	
	Assume that the $H(k')$ is true for all $k'<k$, and assume a network with $k$ feedback loops. Consider $t_\ell,\ldots, t_k$ the triples such that $s = s_k =\cdots = s_{\ell}$ the triples with the largest first component (since triples are ordered, for all $j<\ell$, $s_{j}<s_{\ell}$). The throttle server of $t_j$, $j\in\{\ell, \dots,k\}$ is at server $s+j-1$. 
	
	Let us analyze the window-flow control and assume that the synchronization are at the same point (we skip the processes $A_{s+\ell -k-i}$ for $k\in\{\ell, k-1\}$, and reintroduce them later). 
	At the synchronization, one must have (Equation~\eqref{eq:bf1}) $$A_{k+s}  =  A_{s+\ell - 1} \land A_{s+u_\ell}\conv \varphi_{W_{\ell}}\land\cdots\land A_{s+u_k}\conv \varphi_{W_k}, $$ and we have for $j\in\{\ell, \ldots, k\}$ (Equation~\eqref{eq:bf2}), 
	$A_{s+t_j}\geq  A_{s+k-1}\conv \beta_{s}\conv\cdots\conv \beta_{u_j}$, so 
	
	$$A_{k+s-1}  =  A_{s+\ell - 1} \wedge A_{s+k-1}\conv \bigwedge_{j=\ell}^{k}  \bigconv_{i=s}^{u_j}\beta_{i}\conv\varphi_{W_j}, $$
	and $$A_{k+s} \geq A_{s+\ell - 1}\Big(\bigwedge_{j=\ell}^{k}  \bigconv_{i=s}^{u_j}\beta_{i}\conv\varphi_{W_j}\Big) ^* =  A_{s+\ell - 1}\bigconv_{j=\ell}^{k} \Big(  \bigconv_{i=s}^{u_j}\beta_{i}\conv\varphi_{W_j}\Big) ^*.$$
	We can set the throttle service curve for triple $t_j$, $j\in\{\ell, \ldots, k\}$, $$\Psi_j = \Big(  \bigconv_{i=s}^{u_j}\beta_{i}\conv\varphi_{W_j}\Big) ^*,$$ and 
	$A_{s+j} \geq A_{s+j-1} \conv \Psi_j$, such that $A_{k+s}\geq A_{s+\ell - 1} \conv\Psi_\ell \conv\cdots\conv\Psi_k$, as desired.

	Doing this operations removes at least one window-control, and we replaced each window by a throttle server, we obtain a tandem network with $n+k-\ell$ servers, and $k-\ell$ windows. We can now apply $H(k-\ell)$ to conclude. 
%	Since the triples are ordered, $t_k$ is the largest triple, and the throttle server is server $p = s_k-1+k$. We have for $i\in \{p+1, n\}$, $A_{i-1}\geq A_{i}\conv \beta_{i-k}$ and $A_p\geq A_{p-1}\oplus A_{p+u_k-s_k}\conv \varphi_{W_k}$. Following the single-feedback loop case, we have 
%	$$A_p\geq A_{p-1}\oplus A_{p}\conv \Big(\bigconv_{j=s_k}^{u_k} \beta_j\Big)\conv \varphi_{W_k},$$
%	and 
%	$$A_p\geq A_{p-1}\conv \Big[\Big(\bigconv_{j=s_k}^{u_k} \beta_j\Big)\conv \varphi_{W_k}\Big]^*,$$
%	and we can define the throttle service curve for the $k$-th feed-back loop $$\Psi_k = \Big[\Big(\bigconv_{j=s_k}^{u_k} \beta_j\Big)\conv \varphi_{W_k}\Big]^*.$$
%	Since	$I(t_k)\cap\cT = \emptyset$, this is exactly the service curve given in the theorem. After doing this operation, we reduce the number of feed-back loops by one and can apply the  induction hypothesis $H(k-1)$. Server $p$ is now seen as a first-type server. and has to be accounted fo the windows containing it, that the windows $t_i$ such that $t_k\in I(t_i)$. One can compute: 
\end{proof}

\begin{example}
	Consider the example in Figure~\ref{fig:cpx-ex}, with three servers and three feedback loops. The three triples are $t_1 = (1, 3, W_{1, 3})$, $t_2 = (1, 2, W_{1, 2})$ and $t_3 = (2, 3, W_{2, 3})$, and we have $t_1\prec t_2\prec t_3$. 
	We want to transform this tandem into a 6-server tandem, as in Figure~\ref{fig:net-cex-open}.
	
	\begin{enumerate}
		\item $\Psi_3 = (\beta_2\conv\beta_3\conv\varphi_{W_{3}})^*$;
		\item $\Psi_2 = (\beta_1\conv\beta_2\conv \Psi_3\conv \varphi_{W_{2}})^* = (\beta_1\conv\beta_2\conv (\beta_2\conv\beta_3\conv\varphi_{W_{3}})^*\conv \varphi_{W_{2}})^*$;
		\item $\Psi_1 = (\beta_1\conv\beta_2\conv \beta_3\conv \Psi_3\conv \varphi_{W_{1}})^* = (\beta_1\conv\beta_2\conv \beta_3\conv (\beta_2\conv\beta_3\conv\varphi_{W_{3}})^*\conv \varphi_{W_{1}})^*$. 
	\end{enumerate}
\end{example}

\subsection{The problem with complex structure}

In~\cite{HWS25}, the authors derive equivalent service curves, the same as in our theorem, but without a proof nor an explicit formula, in the case of a single flow analysis. This allows breaking the feedback loops, and considering an open-loop system, that can be analyzed it with conventional analysis.  A key difference of the two formulation is that the authors of~\cite{HWS25} do not use throttle server, and rather derive a new service curve per server (in our exampled, one would have $\beta'_1 = \Psi_1\conv\Psi_2\conv \beta_1$, $\beta'_2 = \Psi_3\conv \beta_2$ and $\beta'_3 = \beta_3$).%(the overall service curve being the convolution of all service curves). 

In the second step of their analysis, the authors of ~\cite{HWS25} use this single flow analysis to analyze the network when it is crossed by multiple flows. In this paragraph, we claim and demonstrate that this open-loop representation is not a safe model for the multi-flow analysis, that can lead to instability issues. 

To illustrate this, let us consider the simplified topology of ~\cite{HWS25} presented in Section~5 and Figures 11 and 12. This network is composed of 5 flows and 6 servers. In particular, there is a feedback-control  including servers 3 and 4, the goal being to control the buffer occupancy of server 4. Flow 5 arrives at server 2 and crosses servers 2 and 3, flow 4 arrives at server 4 (and crosses servers 4 and 5).  We can simplify this network as a network with 2 servers and 2 flows, and a feedback loop around the two servers  as depicted in Figure~\ref{fig:net-cex}. The important fact is that {\bf the control is not made by the entering flows}. 

\begin{figure}[htbp]
	\centering
	\begin{tikzpicture}
		\node[rectangle, draw, minimum width=0.5cm](fb) at (0, 0) {$W$};
		\node[rectangle, draw, minimum width=1cm](s1) at (0, -2) {$\beta_1$};
		\node[rectangle, draw, minimum width=1cm](s2) at (2, -2) {$\beta_2$};
		\node[circle](mp) at (-2, -2)  {{\Large $\oplus$}};
		\draw[->] (mp) -- (s1) ;
		\draw[->](-4, -2)-- (mp) ;
		\draw[->] (s1) -- (s2); 
		\draw[->] (s2) -- (4, -2)   -- (4, 0) -- (fb);
		\draw[->] (fb)-- (-2, 0) -- (mp); 
		\draw[->, red, ultra thick] (-3, -2.6) -- (-3, -2.1) node [pos=0, below] {$A_1$} -- (0.9, -2.1) node [pos=0.5, above] {$A'_1$}node [pos=1, above] {$D_1$} -- (0.9, -2.6);
		\draw[->, blue, ultra thick] (1.3, -2.6) -- (1.3, -2.1) node [pos=0, below] {$A_2$}  -- (4.7, -2.1) node [pos=0.9, above] {$D_2$}-- (4.7, -2.6);
	\end{tikzpicture}
	\caption{Counter-example for the analysis of a feedback control network. }
	\label{fig:net-cex}
\end{figure}
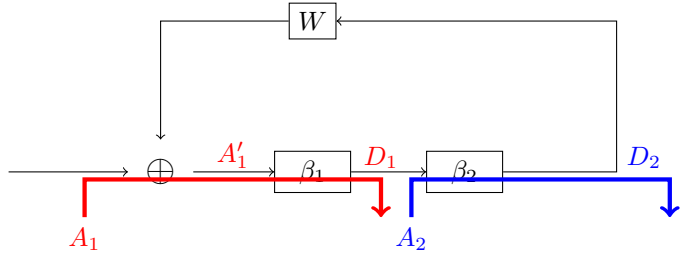

Considering this simplified system, we have the relations 
$$\begin{array}{l}
	A'_1 = A_1\land D_2+W\\
	D_2\geq A_2\conv \beta_2\\
	D_1 \geq A'_1\conv \beta_1.
\end{array}
$$
There is no feedback relation between $A'_1$ and the departure process $D_1$, so that this system can be solved easily (this is a system with three equations and 3 unknown variables, so in the classical algebra, one would not be able to give smallest solution). Worse, (a) one can create instability; (b) the backlog inside each server can be more than $W$, so the feedback control is inefficient. 

\begin{enumerate}
	\item[(a)] {\bf The system can be unstable} 
	Assume that $A_i$ is $\alpha_i:t\mapsto b_i+r_it$-constrained, $i\in\{1, 2\}$, and that $\beta_j = R_j(t-T_j)_+$, $j\in\{1, 2\}$. We can set and have
	$$
	\begin{array}{l}
		A_1 = \alpha_1 \implies  A'_1\leq A_1\leq \alpha_1 \\
		A_2 = \alpha_2 \implies D_2 \leq \alpha_2\implies	A'_1 \leq \alpha_2+W 	\\
		\implies A'_1 \leq \alpha_1\land (\alpha_2+W) 
		\implies A_1-A'_1 \geq \alpha_1-(\alpha_1 \land (\alpha_2+W))\\
		\implies \forall t\geq 0,~(A_1-A'_1)(t) \geq [(\alpha_1 -(\alpha_2+W))]_+(t)  = (r_1-r_2)\left[ \frac{b_1-W-b_2}{r_1-r_2}-t\right]_+.
	\end{array}$$
	If $r_1>r_2$, then $(A_1-A'_1)(t)$ grows to infinity and data can cross the controller at rate $r_2$. This creates an instability of the system, as the backlog of data  before the control-loop grows to infinity, hence delays also grows to infinity, and the stability hypothesis would be the unusual  $r_1\leq r_2$, and even more strict: $r_1\leq \underline{r}_2$ (where $\underline{r}_2$ is the minimal arrival rate of flow 2). 
	
	Although ~\cite{HCS24} considers  minimal arrival curves to guarantee finite delays, providing that an infinite amount of data is transmitted,  this cannot apply here. The stability requires to lower bound the maximal arrival rate, and could lead to unsolvable cases: if $R_2<r_1$, then this imposes that $R_2< r_1\leq r_2$ hence  the instability of the second server. 
	
	\item[(b)]{\bf The system is not controlled.} Assume that the system is stable (data of flow 2 are served faster in average than data of flow 1 arrive). The aim of the feedback controlled is to maintain the buffer occupancy below $W$ in each server. Assume that, without feedback control,  the backlog upper bound for the first server is $B_{\max, 1}>W$ and for the second server is $B_{\max, 2}>W$. First, there is no control on  $A_2$, that can arrive according to its arrival curve $\alpha_2$. Then, one can obtain a buffer of size $$b_{\max} = \sup_{t}(\alpha_2(t)- \beta_2(t) = B_{\max, 2}>W.$$  
	If moreover, and by stability assumption,  the arrival rate in server 2 is higher than in server 1, then, the feedback control is not limiting, and the quantity $D_2+W -A_1$ can be arbitrary large, so that data can enter the system according to $\alpha_1$, and the backlog $B_{\max, 1}$ can be reached. 
	
	%after some time, an arbitrary large amount of data can arrive, for example 
	%Also assume now that $\alpha_2$ is large enough  so that it is not limiting, and the arrival process $A'_1 = A_1$.  If 
	%$b_1+r_1T_2>W$, then the buffer of server 1 can have a buffer occupancy larger than $W$. 
	%Similar, if $b_2+r_2T_2> W$, the backlog of server 2 can be larger than $W$. 

%	\item [(c)] {\bf combination of the two.} Assume that $T_2$ is large enough. It should be possible to build a trajectory that combines all these difficulties: start by sending few data for flow 2: a backlog is created at the synchronization. Then start sending data of flow 2 so that it is saturated (with regards to its arrival curve), at a start of backlogged period. After $T_2$, the backlog created is $b_2+ r_2T_2$, that we can assume $>W$ data of flow 2 is sent at rate $R_2$, during  an interval of time $\frac{b_2+r_2T_2}{R_2-r_2}$, then at rate $r_2$ (this is not the minimum, but a possible behavior of the server. At the synchronization point, the amount of data that arrive at server 1 is data arriving at rate $R_2$ (we assumed we have created a large enough backlog), so data create a backlog $T_1 * R_2 \frac{b_2+r_2T_2}{R_2-r_2}> T_1 $ ... to be continued
\end{enumerate}

In order to avoid such issues in the analysis, we can enforce the following rule for complex feed-back structures: 
\begin{itemize}
	\item [(H)] The flow control is made for exactly the same flows that enter the system. 
\end{itemize}

Solving the issue regarding the example of ~\cite{HWS25} would be performing the control only for flows 1 an 2 (that crosse the whole system), and discard flows 4 and 5 from this analysis. But, if some synchronizations must be discarded, then the analysis is not valid anymore, since equivalent servers are used. 

In the next section, we provide means to perform a multi-flow analysis in simpler cases than the general statement of~\cite{HWS25}. The key idea, is not replacing each server by an equivalent server, but adding the throttle  servers while ensuring  that $(H)$ is satisfied. The interesting fact is since the service curve of the window control is sub-additive, and provided the other servers offer strict service curves, PMOO analysis is possible. 

\subsection{Some cases where the analysis is tractable}
In this section we show some examples where a safe analysisia possible. To avoid too much formalism, we only illustrate this section with examples, and believe that the more general results can be easily derived. 
\subsubsection{Each window controls only one flow}
When each window  controls one flow, the analysis is possible because the throttle service curve can be computed for each individual flow. This results in  a sink-tree open-loop network. 
\begin{example}
	Consider the network of Figure~\ref{fig:oneperwind}. Flow 1 in controlled by window $W_{1}$, flow 2 by window $W_{2}$ and flow 3 by window $W_{3}$. The open-loop version is  a tree, where the sources are thee throttle servers, as shown in Figure~\ref{fig:opw-unfold}. Let $\beta_i^{(j)}$ be the per-flow service curve for server $j$ and flow $i$: we have the equations:
	$$
	\left\{
	\begin{array}{ll}
		A'_1 = A_1\land D_1\conv\varphi_{W_{1}}\\
		D_1\geq A'_1\conv \beta_1^{(1)}\conv\beta_1^{(2)}\conv\beta_1^{(3)},
	\end{array}\right.
	\quad \text{so} \quad A'_1\geq A_1\conv (\beta_1^{(1)}\conv\beta_1^{(2)}\conv\beta_1^{(3)}\conv\varphi_{W_{1}})^*,
	$$
	$$
	\left\{
	\begin{array}{ll}
		A'_2 = A_2\land D_2\conv\varphi_{W_{2}}\\
		D_2\geq A'_2\conv \beta_2^{(1)}\conv\beta_2^{(2)},
	\end{array}\right.
	\quad \text{so} \quad A'_2\geq A_2\conv (\beta_2^{(1)}\conv\beta_2^{(2)}\conv\varphi_{W_{2}})^*,
	$$
	and
	$$ 
	\left\{
	\begin{array}{ll}
		A'_ = A_3\land D_3\conv\varphi_{W_{3}}\\
		D_3\geq A'_3 \conv \beta_3^{(2)}\conv\beta_3^{(3)},
	\end{array}\right.
	\quad \text{so} \quad A'_3\geq A_3\conv (\beta_3^{(2)}\conv\beta_3^{(3)}\conv\varphi_{W_{3}})^*.
	$$
	
	From Example~\ref{ex:output}, in case of stability, there the throttle server does not induce any increase in the arrival curve at its output, so one can compute the per-flow service curves the as usual, using the service policy of interest, with the strict service curve assumption if needed. In this case, the throttle server is only crossed by one flow, it is already a per-flow service curve. In the case of blind multiplexing, the PMOO analysis is possible (with strict service curves). 
		
		\begin{figure}[htbp]
			\centering
			\begin{tikzpicture}
				\node[rectangle, draw, minimum width=0.5cm](fb1) at (0, 0) {$W_{2}$};
				\node[rectangle, draw, minimum width=0.5cm](fb2) at (3,  -1) {$W_{3}$};
				\node[rectangle, draw, minimum width=0.5cm](fb3) at (3,  -3) {$W_{1}$};
				\node[rectangle, draw, minimum width=1cm](s1) at (0, -2) {$\beta_1$};
				\node[rectangle, draw, minimum width=1cm](s2) at (3, -2) {$\beta_2$};
				\node[rectangle, draw, minimum width=1cm](s3) at (5, -2) {$\beta_3$};
				\node[circle](mp1) at (-1.5, -2)  {{\huge $\oplus$}};
				\node[circle](mp2) at (1.5, -2)  {{\huge $\oplus$}};
				\node[circle](mp3) at (-2.5, -2)  {{\huge $\oplus$}};
				\draw (-2.5, -2) -- (s1) -- (s2) -- (s3) ;

				\draw[->] (s2) -- (4, -2)   -- (4, 0) -- (fb1) -- (-1.5, 0) -- (-1.5,-1.75) ;
				\draw[->] (s3) -- (5.7, -2)   -- (5.7, -1) -- (fb2) -- (1.5, -1) -- (1.5,-1.75) ;
				\draw[->] (s3) -- (5.7, -2)   -- (5.7, -3) -- (fb3) -- (-2.5, -3) -- (-2.5,-2.2) ;
				\draw[red, ultra thick, ->] (-3.5, -2.1) -- (-2, -2.1) node[pos=0, below] {$1$} node[pos=0.2, below, black] {$A$}  -- (-1.5, -2.5) -- (-1, -2.1) -- (1, -2.1) node[pos=0.1, below, black] {$A'$}  node[pos=0.9, below, black] {$B$}  -- (1.5, -2.5) -- (2, -2.1) -- (6.5, -2.1)node[pos=0.05, below, black] {$B'$} node[pos=0.5, below, black] {$C$} node[pos=0.9, below, black] {$D$};
				\draw[blue, ultra thick, ->] (1, -1.9) -- (6.5, -1.9) node[pos=0, above  right = -0.05cm ] {$3$}  ;
				\draw[olive, ultra thick, ->] (-2.5, -1.5) --(-2, -1.5)node[pos=0, above] {$2$} --(-2, -1.9) -- (1, -1.8) -- (1.5, -1.3) -- (2, -1.8) -- (4.3, -1.8)   ;
			\end{tikzpicture}
			\caption{Tandem network with a complex feedback structure, where each window controls one flow. The letters $A, A'$... indicate the notation for the cumulative processes (with the index corresponding to the flow).}
			\label{fig:oneperwind}
		\end{figure}
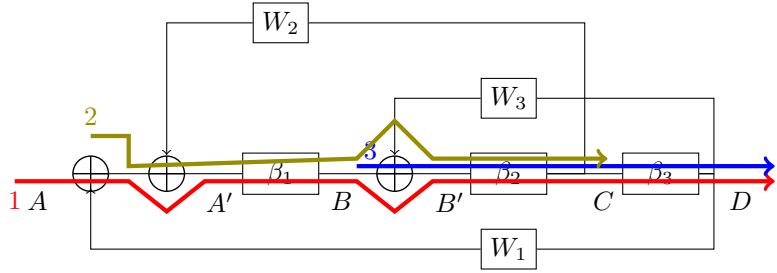
		
		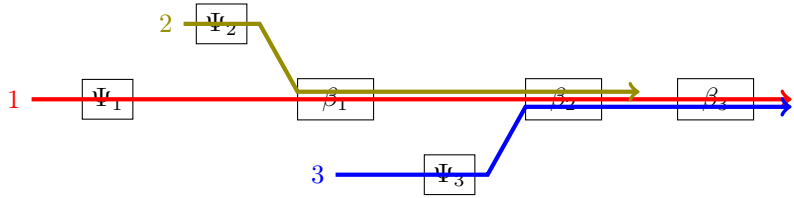
\begin{figure}[htbp]
			\centering
			\begin{tikzpicture}
				\node[rectangle, draw, minimum width=0.5cm](fb1) at (0, 0) {$\Psi_1$};
				\node[rectangle, draw, minimum width=0.5cm](fb2) at (1.5,  1) {$\Psi_2$};
				\node[rectangle, draw, minimum width=0.5cm](fb3) at (4.5,  -1) {$\Psi_3$};
				\node[rectangle, draw, minimum width=1cm](s1) at (3, 0) {$\beta_1$};
				\node[rectangle, draw, minimum width=1cm](s2) at (6, 0) {$\beta_2$};
				\node[rectangle, draw, minimum width=1cm](s3) at (8, 0) {$\beta_3$};

				\draw[red, ultra thick, ->] (-1, 0) -- (9, 0) node[pos=0, left] {1}   ;
				\draw[olive, ultra thick, ->] ((1, 1) -- (2, 1)node[pos=0, left] {2}  -- (2.5, 0.1) -- (7, 0.1);
				\draw[blue, ultra thick, ->] ((3, -1) -- (5, -1)node[pos=0, left] {3}  -- (5.5, -0.1) -- (9, -0.1);
			\end{tikzpicture}
			\caption{Open-loop version of the network of Figure~\ref{fig:cpx-ex}. }
			\label{fig:opw-unfold}
		\end{figure}
		
\end{example}

To summarize, the main steps of the analysis are:

\begin{enumerate}
	\item In case of stability, then the throttle server does not modify the arrival curves;
	\item Compute all the per-flow end-to-end service curves for each flow, as if there were no window-flow control (in the original tandem network);
	\item Compute the throttle service curve for each flow;
	\item Check the stability of the system. If stable, then the performance bound is safe, otherwise, there might be a stability issue. 
\end{enumerate}

\subsubsection*{Nested tandem networks}

The second case is when windows and flows are nested: for each pair of flows $i$ and $j$, either the two paths are disjoint ($f_j>\ell_i$ or $f_i> \ell_j$), or one path is a sub-path of the other ($f_i\leq f_j\leq \ell_j\leq \ell_i$ or $f_j\leq f_i\leq \ell_i\leq \ell_j$), and similar for windows. 
One can modify the previous example with flow 2 crossing server 2 only to obtain a nested tandem, and the throttle servers are crossed by several flows now. Note that window $W_{2}$ controls flows 1, 2 and 3, whereas window $W_{3}$ only controls flows 1 and 3, and window $W_1$ controls flow 1. The system is depicted on Figure~\ref{fig:nested}.

We have the inequalities, starting from the more nested window: first for window 3

\begin{multline*}
	\left.
	\begin{array}{ll}
		B''_1+B''_2+B''_3 = (B'_1 + B'_2+B'_3) \land (C_1 + C_2+C_3)\conv\varphi_{W_{3}}\\
		(C_1+C_2+C_3)\geq (B''_1+B''_2+B''_3) \conv \beta_2\\
	\end{array}
	\right.\\
	\quad\text{so}\quad B''_1+B''_2+B''_3 \geq (B'_1+ B'_2+B'_3)\conv (\beta_2\conv\varphi_{W_{3}})^*,
\end{multline*}
so $\Psi_3 = (\beta_2\conv\varphi_{W_{3}})^*$,
and for window 2 (here we use blind multiplexing and assume servers offer strict service curves), 

$$\left.
\begin{array}{ll}
	B'_1+B'_3 = (B_1+B_3) \land (D_1 +C_3)\conv\varphi_{W_{3}}\\
	(D_1+D_3)\geq (B'_1+B'_3) \conv (\beta_2\conv \Psi_3-\alpha_2)_+\conv\beta_3\\
\end{array}
\right.
\quad\text{so}\quad B'_1+B'_3 \geq (B_1+B_3)\conv (\beta_2\conv \Psi_3-\alpha_2)_+\conv\beta_3\conv\varphi_{W_{3}})^*,
$$
so $\Psi_2 = (B_1+B_3)\conv (\beta_2\conv \Psi_3-\alpha_2)_+\conv\beta_3\conv\varphi_{W_{3}})^*$
Here we use the partial analysis for window 3 and its throttle server: we can used the PMOO analysis to compute a guarantee on the service curve  for the concatenation of servers $\beta_2$ and $\Psi_3$. 
Finally, using the same tools, we can analyze window 1:

\begin{multline*}\left.
\begin{array}{ll}
	A_1 = A'_1\land D_1\conv\varphi_{W_{1}}\\
	D_1\geq A'_1\conv \beta_1 \conv \Psi_2\conv (\Psi_2\conv \beta_2)-\alpha_2)_+ \conv\beta_3 - \alpha_3)_+\\
\end{array}\right.\\
\quad\text{so}\quad A'_1 \geq A_1\conv (\beta_1 \conv \Psi_2\conv (\Psi_2\conv \beta_2)-\alpha_2)_+ \conv\beta_3 - \alpha_3)_+\conv\varphi_{W_{1}})^*
\end{multline*}

 The equivalent open-loop network is depicted in Figure~\ref{fig:nested-open}.
If the system is stable, then performance can be computed as previously. 
The stability, as in the previous case can be checked afterwards (and even after each window analysis, if the stability condition holds for each window considering the service curves computed. Dimensioning the window to ensure stability (in blind multiplexing) can be done in succession for each flow.

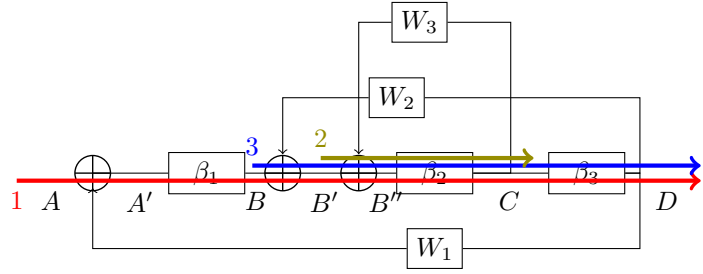
\begin{figure}[htbp]
	\centering
	\begin{tikzpicture}
		\node[rectangle, draw, minimum width=0.5cm](fb1) at (2.8, 0) {$W_{3}$};
		\node[rectangle, draw, minimum width=0.5cm](fb2) at (2.5,  -1) {$W_{2}$};
		\node[rectangle, draw, minimum width=0.5cm](fb3) at (3,  -3) {$W_{1}$};
		\node[rectangle, draw, minimum width=1cm](s1) at (0, -2) {$\beta_1$};
		\node[rectangle, draw, minimum width=1cm](s2) at (3, -2) {$\beta_2$};
		\node[rectangle, draw, minimum width=1cm](s3) at (5, -2) {$\beta_3$};
		\node[circle](mp1) at (-1.5, -2)  {{\huge $\oplus$}};
		\node[circle](mp2) at (1, -2)  {{\huge $\oplus$}};
		\node[circle](mp3) at (2, -2)  {{\huge $\oplus$}};
		\draw (-1.5, -2) -- (s1) -- (s2) -- (s3) ;

		\draw[->] (s2) -- (4, -2)   -- (4, 0) -- (fb1) -- (2, 0) -- (2,-1.75) ;
		\draw[->] (s3) -- (5.7, -2)   -- (5.7, -1) -- (fb2) -- (1, -1) -- (1,-1.75) ;
		\draw[->] (s3) -- (5.7, -2)   -- (5.7, -3) -- (fb3) -- (-1.5, -3) -- (-1.5,-2.2) ;
		\draw[red, ultra thick, ->] (-2.5, -2.1) -- (6.5, -2.1) node[pos=0, below] {$1$} node[pos=0.05, below, black] {$A$}
		node[pos=0.18, below, black] {$A'$} node[pos=0.35, below, black] {$B$} node[pos=0.45, below, black] {$B'$}node[pos=0.54, below, black] {$B''$} node[pos=0.72, below, black] {$C$} node[pos=0.95, below, black] {$D$};
		\draw[blue, ultra thick, ->] (0.6, -1.9) -- (6.5, -1.9) node[pos=0, above] {$3$}  ;
		\draw[olive, ultra thick, ->] (1.5, -1.8) -- (4.3, -1.8) node[pos=0, above] {$2$}  ;
	\end{tikzpicture}
	\caption{Nested tandem network with a nested feedback structure.  }
	\label{fig:nested}
\end{figure}

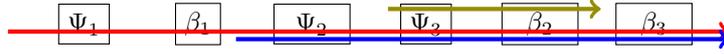
\begin{figure}[htbp]
	\centering
	\begin{tikzpicture}
		\node[rectangle, draw, minimum width=0.5cm](fb1) at (0, 0) {$\Psi_1$};
		\node[rectangle, draw, minimum width=0.5cm](fb2) at (1.5,  0) {$\beta_1$};
		\node[rectangle, draw, minimum width=0.5cm](fb3) at (4.5,  0) {$\Psi_3$};
		\node[rectangle, draw, minimum width=1cm](s1) at (3, 0) {$\Psi_2$};
		\node[rectangle, draw, minimum width=1cm](s2) at (6, 0) {$\beta_2$};
		\node[rectangle, draw, minimum width=1cm](s3) at (7.5, 0) {$\beta_3$};

		\draw[red, ultra thick, ->] (-1, -0.1) -- (8.5, -0.1) ;
		\draw[olive, ultra thick, ->] (4, 0.2) -- (6.8, 0.2) ;
		\draw[blue, ultra thick, ->] (2, -0.2) -- (8.5, -0.2) ;

	\end{tikzpicture}
	\caption{Open-loop version of the network of Figure~\ref{fig:nested}. }
	\label{fig:nested-open}
\end{figure}

\subsection{Some hints for the analysis of more complex feed-back controls}
We now  turn to a case  where the analysis seems untractable: we consider the topology of Figure ~\ref{fig:interleave} with three servers, three flows and two servers. The flows are intricate.  According to $(H)$, the window $W_{2}$ can only control flows 1 an 3, and window $W_{1}$ only flows 1 and 2.

\begin{figure}[htbp]
	\centering
	\begin{tikzpicture}
		\node[rectangle, draw, minimum width=0.5cm](fb1) at (0, 0) {$W_{1}$};
		\node[rectangle, draw, minimum width=0.5cm](fb2) at (3,  -1) {$W_{2}$};
		\node[rectangle, draw, minimum width=1cm](s1) at (0, -2) {$\beta_1$};
		\node[rectangle, draw, minimum width=1cm](s2) at (3, -2) {$\beta_2$};
		\node[rectangle, draw, minimum width=1cm](s3) at (5, -2) {$\beta_3$};
		\node[circle](mp1) at (-1.5, -2)  {{\huge $\oplus$}};
		\node[circle](mp2) at (1.5, -2)  {{\huge $\oplus$}};
		\draw[->] (mp1) -- (s1) ;

		\draw[->] (s2) -- (4, -2)   -- (4, 0) -- (fb1) -- (-1.5, 0) -- (-1.5,-1.8) ;
		\draw[->] (s3) -- (5.7, -2)   -- (5.7, -1) -- (fb2) -- (1.5, -1) -- (1.5,-1.8) ;
		
		\draw[->, red, ultra thick] (-2, -1.4) -- (-2, -1.9)  node[pos=0, left] {2} -- (1, -1.9)-- (1.5, -1.5)--(2, -1.9) -- (4.2, -1.9) -- (4.2, -1.4);
		\draw[->, blue, ultra thick] (1.3, -2.6) -- (1.3, -2.1) node[pos=0, left] {3} -- (6, -2.1) -- (6., -2.6);
		\draw[olive, ultra thick, ->] (-2.5, -2) -- (6.5, -2)node[pos=0, left] {1} node[pos=0.05, below, black] {$A$}
		node[pos=0.17, below, black] {$A'$}   node[pos=0.38, below, black] {$B$} node[pos=0.5, below, black] {$B'$}   node[pos=0.73, below, black] {$C$} node[pos=0.97, below, black] {$D$};
	\end{tikzpicture}
	
	\bigskip
	
	\begin{tikzpicture}
		\node[rectangle, draw, minimum width=1cm](fb1) at (1.5, 0) {$\beta_1$};
		\node[rectangle, draw, minimum width=1cm](fb2) at (4.5,  0) {$\beta_2$};
		\node[rectangle, draw, minimum width=1cm](fb3) at (6,  0) {$\beta_3$};
		\node[rectangle, draw, minimum width=0.5cm](s1) at (0, 0) {$\Psi_ 1$};
		\node[rectangle, draw, minimum width=0.5cm](s2) at (3, 0) {$\Psi_2$};

		\draw[red, ultra thick, ->] (-1, -0.1) -- (7, -0.1)node[pos=0, left] {1}  ;
		\draw[olive, ultra thick, ->] (-1, 0.2) -- (2.5, 0.2) node[pos=0, left] {2} -- (3, 0.7) -- (3.5, 0.2) -- (5.3, 0.2) ;
		\draw[blue, ultra thick, ->] (2.3, -0.2) -- (7, -0.2) node[pos=0, below] {3} ;
	\end{tikzpicture}

	\caption{Interleaved network, with two feedback loops. The first loop controls flow 1 and 2, while the second loop controls flows 1 and 3, and (below) its open-loop version, that is not a tandem network.  }
	\label{fig:interleave}
\end{figure}
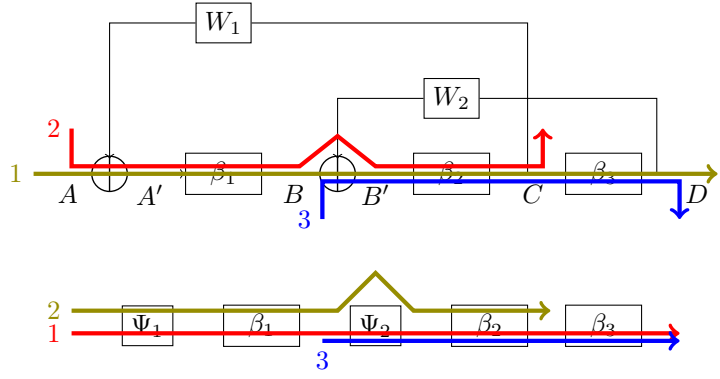

We can write the equalities and inequalities that must be satisfied:

$$
\begin{array}{ll}
	A'_1+A'_2 = (A_1+A_2)\land (C_1+C_2)\conv \varphi_{W_{1}}& B'_1+B'_3 = (B_1+B_3)\land (D_1+D_3)\conv \varphi_{W_{2}}\\
	C_1+C_2 +C_3\geq (B'_1+B_2+B'_3)\conv \beta_2 &  (D_1+D_3)\geq (C_1+C_3)\conv\beta_3\\
	B_1+B_2 \geq (A'_1+A'_2)\conv \beta_1 &
	%A'_1+A'_2 \geq (A_1+A_2)\conv ((\beta_1 \conv\beta_2)\conv \varphi_{W_{1, 2}})^*&
\end{array}
$$

One approach could be to first compute $\beta_2^{(1, 3)}$, the residual service curve of server 2 for flows 1 and 3. But for this, we need to compute the arrival curve for flow 2 at server 2, $\alpha_2^{(2)}$, and then we have, for the analysis of the second window: 
$$B'_1+B'_3\geq (B_1+B_3)\conv\big((\beta_2-\alpha_2^{(2)})_+\conv\beta_3\conv\varphi_{W_{2}}\big)^*.$$

 In return we need to be able to compute the throttle service curve fo $W_{1}$, and we need the residual service curve for server 2 and flows 1 and 2. Continuing the same idea would require first the knowledge of $\beta_2^{(1, 3)}$: we have a fixed point to compute, which is beyond the scope of this paper. 

Alternatively, we can assume fixed priorities: flow 2 has the highest priority, and flow 1 the lower, for both the synchronization (will be the throttle servers) and the server. Since flow 2 has the highest priority,the feedback-loop $W_{1}$ only sees flow 2, and one can compute the throttle server for flow 2: $(\beta_1\conv\beta_2\conv\varphi_{W_{1}})^*$. The arrival curve of flow 2 at server 2 is still $\alpha_2$, so the service curve for flows 1 and 3 is $(\beta_2-\alpha_2)_+$. The throttle  server $$\Psi_{2} = ((\beta_2-\alpha_2)_+\conv \beta_3\conv\varphi_{W_{2}})^*.$$

In this case, we face the problem of analyzing (and finding the ans-to-end service curve for two flows that do not follow the same path. Such an analysis has to been done in NC and is, again,  beyond the scope of this paper. 

If we restrict the window-control system so that the open-loop network remains a tandem network (or a tree, similar to the previous example), the only {\em authorized} feedback controls are represented by the two triples $(i, i, W_{i})$, $i\in\{1, 2, 3\}$ and $(1, 2, W_{4})$. Server 3 cannot be included in a window with other servers. This maximal set of window control is depicted in Figure~\ref{fig:max}, and deriving the open-loop system  is given by the equations, and the throttle servers:

\begin{align*}
	A'_1+A'_2 &= (A_1+A_2)\land (B_1+B_2)\conv \varphi_{W_{2}} \\
	 B_1+B_2 &\geq (A'_1+A'_2)\conv \beta_1\\
	B'_1+B'_2+B'_3 &= (B_1+B_2+B_3)\land (C_1+C_2+C_3)\conv \varphi_{W_{3}}\\ C_1+C_2+C_3&\geq (B'_1+B'_2+B'_3)\conv \beta_2\\
  C'_1+C'_3 &= (C_1+C_3)\land (D_1+D_3)\conv \varphi_{W_{4}} \\
   D_1+D_3&\geq (C'_1+C'_3)\conv \beta_3\\
	A'_1+A'_2& = (A_1+A_2)\land (C_1+C_2)\conv \varphi_{W_{1}},
\end{align*}
and the throttle service curves are:

\begin{align*}
	\Psi_4 &= (\beta_3\conv \varphi_{W_{4}})^*\\
	\Psi_3 &= (\beta_2\conv \varphi_{W_{3}})^*\\
	\Psi_2 &= (\beta_1\conv \varphi_{W_{2}})^*\\
	\Psi_1 &= (\beta_1\conv(\Psi_3\conv\beta_2-\alpha_3)_+\conv \varphi_{W_{1}})^*,
\end{align*}
with the blind multiplexing assumption. Here again, after this computation, stability has to be checked to ensure the validity of the result. 
Note that in this case, we need to compute the per-flow service curve for flow 1 and 2. Due to the tandem structure, this is removing flow 2 from servers $\Psi_3$ and $\beta_2$, hence the formula. Note that the use of the sub-additive closure operator ensures that $\Psi_1$ is sub-additive, and  the whole the open-loop system does not include per-flow service at all. There is a priori no simple way to compute residual service curves for sub-additive functions, but computations can be performed using the Nancy toolbox~\cite{ZS22}. 
\begin{figure}[htbp]
	\centering
	\begin{tikzpicture}
		\node[rectangle, draw, minimum width=0.5cm](fb1) at (0, -1) {$W_{2}$};
		\node[rectangle, draw, minimum width=0.5cm](fb2) at (3,  -1) {$W_{3}$};
		\node[rectangle, draw, minimum width=0.5cm](fb3) at (6, -1){$W_{4}$};
		\node[rectangle, draw, minimum width=0.5cm](fb4) at (1.5,  -3) {$W_{1}$};
		\node[rectangle, draw, minimum width=1cm](s1) at (0, -2) {$\beta_1$};
		\node[rectangle, draw, minimum width=1cm](s2) at (3, -2) {$\beta_2$};
		\node[rectangle, draw, minimum width=1cm](s3) at (6, -2) {$\beta_3$};
		\node[circle](mp1) at (-1.5, -2)  {{\huge $\oplus$}};
		\node[circle](mp2) at (1.5, -2)  {{\huge $\oplus$}};
		\node[circle](mp3) at (4.5, -2)  {{\huge $\oplus$}};
		\draw[->] (mp1) -- (s1) ;

		\draw[->] (1, -2) -- (1, -1) -- (fb1) -- (-1.5, -1) -- ((-1.5, -1.75); 
		\draw[->] (4, -2) -- (4, -1) -- (fb2) -- (1.5, -1) -- ((1.5, -1.75); 
		\draw[->] (7, -2) -- (7, -1) -- (fb3) -- (4.5, -1) -- ((4.5, -1.75); 
		\draw[->] (4, -2) -- (4, -3) -- (fb4) -- (-1.5, -3) -- ((-1.5, -2.25); 
		\draw[->, olive, ultra thick] (-2, -1.4) -- (-2, -1.9)  node[pos=0, left] {2}  -- (4.2, -1.9) -- (4.2, -1.4);
		\draw[->, blue, ultra thick] (1.2, -2.6) -- (1.2, -2.1) node[pos=0, left] {3} -- (7.5, -2.1) -- (7.5, -2.6);
		\draw[red, ultra thick, ->] (-2.5, -2) -- (7.5, -2)node[pos=0, left] {1} node[pos=0.05, below, black] {$A$}
		node[pos=0.15, below, black] {$A'$}   node[pos=0.35, below, black] {$B$} node[pos=0.45, below, black] {$B'$}   node[pos=0.63, below, black] {$C$}node[pos=0.75, below, black] {$C'$} node[pos=0.93, below, black] {$D$};
	\end{tikzpicture}
	
	\medskip
		\begin{tikzpicture}
			\node[rectangle, draw, minimum width=0.5cm](fb1) at (0, 0) {$\Psi_1$};
			\node[rectangle, draw, minimum width=0.5cm](fb2) at (1.5, 0) {$\Psi_2$};
			\node[rectangle, draw, minimum width=0.5cm](fb3) at (4.5,  0) {$\Psi_3$};
			\node[rectangle, draw, minimum width=0.5cm](fb4) at (7.5,  0) {$\Psi_4$};
			\node[rectangle, draw, minimum width=1cm](s1) at (3, 0) {$\beta_1$};
			\node[rectangle, draw, minimum width=1cm](s2) at (6, 0) {$\beta_2$};
			\node[rectangle, draw, minimum width=1cm](s3) at (9, 0) {$\beta_3$};

			\draw[red, ultra thick, ->] (-1, -0.1) -- (10, -0.1) ;
			\draw[olive, ultra thick, ->] (-1, 0.2) -- (7, 0.2) ;
			\draw[blue, ultra thick, ->] (4, -0.2) -- (10, -0.2) ;

		\end{tikzpicture}

	\caption{Maximal set of windows for a possible analysis with blind multiplexing, and, below the equivalent open-loop network is a tandem network.  }
	\label{fig:max}
\end{figure}
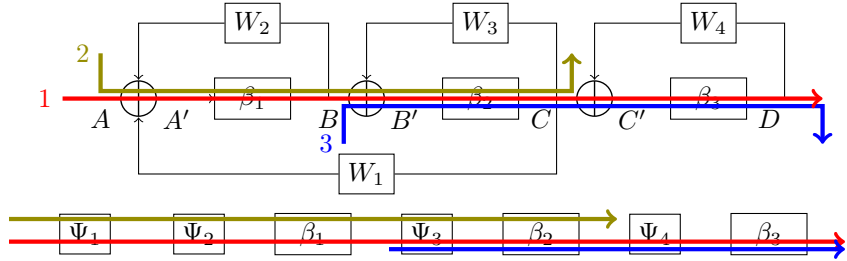

These latter examples allow to outline a more general case for a tandem network with general feedback structures that can be analyze. They must obey to several principles, in addition to (H): 
\begin{enumerate}
	\item {\bf Flow separation:} The flows can be separated in several groups do that the windows do not interfere in-between groups, the extreme case being one flow per window. 
	\item {\bf Control Nesting:} Inside each group of flows, and similar to the last example, for each group of flow, one can define the maximal set of windows, that make the analysis possible. Our intuition is that these controls that should be nested, and choosing a subset of this maximal set makes the analysis possible
\end{enumerate}
Of course, this last remark remains unproved, and might require more attention.

\section {Conclusion}

In this paper, we demonstrate that sub-additive functions are a class for which non-negative, (min,plus), per flow service curves can be derived without additional hypothesis. This has an impact on the study of tandem networks with complex feedback-control structures in a multi-flow setting. We also derive a PMOO analysis of tandem networks with a mix of sub-additive service curves, transmission delay and strict service curves. Among the other implications of this work, the use-cases presented in~\cite{HCS24} have an analysis that does not require service curves with negative values. We also demonstrate that complex feed-back structures can have stability issues, and can be more complex than expected in~\cite{HWS25} to analyze. As a conclusion, the use of service curves with negative values is still to be demonstrated.

\end{document}